\DeclareMathAlphabet{\mathcalligra}{T1}{calligra}{m}{n}
\newcounter{claimcounter}
\renewenvironment{claim}{\refstepcounter{claimcounter}{\medskip\noindent \underline{Claim \theclaimcounter:}}\itshape}{\smallskip}
\crefname{claimcounter}{Claim}{Claims}
\Crefname{algocf}{Algorithm}{Algorithms}
\DeclareRobustCommand{\shortto}{%
  \mathrel{\mathpalette\short@to\relax}%
}
\DeclareRobustCommand{\shortminus}{%
  \mathrel{\mathpalette\short@minus\relax}%
}
\newcommand{\short@to}[2]{%
  \mkern2mu
  \clipbox{{.5\width} 0 0 0}{$\m@th#1\vphantom{+}{\rightarrow}$}%
}
\newcommand{\short@minus}[2]{%
  \mkern2mu
  \clipbox{{.5\width} 0 0 0}{$\m@th#1\vphantom{+}{-}$}%
}
\newcommand{\labeledto}[1]{{{\shortminus}\hspace{-2pt}\raisebox{0.16ex}{$\scriptstyle\{ #1\hspace{-0.28pt}\}$}\hspace{-2.2pt}{\shortto}}}
\newcommand{\scriptlabeledto}[1]{{{\shortminus}\hspace{-1.0pt}\raisebox{0.12ex}{$\scriptscriptstyle\{ #1\hspace{-0.28pt}\}$}\hspace{-1.6pt}{\shortto}}}
\newcommand\move[3]{
\mathchoice
{#1\,\labeledto{#2}\,#3}
{#1\labeledto{#2}#3}
{#1\scriptlabeledto{#2}#3}
{#1\scriptlabeledto{#2}#3}
}
\newcommand{\vh}[1]{\textcolor{blue}{\ifmmode \text{[#1]}\else [VH: #1] \fi}}
\newcommand{\lh}[1]{\textcolor{magenta}{\ifmmode \text{[#1]}\else [LH: #1] \fi}}
\newcommand{\ol}[1]{\textcolor{green!60!black}{\ifmmode \text{[OL: #1]}\else [OL: #1] \fi}}
\newcommand{\yfc}[1]{\textcolor{blue!60!black}{\ifmmode \text{[YFC: #1]}\else [YFC: #1] \fi}}
\newcommand{\blind}[1]{\textcolor{gray}{[blinded for review]}}
\newcommand{\X}{\mathbb{X}}
\newcommand{\E}{\mathcal{E}}
\newcommand{\G}{G}
\newcommand{\St}{V}
\newcommand{\ILoveLukas}[2]{\!\tikz[baseline,anchor=base]{\node[rounded corners=1mm,inner sep=0.6mm,text height=2mm,text depth=0.5mm] {$\vphantom{A_1^2} \mathsf{#1} \! \othersubseteq \! \mathsf{#2}$};}\!}
\newcommand{\graph}{G}
\newcommand{\vertices}{V}
\newcommand{\edges}{E}
\newcommand{\vars}{\mathbb X}
\newcommand{\tightset}{\mathcal T}
\newcommand{\atoml}{\mathit {leftAtom}}
\newcommand{\atomr}{\mathit {rightAtom}}
\newcommand{\atom}{\mathit {atom}}
\newcommand{\atoms}{\mathcal A}
\newcommand{\pos}{\mathit {pos}}
\newcommand{\N}{\mathbb{N}}
\newcommand{\propagate}{\mathtt{propagate}}
\newcommand{\propagatemin}{\mathtt{propagate}_{\mathit{min}}}
\newcommand{\ass}{\nu}
\newcommand{\emptier}{\succ}
\newcommand{\lord}{<_{\mathrm{lex}}}
\newcommand{\sys}{\constr}
\newcommand{\Min}[1]{#1^{\min} }
\newcommand{\lang}{L}
\newcommand{\langof}[1]{\lang(#1)}
\newcommand{\minlangof}[1]{\Min{\lang}(#1)}
\newcommand{\smhole}{\boxtimes}
\newcommand{\aut}[0]{\mathcal{A}}
\newcommand{\autass}{\mathsf{Aut}}
\newcommand{\lnref}[1]{Line~\ref{#1}}
\newcommand{\condref}[1]{(IG\ref{#1})}
\newcommand{\igcondref}[1]{Condition~\condref{#1}}
\newcommand{\sat}{\mathtt{SAT}}
\newcommand{\unsat}{\mathtt{UNSAT}}
\newcommand{\refine}{\mathtt{refine}}
\newcommand{\scc}{\mathtt{SCC}}
\newcommand{\incl}{\mathtt{incl}}
\newcommand{\noodlify}{\mathtt{noodlify}}
\newcommand{\noodle}{\mathsf{N}}
\newcommand{\noodles}{\mathit{Noodles}}
\newcommand{\lass}{\mathsf{Lang}}
\newcommand{\constrlass}{\lass_\constr}
\newcommand{\constrautass}{\autass_\constr}
\newcommand{\nat}{\mathbb{N}}
\newcommand{\product}{\cap_\epsilon}
\newcommand{\reduce}{\mathit{minimise}}
\newcommand{\concat}{\cdot}
\newcommand{\concatxof}[1]{\circ_{#1}}
\newcommand{\concats}{\cdots}
\newcommand{\pow}[1]{\mathcal{P}(#1)}
\newcommand{\tterm}{t}
\newcommand{\sterm}{s}
\newcommand{\varsof}[1]{\mathit{Vars}(#1)}
\newcommand{\pop}{\mathtt{dequeue}}
\newcommand{\push}{\mathtt{enqueue}}
\newcommand{\worklist}[0]{\mathit{Branches}}
\newcommand{\frontier}[0]{W}
\DeclareSymbolFont{stixletters}{LS1}{stix}{m}{it}
\DeclareSymbolFont{stixoperators}{LS1}{stix}{m}{n}
\DeclareMathSymbol{\othersubseteq}{\mathrel}{stixletters}{"D3}
\DeclareMathSymbol{\otherin}{\mathrel}{stixoperators}{"CB}
\DeclareMathSymbol{\othereq}{\mathrel}{stixoperators}{`=}
\newcommand{\graphnode}[2]{ {\mathsf #1} \! \othersubseteq \! {\mathsf #2}}                                                                                                                                                                                                                                                                                                                                                                                                                                                                                                                                                                                                                                                                                                                                                                                                                                                                                                                                                                                                                                                                                                                                                                                                                                                                                                                                                                                                                                                                                                                                                                                                                                                                                                                                                                                                                                                                                                                                                                                                                                                                                                                                                                                                                                                                                                                                                                                                                                                                                                                                                                                                                                                                                                                                                                                                                                                                                                                                                                                                                                                                                                                                                                                                                                                                                                                                                                                                                                                                                                                                                                                                                                                                                                                                                                                                                                                                                                                                                                                                                                                                                                                                                                                                                                                                                                                                                                                                                                                                                                                                                                                                                                                                                                                                                                                                                                                                                                                                                                                                                                                                                                                                                         \renewcommand{\graphnode}[2]{\ILoveLukas{#1}{#2}}
\newcommand\constr{\Phi}
\newcommand\constreq{{\mathcal E}}
\newcommand{\lifo}[1]{\langle #1 \rangle}
\newcommand{\modulo}[0]{\mathbin{\mathrm{mod}}}
\newcommand{\claimqed}[0]{\hfill $\blacksquare$}
\newenvironment{claimproof}[1]{\par\noindent\underline{Proof:}\space#1}{\claimqed\medskip}
\newenvironment{claimproofnoqed}[1]{\par\noindent\underline{Proof:}\space#1}{}
\newcommand{\positions}[0]{P}
\newcommand{\splgraph}[0]{\mathit{SG}}
\newcommand{\varmap}[0]{\mathtt{var}}
\newcommand{\varmapof}[1]{\varmap(#1)}
\newcommand{\ubergraph}[0]{UG}
\newcommand{\sinks}[0]{\mathit{Sinks}}
\newcommand{\sources}[0]{\mathit{Sources}}
\newcommand{\Axyx}[0]{\aut_{xyx}}
\newcommand{\Azu}[0]{\aut_{zu}}
\renewcommand{\prod}{\mathit{Product}}
\newcommand{\prodex}{\Axyx\product\Azu}
\newcommand{\noodler}{\textsc{Noodler}\xspace}
\newcommand{\cvcv}{\textsc{CVC5}\xspace}
\newcommand{\cvciv}{\textsc{CVC4}\xspace}
\newcommand{\ziii}{\textsc{Z3}\xspace}
\newcommand{\ziiistriiire}{\textsc{Z3str3RE}\xspace}
\newcommand{\ziiistriv}{\textsc{Z3str4}\xspace}
\newcommand{\ziiitrau}{\textsc{Z3-Trau}\xspace}
\newcommand{\kepler}[0]{\texttt{Kepler}$_{\mathtt{22}}$\xspace}
\newcommand{\ostrich}[0]{\textsc{OSTRICH}\xspace}
\newcommand{\retro}[0]{\textsc{Retro}\xspace}
\newcommand{\sloth}[0]{\textsc{Sloth}\xspace}
\newcommand{\jsa}[0]{\textsc{JSA}\xspace}
\newcommand{\stranger}[0]{\textsc{Stranger}\xspace}
\newcommand{\trau}[0]{\textsc{Trau}\xspace}
\newcommand{\norn}[0]{\textsc{Norn}\xspace}
\newcommand{\pyex}{\textsc{PyEx}\xspace}
\newcommand{\pyexhard}{\textsc{PyEx-Hard}\xspace}
\newcommand{\kaluza}{\textsc{Kaluza}\xspace}
\newcommand{\kaluzahard}{\textsc{Kaluza-Hard}\xspace}
\newcommand{\strii}{\textsc{Str~2}\xspace}
\newcommand{\slog}{\textsc{Slog}\xspace}
\newcommand{\dualof}[1]{\mathsf{dual}(#1)}
\newcommand{\splitgraph}{\mathit{SG_\constreq}}
\newcommand{\beginexample}[0]{%
\medskip\noindent\stepcounter{example}\textit{Example~\theexample}. }
\title{
Word Equations in Synergy with Regular Constraints\\ (Technical Report)
}
\author{
  Franti\v{s}ek Blahoudek\inst{1},
  Yu-Fang Chen\inst{2},
  David Chocholat\'{y}\inst{1},\\
  Vojt\v{e}ch Havlena\inst{1},
  Luk\'{a}\v{s} Hol\'{i}k\inst{1},
  Ond\v{r}ej Leng\'{a}l\inst{1}, and
  Juraj S\'{i}\v{c}\inst{1}
}
\authorrunning{F. Blahoudek, Y. Chen, D. Chocholat\'{y}, V. Havlena, L. Hol\'{i}k, O. Leng\'{a}l, J. S\'{i}\v{c}}
\institute{
Faculty of Information Technology,
Brno University of Technology,
Czech Republic
\and
Academia Sinica, Taiwan
}
\begin{document}

\maketitle

\vspace{-6mm}

\begin{abstract}
When eating spaghetti, one should have the sauce and noodles mixed instead of
eating them separately.
We argue that also in string solving, word equations and regular
constraints are better mixed together than approached separately as in most current string solvers.
We propose a fast algorithm, complete for the fragment of chain-free
constraints, in which word equations and regular constraints are tightly
integrated and exchange information, efficiently pruning the cases generated by
each other and limiting possible combinatorial explosion.
The algorithm is based on a novel language-based characterisation of satisfiability of word equations with regular constraints.
We experimentally show that 
our prototype implementation is competitive with the best string solvers and even superior in that
it is the fastest on difficult examples
and has the least number of timeouts.  

\end{abstract}

\vspace{-8.0mm}
\section{Introduction}
\vspace{-2.0mm}

Solving of string constraints (string solving) has gained a significant traction in the last two decades,
drawing motivation from verification of programs that manipulate strings.
String manipulation is indeed ubiquitous, tricky, 
and error-prone.
It has been a source of security vulnerabilities, such as cross-site scripting or SQL injection, 
that have been occupying top spots in the lists of software security
issues~\cite{OWASP13,OWASP17,OWASP21};
moreover, widely used scripting languages like Python and PHP rely heavily on
strings.
Interesting new examples of an intensive use of critical string operations can
also be
found, e.g., in reasoning over configuration files of cloud services
\cite{hadarean_mosca} or smart contracts~\cite{AltBHS22}.
Emergent approaches and tools for string solving are already numerous, for instance 
[6-54].   

A practical solver must handle a wide range of string operations,
ranging from regular constraints and word equations across string length
constraints to complex functions such as ReplaceAll or integer-string
conversions. 
The solvers translate most kinds of constraints to a few types of basic string constraints. 
The base algorithm then determines the architecture of the string solver and is
the component with the largest impact on its efficiency.
The second ingredient of the efficiency are layers of opportunistic heuristics that are effective on established benchmarks. 
Outside the boundaries where the heuristics apply and 
the core algorithm must do a heavy lifting, the efficiency may deteriorate.

The most essential string constraints, word equations and regular constraints, are the primary source of difficulty.   
%
Their combination is PSPACE-complete \cite{plandowski99,jez2016}, decidable by
the algorithm of Makanin~\cite{makanin} and Je\.{z}'s recompression~\cite{jez2016}.
Since it is not known how these general algorithms may be implemented
efficiently, string solvers use incomplete algorithms or work only with
restricted fragments (e.g. straight-line of~\cite{LB16} or chain-free~\cite{LB16,ChainFree}, which cover most of existing
practical benchmarks), but even these are still PSPACE-complete (immediately due
to Boolean combinations of regular constraints) and practically hard.
Most of string solvers use base algorithms that resemble Makanin~\cite{makanin}
or Nielsen's~\cite{nielsen1917} algorithm in which word
equations and regular constraints each generate one level of disjunctive branching, and the two levels multiply. 
%
%
Regular constraints particularly are considered complex and expensive, 
and reasoning with them is sometimes postponed and done only as the last step.

In this work, we propose an algorithm 
in which regular constraints are not avoided but tightly integrated with equations, 
enabling an exchange of information between equations and regular constraints that leads to a mutual pruning of \mbox{generated disjunctive choices.}

For instance, in cases such as $zyx = xxz \land x\in a^* \land y\in a^+b^+ \land z\in b^*$, attempting to eliminate the equation results in an infinite case split (using, e.g., Nielsen's algorithm~\cite{nielsen1917} or the algorithm of~\cite{Norn}) and it indeed leads to failure for all solvers we have tried. 
The regular constraints enforce $\unsat$: since the $y$ on the left contains at least one $b$, 
the $z$ on the right must answer with at least one $b$ ($x$ has only $a$'s).
Then, since the first letter on the left is the $b$ of $z$, 
the first $x$ on the right must be $\epsilon$.
Since $x=\epsilon$, we are left with $zy=z$, but the $a$'s within the $y$ cannot
be matched by the $z$ on the \mbox{right as $z$ has only $b$'s.}


The ability to infer this kind of information from the regular constraints systematically is in
the core of our algorithm.
The algorithm gradually refines the regular constraints to fit the equation,
until an infeasible constraint is generated (with an empty language) or until a
solution is detected.
Detecting the existence of a~solution is based on our novel characterisation of
satisfiability of a string constraint: a~constraint $x_1\ldots x_m =
x_{m+1}\ldots x_n \land \bigwedge_{x\in \vars} x\in \lass(x)$,  where $\lass$ assigns regular languages to variables in~$\vars$, has a~solution if the constraint is \emph{stable}, that is, the languages of the two sides are equal,  $\lass(x_1) \concats \lass(x_m) = \lass(x_{m+1})\concats \lass(x_n)$.
A~refinement of the variable languages is derived from a~special product of the
automata for concatenations of the languages on the left-hand and right-hand
sides of the equation.
For the case with $zyx = xxz$ above, the algorithm  
terminates after 2-refinements 
(as discussed above, inferring that (1) $z\in b^+$ and $x = \epsilon$, 
(2) there is no $a$ on the right to match the $a$'s in $y$ on the left).
The wealth of information in the regular
constraints increases with refinements and prunes branches that would be explored otherwise
if the equation was considered alone.
The algorithm is hence effective \mbox{even for pure equations, as we show experimentally.}


Although our algorithm is complete for $\sat$ formulae, 
in $\unsat$ cases the refinement steps may go on forever.
We prove that it is, however, guaranteed to terminate and hence complete for the \emph{chain-free}
fragment~\cite{ChainFree} (and its subset the
\emph{straight-line} fragment~\cite{LB16,AnthonyReplaceAll2018}), the largest known decidable
fragment of string constraints that combines equations, regular and transducer
constraints, and length constraints.
For this fragment, the equality in the definition of
stability may be replaced by a single inclusion and only one refinement
step is sufficient (the case of single equation generalises to multiple
equations where the inclusions must be chosen according to certain criteria).

We have experimentally shown that on established benchmarks featuring hard
combinations of word equations and regular constraints, our prototype implementation is competitive with
a representative selection of string solvers (CVC5, \ziii, \ziiistriv, \ziiistriiire, \ziiitrau,
\ostrich, \sloth, \retro). 
Besides being generally quite fast, 
it seems to be superior especially on difficult instances and has the smallest number of timeouts.

\vspace{-3.0mm}
\section{Overview}\label{sec:overview}
\vspace{-2.0mm}

We will first give an informal overview of our algorithm on the following example
\vspace{-2mm}
\begin{equation}\label{eq:overview}
  xyx = zu \quad \land \quad ww = xa \quad \land \quad u \in (baba)^*a \quad
  \land \quad z \in a(ba)^* \\[-2mm]
\end{equation}
with variables~$u, w,x,y,z$ over the alphabet~$\Sigma = \{a,b\}$. 

Our algorithm works by iteratively refining/pruning the languages in the regular membership constraints from words that cannot be present in any solution.
We denote the regular constraint for a variable $x$ by $\lass(x)$. 
In the example, we have 
$\lass(u) = (baba)^*a$, $\lass(z) = a(ba)^*$ and, implicitly, $\lass(x)=\lass(y)=\lass(w)=\Sigma^*$.

The equation $xyx=zu$ enforces that any solution, an assignment $\ass$ of
strings to variables, satisfies that the string $s =
\ass(x)\concat\ass(y)\concat\ass(x) = \ass(z)\concat\ass(u)$ belongs to the
intersection of the concatenations of languages on the left and the right-hand
side of the equation, $\lass(x)\concat \lass(y) \concat \lass(x) \cap \lass(z)\concat \lass(u)$, as in \cref{eq:xyxzu} below:

\newcommand{\untervar}[2]{\overset{#1}{\raisebox{0pt}[14pt][0pt]{$\overbrace{\raisebox{0pt}[5pt][0pt]{$#2$}}$}}}
\newcommand{\unterthing}[2]{\overset{#1}{\raisebox{0pt}[14pt][0pt]{$#2$}}}

\begin{wrapfigure}[3]{r}{7cm}
\vspace{-10mm}
\hspace{-3mm}
\begin{minipage}{7.3cm}
\begin{equation}
  s \in \untervar{x}{\Sigma^*} ~ \untervar{y}{\Sigma^*} ~
  \untervar{x}{\Sigma^*}
  ~\unterthing{=}{\cap}~~
  \untervar {z}{a(ba)^*} ~
  \untervar{u}{(baba)^*a}.
\label{eq:xyxzu}
\end{equation}
\end{minipage}
\end{wrapfigure}


We may thus refine the languages of $x$ and $y$ by removing those words that cannot be a part of any string $s$ in the intersection.  
The refinement is implemented over finite automata representation of languages,  
assuming that every $\lass(x_i)$ is represented by the
automaton~$\autass(x_i)$. 
The main steps of the refinement are shown in \cref{fig:overview}.
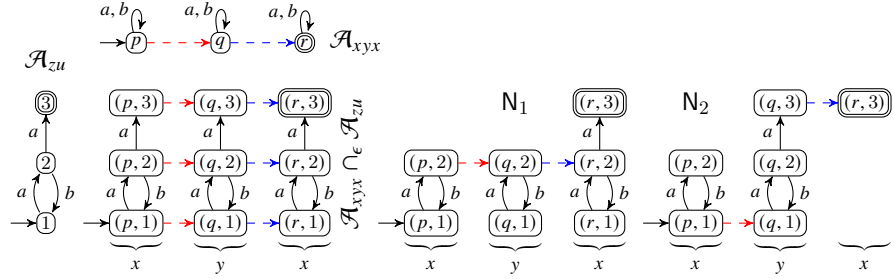
\begin{figure}[t]
\vspace{-2mm}
   \begin{tikzpicture}[->,>=stealth',shorten >=0pt,auto,node distance=10mm,transform shape,scale=0.8]
    \tikzset{prod/.style={inner sep=1pt}}
    \tikzset{prod/.style={draw,rectangle,rounded corners=1mm,inner sep=2pt}}
    \tikzset{hidden/.style={draw=none,fill=white,opacity=0}}

    \node[prod, initial, initial text={}] (p1) {$(p,1)$};
    \node[prod,above of=p1] (p2) {$(p,2)$};
    \node[prod,above of=p2] (p3) {$(p,3)$};

    \node[prod,right of=p1, node distance=14mm] (q1) {$(q,1)$};
    \node[prod,above of=q1] (q2) {$(q,2)$};
    \node[prod,above of=q2] (q3) {$(q,3)$};

    \node[prod,right of=q1, node distance=14mm] (r1) {$(r,1)$};
    \node[prod,above of=r1] (r2) {$(r,2)$};
    \node[prod,above of=r2,accepting] (r3) {$(r,3)$};

    \node[below of=p1,yshift=5mm] (p_below) {$\underset{\displaystyle x}{\underbrace{~~~~~~~~~~~}}$};
    \node[below of=q1,yshift=4.5mm] (q_below) {$\underset{\displaystyle y}{\underbrace{~~~~~~~~~~~}}$};
    \node[below of=r1,yshift=5mm] (r_below) {$\underset{\displaystyle x}{\underbrace{~~~~~~~~~~~}}$};

    \draw (p1) edge[bend left] node{$a$} (p2)
          (p2) edge[bend left] node{$b$} (p1)
          (p2) edge node{$a$} (p3)
          (q1) edge[bend left] node{$a$} (q2)
          (q2) edge[bend left] node{$b$} (q1)
          (q2) edge node{$a$} (q3)
          (r1) edge[bend left] node{$a$} (r2)
          (r2) edge[bend left] node{$b$} (r1)
          (r2) edge node{$a$} (r3);
    \draw[dashed,->,color=red] (p2) -- (q2);
    \draw[dashed,->,color=red]      (p1) -- (q1);
    \draw[dashed,->,color=red]      (p3) -- (q3);
    \draw[dashed,->,color=blue]      (q2) -- (r2);
    \draw[dashed,->,color=blue]      (q3) -- (r3);
    \draw[dashed,->,color=blue]      (q1) -- (r1);

    \node[right of=r2,yshift=-0mm,xshift=-2mm,rotate=90] (prod) {\large{$\prodex$}};


      \node[prod, initial, initial text={}, above of=p3] (p) {$p$};
      \node[prod,above of=q3] (q) {$q$};
      \node[prod,accepting,above of=r3] (r) {$r$};

      \draw (p) edge[loop above] node[left]{$a,b$} (p)
            (q) edge[loop above] node[left]{$a,b$} (q)
            (r) edge[loop above] node[left]{$a,b$} (r);
      \draw[dashed,->,color=red]      (p) -- (q);
      \draw[dashed,->,color=blue]      (q) -- (r);    

      \node[right of=r,yshift=0mm,xshift=-1.5mm] (Axyx) {\large{$\Axyx$}};

  
      \node[prod, initial, initial text={},left of=p1, node distance=15mm] (1) {$1$};
      \node[prod,above of=1] (2) {$2$};
      \node[prod,accepting,above of=2] (3) {$3$};
  
      \draw (1) edge[bend left] node{$a$} (2)
            (2) edge[bend left] node{$b$} (1)
            (2) edge node{$a$} (3); 

      \node[above of=3,yshift=-3mm,xshift=-0mm] (Azu) {\large{$\Azu$}};


    \node[prod, initial, initial text={},right of=r1, node distance=21mm] (n3p1) {$(p,1)$};
    \node[prod,above of=n3p1] (n3p2) {$(p,2)$};
    \node[hidden,above of=n3p2] (n3p3) {$(p,3)$};

    \node[prod,right of=n3p1, node distance=14mm] (n3q1) {$(q,1)$};
    \node[prod,above of=n3q1] (n3q2) {$(q,2)$};
    \node[hidden,above of=n3q2] (n3q3) {$(q,3)$};

    \node[prod,right of=n3q1, node distance=14mm] (n3r1) {$(r,1)$};
    \node[prod,above of=n3r1] (n3r2) {$(r,2)$};
    \node[prod,above of=n3r2,accepting] (n3r3) {$(r,3)$};

    \node[below of=n3p1,yshift=5mm] (n3p_below) {$\underset{\displaystyle x}{\underbrace{~~~~~~~~~~~}}$};
    \node[below of=n3q1,yshift=4.5mm] (n3q_below) {$\underset{\displaystyle y}{\underbrace{~~~~~~~~~~~}}$};
    \node[below of=n3r1,yshift=5mm] (n3r_below) {$\underset{\displaystyle x}{\underbrace{~~~~~~~~~~~}}$};

    \draw (n3p1) edge[bend left] node{$a$} (n3p2)
          (n3p2) edge[bend left] node{$b$} (n3p1)
          (n3q1) edge[bend left] node{$a$} (n3q2)
          (n3q2) edge[bend left] node{$b$} (n3q1)
          (n3r1) edge[bend left] node{$a$} (n3r2)
          (n3r2) edge[bend left] node{$b$} (n3r1)
          (n3r2) edge node{$a$} (n3r3);
    \draw[dashed,->,color=red] (n3p2) -- (n3q2);
    \draw[dashed,->,color=blue]      (n3q2) -- (n3r2);

    \node[above of=n3q2,yshift=0mm,xshift=0mm] (N1) {\large{$\noodle_1$}};


    \node[prod, initial, initial text={},right of=n3r1, node distance=16mm] (n4p1) {$(p,1)$};
    \node[prod,above of=n4p1] (n4p2) {$(p,2)$};
    \node[hidden,above of=n4p2] (n4p3) {$(p,3)$};

    \node[prod,right of=n4p1, node distance=14mm] (n4q1) {$(q,1)$};
    \node[prod,above of=n4q1] (n4q2) {$(q,2)$};
    \node[prod,above of=n4q2] (n4q3) {$(q,3)$};

    \node[hidden,right of=n4q1, node distance=14mm] (n4r1) {$(r,1)$};
    \node[hidden,above of=n4r1] (n4r2) {$(r,2)$};
    \node[prod,above of=n4r2,accepting] (n4r3) {$(r,3)$};

    \node[below of=n4p1,yshift=5mm] (n4p_below) {$\underset{\displaystyle x}{\underbrace{~~~~~~~~~~~}}$};
    \node[below of=n4q1,yshift=4.5mm] (n4q_below) {$\underset{\displaystyle y}{\underbrace{~~~~~~~~~~~}}$};
    \node[below of=n4r1,yshift=5mm] (n4r_below) {$\underset{\displaystyle x}{\underbrace{~~~~~~~~~~~}}$};

    \draw (n4p1) edge[bend left] node{$a$} (n4p2)
          (n4p2) edge[bend left] node{$b$} (n4p1)
          (n4q1) edge[bend left] node{$a$} (n4q2)
          (n4q2) edge[bend left] node{$b$} (n4q1)
          (n4q2) edge node{$a$} (n4q3);
    \draw[dashed,->,color=red]      (n4p1) -- (n4q1);
    \draw[dashed,->,color=blue]      (n4q3) -- (n4r3);

    \node[above of=n4p2,yshift=0mm,xshift=0mm] (N1) {\large{$\noodle_2$}};

  \end{tikzpicture}
\vspace{-2mm}
 \caption{Automata constructions within the refinement.
    Dashed lines represent $\epsilon$.}
\vspace{-5mm}
 \label{fig:overview}
\end{figure}
First, we construct automata for the two sides of the equation:
\begin{itemize}
\vspace{-2mm}
  \item
    $\Axyx$ is obtained by concatenating $\autass(x)$, $\autass(y)$, and $\autass(x)$ again.
    It has $\epsilon$-transitions that delimit the borders of occurrences of~$x$
        and~$y$.\

  \item
    $\Azu$ is obtained by concatenating $\autass(z)$ and $\autass(u)$.
\end{itemize}
\vspace{-2mm}
We then combine $\Axyx$ with $\Azu$ through a synchronous product construction that preserves $\epsilon$-transitions into an automaton $\prodex$. 
Seeing $\epsilon$ as a letter that delimits variable occurrences, $\prodex$
accepts strings $w_x^1\epsilon w_y \epsilon w_x^2$ such that $w_x^1 w_y w_x^2\in
\lass(z)\concat\lass(u)$, $w_x^1\in \lass(x)$, $w_y\in \lass(y)$, and $w_x^2\in \lass(x)$. 

Note that for refining the languages $x,y$ on the left, 
we do not need to see the borders between $z$ and $u$ on the right.
The $\epsilon$-transitions can hence be eliminated from $\Azu$ and it can be minimised. 
In our particular case, this gives much smaller automaton than the one
obtained by connecting~$\autass(z)$ and~$\autass(u)$ (representing $a(ba)^*$ and $(baba)^*a$, respectively).
This is a~significant advantage against algorithms that enumerate alignments of borders of the left and the right-hand side variables/solved forms \cite{solvedform}.

To extract from $\prodex$ the new languages for $x$ and $y$,
we decompose the automata to a disjunction of several automata, which we call \emph{noodles}. 
Each noodle represents a concatenation of languages $L_x^1 \epsilon L_y \epsilon L_x^2$,
and is obtained by choosing one $\epsilon$-transition separating the first
occurrence of $x$ from $y$ (the left column of red $\epsilon$-transitions in
\cref{fig:overview}), one $\epsilon$-transition separating~$y$ from the second
occurrence of $x$ (the right column of blue $\epsilon$-transitions), removing
the other $\epsilon$-transitions, and trimming the automaton.
We have to split the product into noodles because  
some values of $x$ can appear together only with some values of $y$,
and this relation must be preserved after extracting their languages from the product 
(for instance, in $\prodex$ in \cref{fig:overview}, both first occurences of~$x$ and~$y$ can have, among others, values $aa$ and $\epsilon$, but if $x=aa$ then $y$ must be $\epsilon$). 

\cref{fig:overview} shows two noodles, 
$\noodle_1$~and~$\noodle_2$, out of 9~possible noodles from $\prodex$.
We extract the automata for languages $L_x^1$, $L_y$, and $L_x^2$ (their initial
and final states are the states with incoming and outgoing $\epsilon$-transitions in the noodle). 
The refined language for $y$ is then $\lass(y) = L_y$. 
The refined language for $x$ is obtained by unifying  the languages of the first
and the second occurrence of $x$,  $\lass(x) = L_x^1 \cap L_x^2$ (by constructing
a~standard product of the two automata):

\begin{itemize}
    \vspace{-2mm}
  \item  For $\noodle_1$, the refinement is $y \in (ba)^*$ and
    $x \in a$ (computed as $a(ba)^* \cap (ba)^* a$).

  \item  For $\noodle_2$,
    the refinement is $y \in a(ba)^* a$ and
    $x \in \epsilon$ (computed as $(ab)^* \cap \epsilon$).

\end{itemize}
    \vspace{-2mm}
The 7~remaining noodles generated from  $\prodex$ yield $x \in \emptyset$ and are discarded.
Noodles $\noodle_1$ and $\noodle_2$ spawn two disjunctive branches of the computation.

\begin{wrapfigure}[3]{r}{4.7cm}
\vspace{-10mm}
\hspace{-3mm}
\begin{minipage}{5cm}
\begin{equation}
  s \in \untervar{w}{\Sigma^*} ~ \untervar{w}{\Sigma^*}
  ~\unterthing {=}{\cap}~~
  \untervar{x}{a} ~
  \unterthing{a}{a}.
\label{eq:wwxa}
\end{equation}
\end{minipage}
\end{wrapfigure}
For the branch of $\noodle_1$, 
we use the equation $ww = xa$ for the next refinement.
Using the newly derived constraint $x\in a$, we obtain \cref{eq:wwxa} on the right:
%
%
Similarly as in the previous step, the refinement deduces
that $w \in a$.
At this point, the languages on both
sides of all equations match, and so no more refinement is possible:
\vspace{-2mm}
\begin{equation}
  \untervar{x}{a} ~ \untervar{y}{(ba)^*} ~
  \untervar{x}{a}
  ~\unterthing{=}{=}~~
  \untervar{z}{a(ba)^*} ~
  \untervar{u}{(baba)^* a}
  \qquad\text{and}\qquad
  \untervar{w}{a} ~
  \untervar{w}{a}
  ~\unterthing{=}{=}~~
  \untervar{x}{a} ~
  \unterthing{a}{a}.
\vspace{-2mm}
\end{equation}
%

One of the main contributions of this paper, and a cornerstone of our algorithm, is a theorem stating that in this state, when language equality holds for all equations, 
a~solution is guaranteed to exist (see \cref{thm:main}).
We can thus conclude with~$\sat$.

\vspace{-3.0mm}
\section{Preliminaries}
\vspace{-2.0mm}

\paragraph{Sets and strings.}
We use $\nat$ to denote the set of natural numbers (including~0).
We fix a finite \emph{alphabet} $\Sigma$ of \emph{symbols/letters} (usually denoted $a,b,c,\ldots$) for the rest
of the paper. 
A~sequence of symbols $w = a_1 \cdots a_n$ from $\Sigma$ is a \emph{word} or a \emph{string} over $\Sigma$, 
with its \emph{length} $n$ denoted by $|w|$. The set of all words over $\Sigma$
is denoted as~$\Sigma^*$.
$\epsilon \notin \Sigma$ is the \emph{empty word} with $|\epsilon| = 0$.
The \emph{concatenation} of words $u$ and $v$ is denoted $u\concat v$,  $uv$ for
short ($\epsilon$ is a~neutral element of concatenation). A set of words over $\Sigma$ is a \emph{language}, the concatenation of
languages is $L_1\concat L_2 = \{u\concat v\mid u\in L_1 \land v \in L_2\}$,
$L_1 L_2$ for short.
\emph{Bounded iteration} $x^i$, $i\in \nat$, of a word or a language $x$ is defined by $x^0 = \epsilon$ for a~word, $x^0 = \{\epsilon\}$ for a~language, and $x^{i+1} = x^i\concat x$. 
Then $x^* = \bigcup_{i\in\nat}x^i$. 
We often denote regular languages using regular expressions with the standard
notation.

%

\vspace{-2mm}
\paragraph{Automata.}
\newcommand{\aore}{a}
A~\emph{(nondeterministic) finite automaton (NFA)} over $\Sigma$ is a tuple
$\aut= (Q,\Delta,I,F)$
 where $Q$ is a finite set of \emph{states},
$\Delta$ is a set of \emph{transitions} of the form $\move q \aore r$ with $q,r\in
Q$ and $\aore\in\Sigma\cup\{\epsilon\}$, $I\subseteq Q$ is the set of \emph{initial states}, and $F\subseteq Q$
is the set of \emph{final states}.
A~run of~$\aut$ over a~word~$w \in \Sigma^*$ is
a~sequence
 $\move{p_0}{\aore_1}{p_1}
  \move{}{\aore_2}{}
  \ldots
 \move{}{\aore_n}{p_n}$  
where for all $1\leq i \leq n$ it holds that $\aore_i \in \Sigma \cup
\{\epsilon\}$, $\move{p_{i-1}} {\aore_i} {p_i}\in\Delta$, and $w = \aore_1
\concat \aore_2 \concats \aore_n$.
The run is \emph{accepting} if $p_0 \in I$ and $p_n\in F$, and the language
$\langof{\aut}$ of $\aut$ is the set of all words for which $\aut$ has an accepting run.
A~language $L$ is called \emph{regular} if it is accepted by some NFA. 
Two NFAs with the same language are called \emph{equivalent}. 
An automaton without $\epsilon$-transitions is called \emph{$\epsilon$-free}. An automaton with each state belonging to some accepting run is \emph{trimmed}.
%
To concatenate languages of two 
NFAs $\aut= (Q,\Delta,I,F)$ and $\aut'= (Q',\Delta',I',F')$, we construct their \emph{$\epsilon$-concatenation} 
$\aut \concatxof{\epsilon} \aut' = (Q \uplus Q', \Delta \uplus \Delta' \uplus \{ \move{p}{\epsilon}{q} \mid p \in F, q \in I' \}, I, F')$.
To intersect their languages, we construct their \emph{$\epsilon$-preserving product} $\aut\product\aut' = (Q\times Q',\Delta^\times,I\times I',F\times F')$ where 
$\move {(q,q')}{\aore}{(r,r')}\in\Delta^{\times}$ iff either (1) $\aore\in\Sigma$ and 
$\move{q}{\aore}{r}\in\Delta, 
\move{q'}{\aore}{r'}\in\Delta'$, 
or 
(2)~$\aore=\epsilon$ and either 
$q'=r'$, $\move{q}{\epsilon}{r}\in\Delta$ 
or
$q=r$, $\move{q'}{\epsilon}{r'}\in\Delta'$. 

\vspace{-2mm}
\paragraph{String constraints.}
We focus on the most essential string constraints, Boolean combinations of atomic string constraints of two types: word equations and regular constraints.
Let $\vars$ be a set of \emph{string variables} (denoted $u,v,\ldots,z)$, fixed for the rest of the paper.
%
A~\emph{word equation} is an equation of the form $\sterm = \tterm$ where
$\sterm$ and $\tterm$ are (different) \emph{string terms}, i.e., words
from~$\vars^*$.%
\footnote{
Note that terms with letters from~$\Sigma$, sometimes used
in our examples,
can be encoded by replacing each occurrence $o$ of a letter $a$ by a \emph{fresh} variable $x_o$ and a \mbox{regular constraint $x_o\in \{ a \}$.}
}
We do not distinguish between $\sterm = \tterm$ and $\tterm = \sterm$.
A~\emph{regular constraint} is of the form $x\in L$, where $x\in \vars$ and $L$
is a~regular language.
A~\emph{string assignment} is a map $\ass\colon \vars\rightarrow \Sigma^*$.
The assignment is a~solution for a word equation $\sterm=\tterm$ if
$\ass(\sterm) = \ass(\tterm)$ where $\ass(\tterm')$ for a term $\tterm' =
x_1\ldots x_n$ is defined as $\ass(x_1)\concats\ass(x_n)$, and it is a~solution for a regular constraint $x\in L$ if $\ass(x)\in L$. A~solution for a~\emph{Boolean combination} of atomic constraint is then defined as usual.%
%



\vspace{-3.0mm}
\section{Stability of String Constraints}
\vspace{-2.0mm}
The core ingredient of our algorithm, which allows to tightly integrate equations
with regular constraints, 
is the notion of \emph{stability} of a~string constraint, that indicatees. 
%

\vspace{-3.0mm}
\subsection{Stability of Single-Equation Systems}
\vspace{-1.0mm}

We will first discuss stability of a \emph{single-equation system}
$\constr\colon \sterm=\tterm\land \bigwedge_{x\in \vars}x\in \constrlass(x)$ 
where $\constrlass\colon\vars\rightarrow\pow{\Sigma^*}$ is a~\emph{language
assignment}, an assignment of regular languages to variables.
%
We say that a~language assignment $\lass$ \emph{refines} $\constrlass$ if $\lass(x)\subseteq
\constrlass(x)$ for all $x\in\vars$.
If $\lass(x) = \emptyset$ for some~$x \in \vars$, it is \emph{infeasible},
otherwise it is \emph{feasible}.
%
%
For a~term $u = x_1\ldots x_n$,
we define $\lass(u) = \lass(x_1)\concats\lass(x_n)$.
We say that~$\lass$ is \emph{strongly stable for~$\constr$} if $\lass(\sterm) = \lass(\tterm)$. 
The core result of this work is that the existence of a~stable language
assignment for~$\constr$
implies the existence of a~solution, which is formalised below.
%

\begin{restatable}{theorem}{thmMain}\label{thm:main}
$\constr$ has a feasible strongly stable language assignment that refines $\constrlass$, 
iff it has a~solution.
\end{restatable}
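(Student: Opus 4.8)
The forward implication is routine. Given a solution $\ass$ of $\constr$, the singleton assignment $\lass(x) = \{\ass(x)\}$ refines $\constrlass$, since $\ass$ satisfies every regular constraint $x \in \constrlass(x)$ and hence $\{\ass(x)\} \subseteq \constrlass(x)$; it is feasible, as each $\{\ass(x)\}$ is nonempty; and it is strongly stable, because the concatenation of singletons is a singleton, so $\lass(\sterm) = \{\ass(\sterm)\} = \{\ass(\tterm)\} = \lass(\tterm)$ using $\ass(\sterm) = \ass(\tterm)$. All the content therefore lies in the backward implication, which I would attack as follows.

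Write $\sterm = X_1 \concats X_p$ and $\tterm = Y_1 \concats Y_q$ as sequences of variable \emph{occurrences}, and put $L = \lass(\sterm) = \lass(\tterm)$, which is nonempty by feasibility. A solution amounts to a single word $w \in L$ carrying a left factorization $w = s_1 \concats s_p$ with $s_i \in \lass(X_i)$ and a right factorization $w = t_1 \concats t_q$ with $t_j \in \lass(Y_j)$ that are \emph{variable-consistent}, i.e. any two occurrences of the same variable receive equal factors (and then $\ass(x)$ is that common factor). If every occurrence were a distinct variable this would be immediate --- pick any $w \in L$ and read off one left and one right factorization --- so the whole difficulty is repeated variables. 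The plan is a minimal-counterexample/descent argument: take a length-shortest $w \in L$ (which exists since $L \neq \emptyset$) together with a pair of factorizations, and show that any variable-inconsistency can be removed, either reaching a variable-consistent pair or producing a strictly shorter element of $L$, contradicting minimality.

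The hard part is precisely eliminating inconsistency at repeated variables, because strong stability only asserts equality of the two \emph{languages}: every word of $L$ factorizes along both $\sterm$ and $\tterm$, but these existential factorizations need not agree on repeated occurrences. When occurrences $X_i = X_{i'} = x$ carry distinct factors $s_i \neq s_{i'}$, I would use $\lass(\sterm) = \lass(\tterm)$ to slide the variable boundaries --- any word realizable on one side is realizable on the other, which lets me re-cut $w$ --- and push the two factors into a common periodicity via Levi's lemma and Fine--Wilf, so that a single value in $\lass(x)$ can serve both occurrences. The genuine delicacy, and where I expect most of the work, is making this simultaneous and terminating: repairing one repeated variable must not reintroduce inconsistency at another, and the descent measure (shortest $w$ first, then the number of disagreeing occurrence pairs) has to be shown to decrease strictly under the repair while the modified word stays inside $L$.
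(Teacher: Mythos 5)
Your forward direction is exactly the paper's (and is indeed trivial), and your opening move for the backward direction is also the paper's first reduction: picking a shortest $w\in L$ forces every factor in every factorization to be a shortest word of its variable's language (otherwise $w$ could be shortened), which is \cref{lemma:general->fixed} in the paper, and it rigidifies all boundary positions, since repeated occurrences of a variable now carry factors of one fixed length $\ell_x$. But this observation also undercuts your main tool: once lengths are fixed there is no boundary ``sliding'' at all, so Levi's lemma and Fine--Wilf (which govern overlaps of factors of differing lengths and periods) have no purchase here. What remains is a pure letter-propagation problem over the alignment of positions obtained by overlaying the left and right factorizations, which is how the paper formalises it (atoms $(x,i)$ and the equivalence $\sim$ generated by sharing a position).

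The genuine gap is your termination argument, and it is the heart of the theorem. The natural repair---replace the value of $x$ by a factor $v_j$ read off an alternative factorization of the current word, which exists by language equality---fixes the disagreement at the position under consideration, but it may destroy agreements already established at \emph{later} indices of $x$: in the paper's key step (\cref{lemma:core}) the new agreeing set is explicitly $A' = A \cup \{(x,k)\} \setminus \{(x,l) \mid k < l \leq \ell_x\}$, i.e., everything of $x$ beyond the repaired index is deliberately surrendered. Consequently the number of disagreeing occurrence pairs need not decrease under a repair, and your lexicographic measure (minimal $|w|$, then the count of disagreements) does not obviously go down; the paper instead orders sets of atoms by a per-variable lexicographic comparison of index sequences (the well-ordering $\emptier$), under which ``gain index $k$, lose all indices past $k$'' is strict progress. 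A second missing ingredient: the paper does not resolve all inconsistencies against the \emph{original} languages. It fixes one equivalence class of aligned positions at a time (\cref{lemma:oneagrees}), then refines every $\lass(x)$ to the words consistent with the letters chosen so far and proves the refined assignment is \emph{again} stable (point 2 in the proof of \cref{lemma:fixed}); this re-stabilisation is what licenses invoking language equality for the next class, after earlier repairs have pinned down parts of the word. Your sketch appeals only to equality of the original languages, so later repairs have no guarantee that the alternative factorizations they need still exist. As written, the argument does not go through.
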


\begin{proof}[Sketch of $\Rightarrow$, the other direction is trivial]
Let 
$\sterm= y_1 \ldots y_m$ and $\tterm =  y_{m+1} \ldots y_n$.
Note that a solution cannot be found easily by just taking any words
  $w_i\in\lass(x_i)$, for $1\leq i \leq n$, such that $w_1\concats w_m = w_{m+1} \concats w_n$. The reason is that multiple occurrences of the same variable must have the same value.
To construct a solution, 
we first notice that it is enough to use the shortest words in the languages of the variables. 
We can then assume the lengths of the strings valuating each variable fixed (the
  smallest lengths in the languages), which in turn fixes the positions of
  variables' occurrences within the sides of the equation.
We then construct the strings in the solution by selecting letters and propagating them through equalities of opposite positions of the equation sides and also between different occurrences of the same position in the same variable. 
Showing that the process of selecting and propagating letters terminates requires to show that the sequence of constructed partial solutions is decreasing w.r.t. a complex well-founded ordering of partial solutions.
The full proof may be found in \cref{section:proof}.
\qed
\end{proof}

\vspace{-1mm}

Additionally, 
in the special case of \emph{weak} equations---i.e., equations $\sterm = \tterm$
where one of the sides, say $\tterm$, satisfies the condition that all variables
occurring in~$\tterm$ occur in $\sterm = \tterm$ exactly
once---the stability condition in \cref{thm:main} can be weakened to one-sided
language inclusion only:
In case $\tterm$ is the term satisfying the condition,
we say that~$\lass$ is \emph{weakly stable for~$\constr$} if
$\lass(\sterm)\subseteq\lass(\tterm)$.  



\vspace{-1mm}
\begin{theorem}\label{thm:weaklystable}
$\constr$ with a weak equation has a~feasible weakly stable language assignment that refines $\constrlass$
iff~$\constr$ has a~solution.
\end{theorem}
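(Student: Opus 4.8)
The plan is to prove both directions essentially by hand, exploiting the linearity of $\tterm$ to bypass the heavy machinery behind \cref{thm:main}. For the easy direction ($\Leftarrow$), given a solution $\ass$, I would take the singleton assignment $\lass(x) = \{\ass(x)\}$. It is feasible, it refines $\constrlass$ because $\ass(x)\in\constrlass(x)$ for every variable, and it satisfies $\lass(\sterm) = \{\ass(\sterm)\} = \{\ass(\tterm)\} = \lass(\tterm)$, so it is in particular weakly stable. (Alternatively this direction is immediate from \cref{thm:main}, since a strongly stable assignment is also weakly stable.)

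The interesting direction is ($\Rightarrow$). Write $\sterm = x_1 \concats x_m$ and $\tterm = y_1 \concats y_k$. By the weakness assumption the variables $y_1,\ldots,y_k$ are pairwise distinct and none of them occurs in $\sterm$, so the variable sets of the two sides are disjoint. The key observation is that, because every right-hand variable occurs exactly once, the right-hand side imposes \emph{no} consistency constraint: once the word spelled by the left-hand side is fixed, the $y_j$ can absorb an arbitrary factorization of it. Concretely, I would first assign to each variable $x$ occurring in $\sterm$ a single arbitrary word $\ass(x)\in\lass(x)$ (nonempty by feasibility), using the same value for all occurrences of the same variable. Setting $W = \ass(\sterm)$, each factor lies in the corresponding variable's language, so $W\in\lass(\sterm)$, and weak stability yields $W\in\lass(\tterm) = \lass(y_1)\concats\lass(y_k)$. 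Hence $W$ admits a factorization $W = v_1\concats v_k$ with $v_j\in\lass(y_j)$; I set $\ass(y_j) = v_j$, which is well defined precisely because the $y_j$ are distinct, and assign any word of their language to variables occurring in neither side. Then $\ass$ satisfies all regular constraints (as $\lass$ refines $\constrlass$) and $\ass(\sterm) = W = v_1\concats v_k = \ass(\tterm)$, so it is a solution.

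The only place the argument could go wrong is the handling of repeated variables on the left, and I expect the main subtlety to be resisting the natural but doomed attempt to reduce to \cref{thm:main} by shrinking only the languages of $\tterm$ until $\lass(\tterm) = \lass(\sterm)$: such a refinement need not exist, since $\lass(\sterm)$ generally cannot be written as a product $M_1\concats M_k$ with $M_j\subseteq\lass(y_j)$ (for instance, $\{ab,ba\}$ is not a product of two subsets of $\{a,b\}$). A reduction to \cref{thm:main} is still possible, but only by also collapsing the left side to a single chosen word $W$, at which point one has already exhibited the solution; going directly is therefore cleaner. The construction above sidesteps the obstacle by committing to one value per left-hand variable first, which makes the left side consistent for free and leaves the right side enough freedom to match via a single factorization. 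This is exactly where linearity of $\tterm$, rather than full strong stability, is used, and I would conclude by double-checking the disjointness of the two sides' variable sets (so that fixing $\ass$ on the $y_j$ cannot clash with values already chosen on $\sterm$) together with the nonemptiness needed to pick $W$, both of which follow from the weakness assumption and from feasibility of $\lass$.
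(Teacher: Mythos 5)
Your proof is correct: feasibility lets you fix one word per left-hand variable, the inclusion $\lass(\sterm)\subseteq\lass(\tterm)$ then lets you factorize the resulting word $W$ along $y_1,\ldots,y_k$, and linearity of $\tterm$ (the $y_j$ pairwise distinct and disjoint from the variables of $\sterm$) makes that factorization a well-defined assignment that cannot clash with the values already fixed on $\sterm$. The paper never actually spells out a proof of this theorem---it is stated without proof, and the appendix only alludes to ``the algorithm in its proof'' when invoking it inside the proof of \cref{thm:graphwstable}---but the constructive argument it relies on there is precisely yours; moreover, your construction gives the weak min-stability variant mentioned later in the paper verbatim, by choosing shortest words for the left-hand variables.
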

\vspace{-1mm}
Note that weak stability allows multiple occurrences of a variable on the
left-hand side of $\sterm = \tterm$.
Intuitively, the multiple occurrences must have the same value, and having them
on the left-hand side of the inclusion forces their synchronisation.
For instance, for $\constr\colon xx=y \land x\in\{a,b\} \land y\in\{ab\}$, the
inclusion $\lass(xx)\subseteq \lass(y)$ is satisfied by no feasible refinement
$\lass$ of $\lass_\constr$, revealing that $\constr$ has no solution, while
$\lass(xx)\supseteq \lass(y)$ is satisfied already by $\lass_\constr$ itself.
%
%

\vspace{-3.0mm}
\subsection{Stability of Multi-Equation Systems}
\vspace{-1.0mm}
Next, we extend the definition of stability to \emph{multi-equation systems}, conjunctions of the form 
$\constr\colon \constreq \land \bigwedge_{x\in \vars}x\in \constrlass(x)$ where
$\constreq\colon \bigwedge_{i=1}^m \sterm_i=\tterm_i$ for $m\in\nat$.
We assume that every two equations are different, i.e., $\{\sterm_i,\tterm_i\}\neq\{\sterm_j,\tterm_j\}$ if $i\neq j$. 



We generalise stability in a way that utilises both strong and weak stability of
single equation systems and both \cref{thm:main} and \cref{thm:weaklystable}.
Again, we interpret every equation $\sterm_i = \tterm_i$ as a pair of
inclusions and show that it suffices to satisfy a~certain subset of these
inclusions in order to obtain a~solution.
A~sufficient subset of inclusions is defined through the notion of an
\emph{inclusion graph} of~$\constreq$.
It is a~directed graph $\graph =(V,E)$ where vertices~$V$ are inclusion
constraints of the form $\graphnode{\sterm_i}{\tterm_i}$ or
$\graphnode{\tterm_i}{\sterm_i}$, for $1\leq i \leq m$, and $E\subseteq V\times
V$. 
An~inclusion graph must satisfy the following conditions:
%
\enlargethispage{-2mm}
\begin{enumerate}[({IG}1)]
\vspace{-1.5mm}
\item
For each $\sterm_i = \tterm_i$ in~$\constreq$,
at least one of the nodes $\graphnode{\sterm_i}{\tterm_i}$,
$\graphnode{\tterm_i}{\sterm_i}$ is in $\vertices.$
\label{cond:atleastone}
\item 
If $\graphnode{\sterm_i}{\tterm_i}\in \vertices$ and $\tterm_i$ has a variable with multiple occurrences in right-hand sides of vertices of $V$, 
then also $\graphnode{\tterm_i}{\sterm_i}\in \vertices$.
\label{cond:multioccur}
\item 
$(\graphnode{\sterm_i}{\tterm_i}, \graphnode{\sterm_j}{\tterm_j})
\in\edges$ iff $\graphnode{\sterm_i}{\tterm_i}, \graphnode{\sterm_j}{\tterm_j}\in\vertices $ and $\sterm_i$ and $\tterm_j$ share a~variable.  
\label{cond:edges}
\item 
If $\graphnode{\sterm_i}{\tterm_i} \in \vertices$ lies on a cycle, then also $\graphnode{\tterm_i}{\sterm_i} \in \vertices.$
\label{cond:cycle}
\vspace{-1.5mm}

\end{enumerate}
Note that by \condref{cond:edges}, $\edges$ is uniquely determined by~$\vertices$.
We define that a~language assignment $\lass$ is \emph{stable for an inclusion graph} $\G=(V,E)$ of
$\constreq$ if it \mbox{satisfies every inclusion in~$\vertices$.}

\vspace{-2mm}
\begin{restatable}{theorem}{thmGraphWStable}\label{thm:graphwstable}
Let $\graph$ be an inclusion graph of~$\constreq$.
Then there is a~feasible language assignment that refines $\constrlass$ and is stable
for~$\graph$ iff~$\constr$ has a~solution.
\end{restatable}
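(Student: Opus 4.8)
The backward implication is immediate. Given a solution~$\ass$ of~$\constr$, take the singleton assignment $\lass(x) = \{\ass(x)\}$. It refines $\constrlass$ since $\ass(x)\in\constrlass(x)$, it is feasible, and for every vertex $\graphnode{\sterm_i}{\tterm_i}$ of $\graph$ we even obtain equality $\lass(\sterm_i) = \{\ass(\sterm_i)\} = \{\ass(\tterm_i)\} = \lass(\tterm_i)$ because $\ass$ solves $\sterm_i=\tterm_i$; hence $\lass$ is stable for $\graph$.

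For the forward implication I would construct a solution by processing the vertices of $\graph$ and invoking the single-equation results \cref{thm:main} and \cref{thm:weaklystable} as black boxes. The guiding idea is to read an inclusion $\lass(\sterm)\subseteq\lass(\tterm)$ with the variables of $\tterm$ as \emph{outputs}: once the variables of $\sterm$ carry values whose concatenation is some $w\in\lass(\sterm)\subseteq\lass(\tterm)$, the factorisation underlying \cref{thm:weaklystable} splits $w$ along $\tterm$ and reads off values for the variables of $\tterm$. By \condref{cond:edges}, an edge from $\graphnode{\sterm_i}{\tterm_i}$ to $\graphnode{\sterm_j}{\tterm_j}$ records precisely that an output variable of vertex $j$ is consumed on the input side $\sterm_i$ of vertex $i$. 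Processing the strongly connected components (SCCs) of $\graph$ in reverse topological order of its condensation, that is, each component only after all components producing its input variables, therefore ensures that every input is fixed before it is used.

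Two kinds of components arise. A trivial component $\graphnode{\sterm}{\tterm}$ that lies on no cycle is a genuinely \emph{weak} equation with $\tterm$ on the short side: \condref{cond:cycle} rules out $\tterm$ sharing a variable with $\sterm$ (which would create a self-loop), and \condref{cond:multioccur} guarantees that no variable of $\tterm$ occurs elsewhere among the right-hand sides of $\vertices$, so each variable of $\tterm$ occurs exactly once in the equation and is produced by this vertex alone; \cref{thm:weaklystable} then yields consistent output values from the already-fixed inputs. A component containing a cycle has, by \condref{cond:cycle}, both inclusions of each of its vertices present, so its constraints express language \emph{equalities}---strong stability---and its equations are amenable to \cref{thm:main}. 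Condition \condref{cond:atleastone} ensures that every equation of $\constreq$ is covered by some processed vertex, so the resulting assignment satisfies all of $\constreq$, while membership in $\constrlass$ holds because every chosen value lies in a language refining $\constrlass$.

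The technical heart, and the step I expect to be the main obstacle, is gluing the per-component solutions into one global assignment. For the acyclic part this is clean: after fixing the outputs of a weak vertex to singletons, the inclusions at all remaining vertices still hold, since shrinking a left-hand side preserves $\subseteq$ and---by \condref{cond:multioccur}---the fixed variables appear in no remaining right-hand side; feasibility is clear, and an induction on the number of unprocessed vertices goes through. The difficulty concentrates in the cyclic components, where several equalities over overlapping variables must be solved \emph{jointly}: fixing a variable to a singleton destroys the equalities of the other equations in the component, so one cannot naively iterate \cref{thm:main}. Making this work---reducing an SCC, one variable at a time, to a single strongly stable equation while proving that the reduction terminates and never reintroduces a conflict on a variable shared with an already-solved component---is where the bookkeeping is heaviest and constitutes the crux of the argument.
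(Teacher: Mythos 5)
Your backward direction and the overall forward strategy coincide with the paper's proof: the paper likewise decomposes $\graph$ into SCCs, establishes exactly the structural facts you derive from \condref{cond:multioccur}--\condref{cond:cycle} (its \cref{lem:graph_structure}: non-trivial SCCs are terminal, pairwise variable-disjoint, and right-hand sides of trivial vertices own their variables exclusively), processes sink components first, and discharges the acyclic vertices via \cref{thm:weaklystable} essentially as you do --- the only cosmetic difference being that in each round the paper batches all ready terminal vertices into a single weak equation $\sterm_1\sharp\cdots\sharp\sterm_k = \tterm_1\sharp\cdots\sharp\tterm_k$ with a separator, rather than handling them one at a time.

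The genuine gap is the step you flag and leave open: the non-trivial SCCs. The paper does not do what you anticipate (reduce a component one variable at a time, iterating \cref{thm:main}); it applies \cref{thm:main} \emph{once, jointly}, to the restricted system $\constr_{|C}$ consisting of all equations of the component~$C$ together with the constraints $x\in\lass(x)$ for $x\in\varsof{C}$. By \condref{cond:cycle}, every vertex of~$C$ has its dual in~$C$, so $\lass$ gives language \emph{equality} for every equation of~$C$; hence $\constr_{|C}$ is strongly stable and the theorem produces one assignment $\ass_C$ solving all equations of~$C$ simultaneously. Terminality and variable-disjointness of distinct non-trivial SCCs then allow these $\ass_C$ to be united and fed as singleton inputs into the acyclic part. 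Your objection that ``fixing a variable to a singleton destroys the equalities of the other equations'' is precisely why no iteration is attempted --- the joint invocation sidesteps it entirely. One caveat in your favour: \cref{thm:main} is stated for single-equation systems, so the paper's one-line invocation tacitly extends it to conjunctions that are equation-wise strongly stable (the atom-and-propagation machinery of its proof in \cref{section:proof} is what supports this extension). But whether one treats that extension as immediate or as needing its own argument, your write-up stops before this point, so the case that carries the main difficulty of the theorem remains unproved in your proposal.
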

\vspace{-1mm}

The proof of \cref{thm:graphwstable} is in \cref{sec:proofThmGraphWStable}.
Intuitively, the set of inclusions needed to guarantee a solution is specified by the vertices of an inclusion graph. 
All equations must contribute with at least one inclusion, by \igcondref{cond:atleastone}.
Including only one inclusion corresponds to using weak stability.
Including both inclusions corresponds to using strong stability. 
We will use inclusion graphs in our algorithm to direct propagation of
refinements of language assignments. We will wish to avoid using strong
stability when possible since cycles that it creates in the graph may cause the algorithm to diverge.

Conditions \condref{cond:multioccur}--\condref{cond:cycle} specify where weak
stability is not enough.
Namely, \igcondref{cond:multioccur} enforces that 
to use weak stability, multiple occurrences of a~variable can only occur on the
left-hand side of an inclusion (as in the definition of weak stability),
otherwise strong stability must be used. 
%
The edges defined by \igcondref{cond:edges} are used in \igcondref{cond:cycle}.
An edge means that a refinement of the language assignment made to satisfy
the inclusion in the source node may invalidate the inclusion in the target node. 
\igcondref{cond:cycle} covers the case of a cyclic dependency of a~variable on
itself. A~self-loop indicates that a~variable occurs on both sides of an equation
(breaking the definition of weak stability). A~longer cycle indicates such
a~cyclic dependency caused by transitively propagating the inclusion relation.

\newcommand{
\begin{wrapfigure}[7]{r}{2.8cm}
\hspace*{-4mm}
\begin{minipage}{2.8cm}
\scalebox{0.80}{
\begin{tikzpicture}[>=stealth,shorten >=0pt,auto,node distance=16mm,transform shape]

  \node[inner sep=0]                (uvx<x) {$\graphnode{uvx}{x}$};
  \node[left  of=uvx<x,inner sep=0] (x<uvx) {$\graphnode{x}{uvx}$};
  \node[above of=uvx<x,inner sep=0,yshift=-4mm] (u<v)   {$\graphnode{u}{v}$};
  \node[above of=x<uvx,inner sep=0,yshift=-4mm] (v<u)   {$\graphnode{v}{u}$};
  \node[above of=u<v,inner sep=0,yshift=-4mm]   (z<u)   {$\graphnode{z}{u}$};
  \node[above of=v<u,inner sep=0,yshift=-4mm]   (u<z)   {$\graphnode{u}{z}$};

  \draw 
        (uvx<x) edge[->,loop below] coordinate (l1) (uvx<x)
        (uvx<x) edge[->,bend right] (z<u)
        (uvx<x) edge [->](v<u)
        (uvx<x) edge [->](u<v)
        (x<uvx) edge[->,loop below] coordinate(l2) (x<uvx)
        (u<z)   edge[->] (v<u)
        (u<z)   edge[->,bend right] (x<uvx)
        (u<v)   edge[->] (z<u)
        (u<v)   edge[->] (x<uvx)
        (v<u)   edge[->] (x<uvx)

        (uvx<x) edge[<->,dashed,bend left] (x<uvx)
        (u<v) edge[<->,dashed,bend right] coordinate(l3) (v<u)
  ;

  \begin{pgfonlayer}{background}
  \node[-,dotted,rectangle,fill=Red!20,draw=black!70,rounded corners=5pt,inner sep=4pt, fit=(uvx<x) (l1) ] (scc1) {};
  \node[-,dotted,rectangle,fill=Red!20,draw=black!70,rounded corners=5pt,inner sep=4pt, fit=(x<uvx) (l2) ] (scc2) {};
  \node[-,dotted,rectangle,fill=Blue!40,fill opacity=0.2,draw=black!70,rounded corners=5pt,inner sep=4pt, fit=(u<v) (v<u) (scc1) (scc2) (l2) ] (strong) {};
  \path[-,dotted,fill=Green!60,fill opacity=0.1,draw=black!70,rounded
    corners=5pt,inner sep=4pt]
    ([xshift=-6mm,yshift=3mm]u<z.north west) |-
    ([xshift=4.5mm,yshift=-8mm]uvx<x.south east) |-
    ([xshift=6mm,yshift=3mm]v<u.north east) --
    ([xshift=6mm,yshift=3mm]u<z.north east) -- cycle;
  \end{pgfonlayer}

\end{tikzpicture}
}
\end{minipage}
\end{wrapfigure}
}[0]{
\begin{wrapfigure}[7]{r}{2.8cm}
\hspace*{-4mm}
\begin{minipage}{2.8cm}
\scalebox{0.80}{
\begin{tikzpicture}[>=stealth,shorten >=0pt,auto,node distance=16mm,transform shape]

  \node[inner sep=0]                (uvx<x) {$\graphnode{uvx}{x}$};
  \node[left  of=uvx<x,inner sep=0] (x<uvx) {$\graphnode{x}{uvx}$};
  \node[above of=uvx<x,inner sep=0,yshift=-4mm] (u<v)   {$\graphnode{u}{v}$};
  \node[above of=x<uvx,inner sep=0,yshift=-4mm] (v<u)   {$\graphnode{v}{u}$};
  \node[above of=u<v,inner sep=0,yshift=-4mm]   (z<u)   {$\graphnode{z}{u}$};
  \node[above of=v<u,inner sep=0,yshift=-4mm]   (u<z)   {$\graphnode{u}{z}$};

  \draw 
        (uvx<x) edge[->,loop below] coordinate (l1) (uvx<x)
        (uvx<x) edge[->,bend right] (z<u)
        (uvx<x) edge [->](v<u)
        (uvx<x) edge [->](u<v)
        (x<uvx) edge[->,loop below] coordinate(l2) (x<uvx)
        (u<z)   edge[->] (v<u)
        (u<z)   edge[->,bend right] (x<uvx)
        (u<v)   edge[->] (z<u)
        (u<v)   edge[->] (x<uvx)
        (v<u)   edge[->] (x<uvx)

        (uvx<x) edge[<->,dashed,bend left] (x<uvx)
        (u<v) edge[<->,dashed,bend right] coordinate(l3) (v<u)
  ;

  \begin{pgfonlayer}{background}
  \node[-,dotted,rectangle,fill=Red!20,draw=black!70,rounded corners=5pt,inner sep=4pt, fit=(uvx<x) (l1) ] (scc1) {};
  \node[-,dotted,rectangle,fill=Red!20,draw=black!70,rounded corners=5pt,inner sep=4pt, fit=(x<uvx) (l2) ] (scc2) {};
  \node[-,dotted,rectangle,fill=Blue!40,fill opacity=0.2,draw=black!70,rounded corners=5pt,inner sep=4pt, fit=(u<v) (v<u) (scc1) (scc2) (l2) ] (strong) {};
  \path[-,dotted,fill=Green!60,fill opacity=0.1,draw=black!70,rounded
    corners=5pt,inner sep=4pt]
    ([xshift=-6mm,yshift=3mm]u<z.north west) |-
    ([xshift=4.5mm,yshift=-8mm]uvx<x.south east) |-
    ([xshift=6mm,yshift=3mm]v<u.north east) --
    ([xshift=6mm,yshift=3mm]u<z.north east) -- cycle;
  \end{pgfonlayer}

\end{tikzpicture}
}
\end{minipage}
\end{wrapfigure}
}

\vspace{-3mm}
\subsection{Constructing Inclusion Graphs and Chain-Freeness}
\enlargethispage{2mm}
\vspace{-1mm}

We now discuss a construction of a~suitable inclusion graph. 
%
Our algorithm for solving string constraints will use the graph nodes to gradually refine the language assignment, propagating information along the graph edges. It is guaranteed to terminate when the graph is acyclic. 
Below, we give an algorithm that generates an inclusion graph that contains as
few inclusions as possible and is acyclic whenever possible.

The graph is obtained from a~simplified version $\splitgraph$
of the splitting graph of \cite{ChainFree}, 
which is the basis of the definition of the chain-free fragment, for which our
algorithm is complete.
The nodes of~$\splitgraph$
are all inclusions $\graphnode{\sterm_i}{\tterm_i},\graphnode{\tterm_i}{\sterm_i}$, for $1\leq i \leq m$, and it has an edge from $\graphnode\sterm\tterm$ to $\graphnode{\sterm'}{\tterm'}$ 
if $\sterm$ and $\tterm'$ each have a~different occurrence of the same variable
(the ``\emph{different}'' here meaning not the same position in the same term in
the same equation, e.g., for inclusions induced by the equation $u =
v$, for $u,v \in \vars$, there will be no edge between $\graphnode u v$ and $\graphnode
v u$).

\begin{wrapfigure}[9]{r}{6.3cm}
\vspace{-12mm}
\clipbox{{1mm} 0 {7mm} 0}{%
\begin{minipage}{7cm}
\small
\begin{algorithm}[H]
  \caption{$\incl(\constreq)$}
  \label{alg:inclgraph}
  \KwIn{Conjunction of string equations $\constreq$.}
  \KwOut{An inclusion graph of $\constreq$.}

  $\G := \splitgraph$;
  $\St' := \emptyset$\;
  \While{$G$ has a trivial source SCC $(\{v\},\emptyset)$}{
    $\G := \G \setminus \{v, \dualof{v}\}$\;
    $\St' := \St' \cup \{v\}$\;\label{line:addtscc}
  }
  $V:=$ $\St' \cup $ the remaining nodes of $\G$\;\label{line:addrest}
  \Return the inclusion graph with nodes $\St$\;
\end{algorithm}
\end{minipage}
}
\end{wrapfigure}

The algorithm for constructing an inclusion graph from $\splitgraph$ starts by
iteratively removing nodes that are trivial source strongly connected components
(SCCs) from
$\splitgraph$ (trivial means a graph $(\{v\},\emptyset)$ with no edges, source means with no edges coming
from outside into the component). 
With every removed node $v$, 
the algorithm removes from $\splitgraph$ also the dual node  $\dualof v$ (the other inclusion),
and it adds $v$ to the inclusion graph. 
When no trivial source SCCs are left, that is, the remaining nodes are all reachable from non-trivial SCCs, the algorithm adds to the inclusion graph all the remaining 
nodes. 
The pseudocode of the algorithm is shown in \cref{alg:inclgraph}.
It~uses $\scc(\G)$ to denote the set of SCCs of $\G$ and $\G\setminus V$ to denote the graph obtained from $\G$ by removing the vertices in $V$ together with the adjacent edges.


\begin{wrapfigure}[7]{r}{2.8cm}
\hspace*{-4mm}
\begin{minipage}{2.8cm}
\scalebox{0.80}{
\begin{tikzpicture}[>=stealth,shorten >=0pt,auto,node distance=16mm,transform shape]

  \node[inner sep=0]                (uvx<x) {$\graphnode{uvx}{x}$};
  \node[left  of=uvx<x,inner sep=0] (x<uvx) {$\graphnode{x}{uvx}$};
  \node[above of=uvx<x,inner sep=0,yshift=-4mm] (u<v)   {$\graphnode{u}{v}$};
  \node[above of=x<uvx,inner sep=0,yshift=-4mm] (v<u)   {$\graphnode{v}{u}$};
  \node[above of=u<v,inner sep=0,yshift=-4mm]   (z<u)   {$\graphnode{z}{u}$};
  \node[above of=v<u,inner sep=0,yshift=-4mm]   (u<z)   {$\graphnode{u}{z}$};

  \draw 
        (uvx<x) edge[->,loop below] coordinate (l1) (uvx<x)
        (uvx<x) edge[->,bend right] (z<u)
        (uvx<x) edge [->](v<u)
        (uvx<x) edge [->](u<v)
        (x<uvx) edge[->,loop below] coordinate(l2) (x<uvx)
        (u<z)   edge[->] (v<u)
        (u<z)   edge[->,bend right] (x<uvx)
        (u<v)   edge[->] (z<u)
        (u<v)   edge[->] (x<uvx)
        (v<u)   edge[->] (x<uvx)

        (uvx<x) edge[<->,dashed,bend left] (x<uvx)
        (u<v) edge[<->,dashed,bend right] coordinate(l3) (v<u)
  ;

  \begin{pgfonlayer}{background}
  \node[-,dotted,rectangle,fill=Red!20,draw=black!70,rounded corners=5pt,inner sep=4pt, fit=(uvx<x) (l1) ] (scc1) {};
  \node[-,dotted,rectangle,fill=Red!20,draw=black!70,rounded corners=5pt,inner sep=4pt, fit=(x<uvx) (l2) ] (scc2) {};
  \node[-,dotted,rectangle,fill=Blue!40,fill opacity=0.2,draw=black!70,rounded corners=5pt,inner sep=4pt, fit=(u<v) (v<u) (scc1) (scc2) (l2) ] (strong) {};
  \path[-,dotted,fill=Green!60,fill opacity=0.1,draw=black!70,rounded
    corners=5pt,inner sep=4pt]
    ([xshift=-6mm,yshift=3mm]u<z.north west) |-
    ([xshift=4.5mm,yshift=-8mm]uvx<x.south east) |-
    ([xshift=6mm,yshift=3mm]v<u.north east) --
    ([xshift=6mm,yshift=3mm]u<z.north east) -- cycle;
  \end{pgfonlayer}

\end{tikzpicture}
}
\end{minipage}
\end{wrapfigure}
    
\beginexample
In the picture in the right,
we show an example of the construction of the inclusion graph~$\graph$ from $\splitgraph$ for
$\constreq\colon z=u \land u =v \land uvx = x$.
Edges of $\splitgraph$ are solid lines,
the inclusion graph has both solid and dashed
edges. The inner red boxes are 
the non-trivial SCCs of $\splitgraph$. They
are enclosed in the box of nodes that are added on
\lnref{line:addrest} of \cref{alg:inclgraph}. The outer-most box encloses
the inclusion graph, including one node added on \lnref{line:addtscc}.
\qed

%
\begin{restatable}{theorem}{thmInclCorr}
  \label{thm:inclGraphCorr}
  For a conjunction of equations $\constreq$, $\incl(\constreq)$ is an inclusion
  graph for~$\constreq$ with the smallest number of vertices.
  Moreover, if there exists an acyclic inclusion graph for~$\constreq$, then
  $\incl(\constreq)$ is acyclic.
\end{restatable}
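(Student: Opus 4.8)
The plan is to derive everything from the structure of the peeling in \cref{alg:inclgraph}. The key structural fact I would establish first is that the dual map $\dualof{\cdot}$ is an \emph{anti-automorphism} of $\splitgraph$: there is a splitting edge $n\to n'$ iff there is one $\dualof{n'}\to\dualof{n}$, because both assert that a term and a different occurrence of the same variable meet. Hence whenever the loop removes a trivial source SCC $\{v\}$, the node $\dualof v$ it removes alongside is a trivial \emph{sink} SCC, so every intermediate graph, and in particular the set $R$ of nodes surviving to \lnref{line:addrest}, is closed under $\dualof{\cdot}$; and since $R$ has no trivial source SCC, by the symmetry it has no trivial sink SCC either. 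From this I would extract the reachability characterisation $R=\{n : \text{some non-trivial SCC reaches } n \text{ and } n \text{ reaches some non-trivial SCC}\}$. The returned graph has vertices $V=V'\cup R$, where $V'$ (collected on \lnref{line:addtscc}) holds exactly one node of each pair it touches while $R$ holds both nodes of each of its pairs.

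For correctness I would use that a $V'$-node $v=\graphnode{\sterm}{\tterm}$ had \emph{no incoming splitting edge} when removed. Then \condref{cond:atleastone} is immediate and \condref{cond:edges} holds by construction, while for \condref{cond:multioccur} and \condref{cond:cycle} the only nodes at risk are $V'$-nodes (for $R$-pairs both duals are present). If a variable of $\tterm$ had a second right-hand-side occurrence in some output vertex $m$, then $\dualof m$ (or $\dualof v$) would give $v$ an incoming edge present at its removal, contradicting sourceness; this yields \condref{cond:multioccur}. Similarly a $V'$-node has, in the output, only earlier-removed $V'$-nodes as predecessors, so the $V'$-nodes admit a topological order: none lies on a cycle (\condref{cond:cycle}), and when $R=\emptyset$ the whole graph is acyclic.

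Both remaining claims reduce to the lemma $R\subseteq\hat V$ for \emph{every} inclusion graph $\hat G=(\hat V,\hat E)$: it gives minimality (then $\hat G$ contains both nodes of every $R$-pair and, by \condref{cond:atleastone}, one of every $V'$-pair, so $|\hat V|\geq|V|$) and acyclicity-optimality (if $R\neq\emptyset$ it contains a dual pair, a $2$-cycle in $\hat G$ by \condref{cond:edges}, so no inclusion graph is acyclic). To prove the lemma I would distil from \condref{cond:multioccur} two forcing rules along a splitting edge $n'\to n$ sharing variable $x$: (F1) if $n,\dualof{n'}\in\hat V$ then $\dualof n\in\hat V$, and, along $n\to n''$, (F2) if $\dualof n,n''\in\hat V$ then $n\in\hat V$ (in each case $x$ then occurs in two right-hand sides of $\hat V$). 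Combining these with \condref{cond:atleastone} and \condref{cond:cycle}, I would first show that every non-trivial SCC is \emph{full}, i.e.\ both nodes of each of its pairs are in $\hat V$: around any cycle, \condref{cond:atleastone} and the contrapositive of (F1) force a ``primal-only'' pattern to wrap all the way round unless all primals (or all duals) are present, whereupon \condref{cond:cycle} supplies the rest. Then, by the reachability characterisation, each $n\in R$ lies on a path between two full non-trivial SCCs; I would propagate fullness forward by (F1) and use (F2) to exclude any gap of primal-absent nodes, which would otherwise have to persist to the full endpoint.

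\textbf{Main obstacle.} The delicate point is this lemma, and specifically the primal/dual \emph{switches}: a cycle or path of $\splitgraph$ need not lift to a walk of $\hat G$ once $\hat G$ represents some node by its dual, so \condref{cond:cycle} does not transfer cycles by itself. The real work is therefore squeezing rules (F1)/(F2) out of \condref{cond:multioccur} to convert shared variables into forced dual memberships and carry fullness across switches. A secondary item to pin down is that the interleaved removal of sources and their dual sinks in \cref{alg:inclgraph} is confluent and computes exactly the reachability-characterised $R$.
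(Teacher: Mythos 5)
Your proposal is correct. Its verification of \condref{cond:atleastone}--\condref{cond:cycle} for the output of \cref{alg:inclgraph} runs along essentially the same removal-order reasoning as the paper's (whose \cref{claim:unreach} --- that nodes collected on \lnref{line:addtscc} are unreachable from the other output nodes --- plays the role of your topological order on the $V'$-nodes). Where you genuinely diverge is on minimality and acyclicity. The paper proves acyclicity by appeal to \cref{thm:acyclicIGChainFree}: an acyclic inclusion graph exists iff $\splitgraph$ is acyclic, in which case the peeling consumes everything, no dual pair survives, and \condref{cond:cycle} then forbids cycles; minimality it dispatches in a single unproved sentence (vertices in and between non-trivial SCCs force any inclusion graph to contain both duals). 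Your key lemma $R \subseteq \hat{V}$ for every inclusion graph $\hat{G} = (\hat{V}, \hat{E})$ is precisely the rigorous content behind that sentence, and your plan for it goes through: (F1)/(F2) are indeed instances of \condref{cond:multioccur} (two right-hand-side occurrences of a shared variable force a dual into $\hat{V}$); the contrapositive of (F2) propagates absence forward along splitting edges, so around a cycle either all primals are present --- then \condref{cond:cycle} adds all duals --- or all duals are present, and then \condref{cond:cycle} applied to the dual cycle (which exists by your anti-automorphism) forces the primals back, a contradiction; and the same forward propagation along a path between two full SCCs, cut off by fullness of the far endpoint, settles the remaining $R$-nodes. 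The secondary item you flag also holds: the survivors of the peeling are exactly the nodes that both are reached by and reach a non-trivial SCC, provable by induction on the first removal of such a node (a removed dual sink is the dual of a removed source, so the anti-automorphism transfers the contradiction to the source side). Net effect: your route gives a self-contained proof of both remaining claims from one lemma and, in particular, supplies the argument the paper omits for minimality; the paper's route is shorter because it outsources acyclicity to the chain-freeness theorem and merely asserts the minimality count.
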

In \cref{sec:satcheck}, we will show a satisfiability checking algorithm that guarantees termination when given an acyclic inclusion graph. 
Here we prove that the existence of an acyclic inclusion graph coincides with the
chain-free fragment of string constraints \cite{ChainFree}, which is the largest known decidable fragment of string constraints with equations, regular and transducer constraints, and length constraints (up to the incomparable fragment of quadratic equations). 
\emph{Chain-free} constraints are defined as those where the simplified
splitting graph $\splitgraph$ has no cycle.
The following theorem is proven in \cref{sec:proof_acyclic_splitting}.

\begin{restatable}{theorem}{thmAcyclicIGChainFree}
\label{thm:acyclicIGChainFree}
A~multi-equation system~$\constr$ is chain-free iff there exists an acyclic
inclusion graph for~$\constr$.
\end{restatable}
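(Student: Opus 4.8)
The plan is to prove both directions by exploiting that \emph{every} inclusion graph is an induced subgraph of $\splitgraph$: by \igcondref{cond:edges} the edge set is fully determined by the vertex set through the same shared‑variable rule that defines $\splitgraph$, so once a vertex set $V$ (a subset of the nodes of $\splitgraph$) is fixed, the inclusion graph is exactly $\splitgraph$ restricted to $V$. The forward direction is then immediate: if $\splitgraph$ is acyclic, I claim $\splitgraph$ \emph{itself} (taking $V$ to be all $2m$ nodes) is already a valid inclusion graph. Indeed \igcondref{cond:atleastone} and \igcondref{cond:multioccur} hold because both duals of every equation are present, \igcondref{cond:edges} is precisely what defines its edges, and \igcondref{cond:cycle} is vacuous as there are no cycles; and this graph is acyclic by hypothesis.

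For the reverse direction I would first reduce to a statement about the concrete graph produced by \cref{alg:inclgraph}. By \cref{thm:inclGraphCorr}, the existence of any acyclic inclusion graph implies that $\incl(\constreq)$ is acyclic, so it suffices to prove the contrapositive: if $\splitgraph$ contains a cycle $C$, then $\incl(\constreq)$ also contains a cycle. The key structural tool is a \textbf{duality lemma}: the involution $\dualof{\cdot}\colon \graphnode{\sterm}{\tterm}\mapsto\graphnode{\tterm}{\sterm}$ reverses every edge of $\splitgraph$, since an edge from $\graphnode{\sterm}{\tterm}$ to $\graphnode{\sterm'}{\tterm'}$ and an edge from $\dualof{\graphnode{\sterm'}{\tterm'}}=\graphnode{\tterm'}{\sterm'}$ to $\dualof{\graphnode{\sterm}{\tterm}}=\graphnode{\tterm}{\sterm}$ are both witnessed by the very same differently‑occurring variable shared by $\sterm$ and $\tterm'$. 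Consequently $\dualof{\cdot}$ maps the cycle $C$ to another cycle $\dualof{C}$ in $\splitgraph$.

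The heart of the argument is then to show that the peeling loop of \cref{alg:inclgraph} never touches any node of $D := C \cup \dualof{C}$, so that all of $C$ survives into the returned vertex set and forms a cycle in $\incl(\constreq)$. I would argue by inspecting the first iteration that removes a node of $D$. Since $D$ is closed under $\dualof{\cdot}$ and each iteration removes a dual pair $\{v,\dualof{v}\}$, both members of the removed pair lie in $D$; in particular the peeled trivial source SCC $v$ lies in $D$. But before this iteration all of $D$ is still present, and because $\G\setminus V$ retains the induced subgraph, the cycle ($C$ or $\dualof{C}$) through $v$ is intact in the current graph; hence $v$ lies on a cycle and its SCC is non‑trivial, contradicting that $v$ was a trivial source SCC $(\{v\},\emptyset)$. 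Therefore $C$ is never removed, is added to the vertex set on \lnref{line:addrest}, and its edges (induced via \igcondref{cond:edges}) yield a cycle in $\incl(\constreq)$.

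I expect the main obstacle to be the bookkeeping in the duality lemma and the first‑removal argument: one must track the ``different occurrence'' proviso of the edge relation carefully, so that duals of cycle edges are genuine edges and so that self‑loops are correctly classified as non‑trivial SCCs, and one must recognize that passing to $\dualof{C}$ is essential precisely because peeling deletes dual pairs, so restricting attention to $C$ alone would not close the argument. An alternative, self‑contained route that avoids \cref{thm:inclGraphCorr} is to combine the same duality lemma with \igcondref{cond:multioccur}: along an edge $n_i\to n_{i+1}$ the witnessing variable occurs simultaneously in the right‑hand sides of $\dualof{n_i}$ and $n_{i+1}$, and \igcondref{cond:multioccur} then forces the missing duals into any inclusion graph until a complete copy of $C$ appears. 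I would nonetheless keep the $\incl$‑based argument as the primary one, since its first‑removal reasoning is the cleaner of the two.
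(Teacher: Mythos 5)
Your forward direction breaks down at its very first step, because the premise that ``every inclusion graph is an induced subgraph of $\splitgraph$'' is false. \igcondref{cond:edges} defines edges by mere variable sharing, \emph{without} the ``different occurrence'' proviso that defines the edges of $\splitgraph$. As a consequence, whenever both duals of an equation are vertices, \igcondref{cond:edges} creates a $2$-cycle between them: the left-hand side $\sterm$ of $\graphnode{\sterm}{\tterm}$ trivially shares a variable with the right-hand side $\sterm$ of $\graphnode{\tterm}{\sterm}$. Concretely, for the chain-free system $u = v$, the graph $\splitgraph$ has two nodes and \emph{no} edges, yet your proposed inclusion graph (all nodes of $\splitgraph$) contains the cycle $\graphnode{u}{v} \to \graphnode{v}{u} \to \graphnode{u}{v}$; the paper's own example figure shows exactly these extra dashed $2$-cycles between dual pairs, present in the inclusion graph but absent from $\splitgraph$. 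So taking the full vertex set \emph{never} yields an acyclic inclusion graph, and the forward direction collapses. Nor can it be repaired by picking one inclusion per equation arbitrarily: for $x = yy$, the choice $\graphnode{x}{yy}$ violates \igcondref{cond:multioccur} (which forces the dual back in, hence a cycle), and only $\graphnode{yy}{x}$ works. Choosing and orienting one inclusion per equation so that \igcondref{cond:multioccur} holds and no cycle arises is precisely the nontrivial content of the paper's proof, which does this via an ``\"ubergraph'' abstraction of the splitting graph, a claim that one side of an equation is a source iff the other is a sink, and an iterative peeling construction.

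Your reverse direction has a different problem: the primary route is circular. The paper proves the second part of \cref{thm:inclGraphCorr} (``if an acyclic inclusion graph exists, then $\incl(\constreq)$ is acyclic'') \emph{by invoking} \cref{thm:acyclicIGChainFree}, so you cannot cite it when proving \cref{thm:acyclicIGChainFree} itself. Your duality lemma is correct ($\dualof{\cdot}$ does reverse the edges of $\splitgraph$), and the first-removal argument showing that the peeling loop never destroys $C \cup \dualof{C}$ is sound, but on its own it only proves that the one graph $\incl(\constreq)$ is cyclic; without the (circular) appeal to \cref{thm:inclGraphCorr} this says nothing about \emph{all} inclusion graphs. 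The way out is the route you relegated to a remark: given a cycle in $\splitgraph$ and an arbitrary inclusion graph, either every equation on the cycle contributes its inclusion in the same orientation, in which case \igcondref{cond:edges} reproduces the cycle (or its dual reversal), or there is a ``swap'' point, where the witnessing variable lands on the right-hand sides of two distinct vertices, so \igcondref{cond:multioccur} forces a dual pair into the graph and any dual pair is a $2$-cycle under \igcondref{cond:edges}. That is essentially the paper's actual argument for this direction (phrased there through the \"ubergraph), and it is the argument you should promote to primary and work out in full.
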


\newcommand{\algRefinement}{
\begin{algorithm}[H]
  \caption{$\refine(v,\autass)$}
  \label{alg:refinement}
  \KwIn{A vertex $v = \graphnode \sterm \tterm$ with $\sterm = x_1\concats x_n$ and $\tterm=y_1\concats y_m$, an automata assignment $\autass$}
  \KwOut{A tight refinement of $\autass$ w.r.t.\ $v$}
  
  \vspace{0.5mm}
  $\prod := \autass(\sterm) \product \reduce(\autass(\tterm))$\;\label{ln:product}
  $\noodles := \noodlify(\prod)$\;\label{ln:noodlify}
  $\tightset := \emptyset$\;
  \For{$\noodle\in\noodles$}
  {\label{ln:for_noodles}
    $\autass' := \autass$\;
     \For{$1\leq i\leq n$}{\label{ln:for_intersect}
      $\autass'(x_i) := \bigcap\{\noodle(j) \mid 1\leq j\leq n,x_i = x_j\}$\;
    }
    \lIf{$\langof{\autass'(\sterm)} = \emptyset$}{
      \Continue
    } 
    $\tightset := \tightset \cup \{\autass'\}$\;
  }
  \Return{$\tightset$}\;
  \vspace{-0.7mm}
\end{algorithm}
}

\newcommand{\algPropagate}[0]{
\small
\begin{algorithm}[H]
  \caption{$\propagate(\graph_\constreq, \constrautass)$}
  \label{alg:propagate}

  \KwIn{Inclusion graph $\graph_\constreq = (\vertices,\edges)$, initial automata assignment $\constrautass$.}
  \KwOut{$\sat$ if $\constr$ is satisfiable, ~~~~~~~~~~~ $\unsat$ if $\constr$ is unsatisfiable}

  \vspace{1mm}
  $\worklist :=  \lifo{(\constrautass,\vertices)}$\; \label{ln:prop:init}
  \While{$\worklist \neq \emptyset$}{
    $(\autass, \frontier) := \worklist.\pop()$\;

    \lIf{$\frontier = \emptyset$}{\label{ln:prop:sat}
      \Return{$\sat$}
    }

    $v = \graphnode \sterm \tterm := \frontier.\pop()$\;

    \If{$\langof{\autass(\sterm)} \subseteq \langof{\autass(\tterm)}$ \label{line:inclusiontest}}{
      $\worklist.\push((\autass, \frontier))$\;
      \Continue\;
    }

    $\tightset := \refine(v,\autass)$\;

    $\frontier' := \frontier$\;


    \ForEach{$(v,u)\in\edges$ s.t. $u\not\in \frontier$}{
      $\frontier'.\push(u)$\;
    }
    \ForEach{$\autass'\in\tightset$}{
      $\worklist.\push(\autass', \frontier')$\;
    }

  }
  \Return{$\unsat$}\;
\end{algorithm}
}

\vspace{-3mm}
\section{Algorithm for Satisfiability Checking}
\label{sec:satcheck}
\vspace{-1mm}

Our algorithm for testing satisfiability of a multi-equation system $\constr$ is based on \cref{thm:graphwstable}. 
The algorithm first constructs a~suitable inclusion graph of $\constreq$ using
\cref{alg:inclgraph} and then
it gradually refines the original language assignment $\constrlass$ according to
the dependencies in the inclusion graph until it either finds a~stable feasible language assignment or
concludes that no such language assignment exists. 

A language assignment $\lass$ is in the algorithm represented by an
\emph{automata assignment}~$\autass$, which assigns to 
every variable~$x$ an $\epsilon$-free NFA $\autass(x)$ with $\langof{\autass(x)} = \lass(x)$. 
We use $\autass(\tterm)$ for a~term $\tterm = x_1 \dots x_n$ to denote the NFA $\autass(x_1) 
\concatxof{\epsilon} \cdots \concatxof{\epsilon} \autass(x_n)$.
In the following text, we identify a language assignment with the corresponding automata assignment and vice versa. 


\vspace{-2mm}
\subsection{Refining Language Assignments by Noodlification}
\vspace{-0mm}

The task of a~refinement step is to create a new language assignment that
refines the old one, $\lass$, and satisfies one of the inclusions previously not
satisfied, say $\graphnode{\sterm}{\tterm}$. 
In order for the algorithm to be sound when returning $\unsat$, a~refinement
step must preserve all existing solutions.
%
It will therefore return a~set~$\tightset$ of refinements of $\lass$ that is
\emph{tight w.r.t.~$\graphnode\sterm\tterm$}, that is, every solution of $\sterm = \tterm$ under $\lass$ is also a solution of $\sterm=\tterm$ under some of its refinements in~$\tightset$. 

\cref{alg:refinement} computes such a tight set. 
\lnref{ln:product} computes the automaton $\prod$, which accepts $\lass(\sterm)\cap\lass(\tterm)$. 
In order to be able to extract new languages for the variables of $\sterm$ from it, $\prod$ marks borders between the variables of $\sterm$ with $\epsilon$-transitions. That is, when $\epsilon$ is understood as a special letter, $\prod$ accepts the \emph{delimited language} $\lang^\epsilon(\prod)$ of words $w_1\epsilon\concats\epsilon w_n$ with $w_i\in\lass(x_i)$ for $1\leq i \leq n$ and $w_1\concats w_n\in\lass(\tterm)$.
Notice that 
$\autass(\tterm)$ 
is on \lnref{ln:product} minimised. This means removal 
of $\epsilon$-transitions marking the borders  of variables' occurrences,  and then minimisation
\begin{wrapfigure}[14]{r}{7.7cm}
\vspace{-8.5mm}
\small
\algRefinement    
\end{wrapfigure}
 by any automata size reduction method (we use simulation quotient~\cite{Aziz93,HHK95}). Since the product is then representing only the borders of the
 variables on the left (because $\autass(\sterm)$ keeps the
 $\epsilon$-transitions generated from the concatenation with
 $\concatxof{\epsilon}$), but not the borders of variables in $t$, it does not
 actually generate an explicit representation of possible alignments of borders
 of variables' occurrences. 

We then extract from $\prod$ a~language for each \emph{occurrence} of a variable in $\sterm$.
\lnref{ln:noodlify} 
divides $\prod$ into a~set of the so-called \emph{noodles}, which are sequences of automata $\noodle = \noodle(1),\ldots,\noodle(n)$ 
that preserve the delimited language in the sense that $\bigcup_{\noodle\in\noodles} \lang^\epsilon(\noodle(1)\concatxof{\epsilon}\cdots\concatxof{\epsilon}\noodle(n)) = \lang^\epsilon(\prod)$. 

Technically,
assuming w.l.o.g that $\prod$ has a single initial state $r_0$ and a~single final state $q_n$,
$\noodlify(\prod)$ generates one noodle $\noodle$ for each $(n-1)$-tuple $\move{q_1}\epsilon{r_1},\ldots,\move{q_{n-1}}\epsilon{r_{n-1}}$ of transitions that appear, in that order, in an accepting run of $\prod$ (note that every accepting run has $n-1$ $\epsilon$-transitions by construction of $\prod$, since $\autass(s)$ also had $n-1$ $\epsilon$-transitions in each accepting run and $\reduce(\autass(t))$ is $\epsilon$-free):  
for each $1\leq i \leq n$, $\noodle(i)$ arises by trimming $\prod$ after its initial states were replaced by $\{r_{i-1}\}$ and final states by $\{q_i\}$.

The \textbf{for} loop on \lnref{ln:for_noodles} then turns each noodle $\noodle$
into a refined automata assignment $\autass'$ in~$\tightset$ by unifying/intersecting languages of different occurrences of the same variable:
for each $x\in \vars$, $\autass'(x)$ is the automata intersection of all automata $\noodle(i)$ with $x_i = x$.
The fact that $\tightset$ is a~tight set of refinements (i.e., that it preserves
all solutions of~$\autass$) follows from 
that every path of $\prod$ can be found in $\noodles$ and that the use of
$\epsilon$-transitions allows to reconstruct the NFAs
\mbox{corresponding to the variables.}
\vspace{-1mm}

\newcommand{
\begin{wrapfigure}[4]{r}{1.0cm}
\vspace*{-8mm}
\hspace*{-2mm}
\begin{minipage}{1.0cm}
\begin{tikzpicture}[->,>=stealth',shorten >=0pt,auto,node
  distance=15mm,transform shape,scale=0.8]
  \tikzset{every node/.style={rounded corners=1mm}}

  \node[draw] (xyx<zu) {$\graphnode{xyx}{zu}$};
  \node[draw,below of=xyx<zu,yshift=2mm] (ww<xa) {$\graphnode{ww}{xa}$};

  \draw (xyx<zu) edge (ww<xa);

\end{tikzpicture}
\end{minipage}
\end{wrapfigure}
}[0]{
\begin{wrapfigure}[4]{r}{1.0cm}
\vspace*{-8mm}
\hspace*{-2mm}
\begin{minipage}{1.0cm}
\begin{tikzpicture}[->,>=stealth',shorten >=0pt,auto,node
  distance=15mm,transform shape,scale=0.8]
  \tikzset{every node/.style={rounded corners=1mm}}

  \node[draw] (xyx<zu) {$\graphnode{xyx}{zu}$};
  \node[draw,below of=xyx<zu,yshift=2mm] (ww<xa) {$\graphnode{ww}{xa}$};

  \draw (xyx<zu) edge (ww<xa);

\end{tikzpicture}
\end{minipage}
\end{wrapfigure}
}

\begin{wrapfigure}[4]{r}{1.0cm}
\vspace*{-8mm}
\hspace*{-2mm}
\begin{minipage}{1.0cm}
\begin{tikzpicture}[->,>=stealth',shorten >=0pt,auto,node
  distance=15mm,transform shape,scale=0.8]
  \tikzset{every node/.style={rounded corners=1mm}}

  \node[draw] (xyx<zu) {$\graphnode{xyx}{zu}$};
  \node[draw,below of=xyx<zu,yshift=2mm] (ww<xa) {$\graphnode{ww}{xa}$};

  \draw (xyx<zu) edge (ww<xa);

\end{tikzpicture}
\end{minipage}
\end{wrapfigure}
\beginexample\label{ex:refinement}
  Consider the multi-equation system $\constr$ from \cref{sec:overview} and the vertex 
  $\graphnode{xyx}{zu}$ of its inclusion graph given in
  the right.
  The construction of the 
  product automaton $\prod$ from \cref{alg:refinement} is shown in \cref{fig:overview}.
  The set of noodles $\noodlify(\prod) = \{ \noodle_1, \dots, \noodle_7 \}$ is
  given in \cref{sec:noodles_example} ($\noodle_1$~and~$\noodle_2$ are in
  \cref{fig:overview}).
  On \lnref{ln:for_intersect}, we need to compute intersections of $\noodle_i(1)
  \cap \noodle_i(3)$ for each noodle~$\noodle_i$. These 
  parts of the noodle correspond to the two occurrences of the same variable~$x$.
  The only noodles yielding nonempty languages for~$x$
  are~$\noodle_1$ and~$\noodle_2$.
  The noodle~$\noodle_1$ leads to a~refinement~$\autass_1$ of~$\autass$ where 
  $\langof{\autass_1(x)} = a$ (computed as the intersection of languages
  $\noodle_1(1) = a(ba)^*$ and $\noodle_1(3) = (ba)^*a$) and
  $\langof{\autass_1(y)} = (ba)^*$.
  The~noodle~$\noodle_2$ leads to a refinement~$\autass_2$ of~$\autass$ where
  $\langof{\autass_2(x)} = \epsilon$ (computed as the intersection of
  languages $\noodle_2(1) = (ab)^*$ and $\noodle_2(3) = \epsilon$) and $\langof{\autass_2(y)} = a(ba)^*a$.
  \qed

\vspace{-1mm}
\begin{example}
An example with a non-terminating sequence of refinement steps is $xa=x \land
  x\in a^+$, explained in detail in \cref{sec:add-examples}. Every $i$-th step refines $\lass(x)$ to $x\in a^{i+1}a^*$. Note that many similar examples could be handled by simple heuristics that take into account lengths of strings, already used in other solvers. 
\qed
\end{example}

\vspace{-5mm}
\subsection{Satisfiability Checking by Refinement Propagation}
\vspace{-1.5mm}

\begin{wrapfigure}[20]{r}{6.3cm}
\vspace{-8.5mm}
\algPropagate     
\end{wrapfigure}
The pseudocode of the satisfiability check of $\constr$ is given in \cref{alg:propagate}.
It starts with the automaton assignment $\constrautass$ corresponding to $\constrlass$, 
and it uses graph nodes $\graphnode \sterm \tterm$ not satisfied in the current $\autass$ to refine it, 
that is, to replace $\autass$ by some automaton assignment returned by $\refine(\graphnode \sterm \tterm,\autass)$.

The algorithm maintains the current value of $\autass$ and a worklist $\frontier$ of nodes for which the weak-stability 
condition might be invalidated, either initially or since they were affected by some previous refinement.
Nodes are picked from the worklist, and if the inclusion at a~node is found not satisfied in the current automata assignment $\autass$, the node is used to refine it.
Stability is detected when $\frontier$ is empty---there in no potentially unsatisfied inclusion. 

Since 
$\refine(\graphnode \sterm \tterm,\autass)$ does not return a single language
assignment but a~set of language assignments that refine $\autass$,
the computation spawns an independent branch for each of them. 
\cref{alg:propagate} adds the branches for processing in the queue
$\worklist$.
The branching is disjunctive, meaning stability is returned when a single branch detects stability.
If all branches terminate with an infeasible assignment, 
then the algorithm concludes that the constraint is unsatisfiable.

The worklist and the queue of branches are first-in first-out (this is important for showing termination in \cref{thm:sattermination}). To minimise the number of refinement steps, the nodes are initially inserted in $\frontier$ in 
an order compatible with a \mbox{topological order of the SCCs.}

\vspace{-1mm}
\begin{example}
  Consider again the multi-equation system $\constr$ from \cref{sec:overview} and the inclusion graph
  in Example~2.
  The initial automata 
  assignment $\constrautass$ is then given as $\langof{\constrautass(a)} = \{ a \}$, 
  $\langof{\constrautass(z)} = a(ba)^*$, $\langof{\constrautass(u)} = (baba)^*a$, and $\langof{\constrautass(x)} = \langof{\constrautass(y)} 
  =\langof{\constrautass(w)} = \Sigma^*$. The queue $\worklist$ on
  \lnref{ln:prop:init} of \cref{alg:propagate} is hence 
  initialised as $\worklist = \lifo{(\constrautass, \lifo{\graphnode{xyx}{zu}, \graphnode{ww}{xa}}) }$.
  The computation of the main loop of \cref{alg:propagate} then proceeds as
  follows.

  \begin{description}
    \vspace{-1mm}
    \item[1st iteration.]
      The dequeued element is $(\constrautass, \lifo{\graphnode{xyx}{zu},
      \graphnode{ww}{xa}})$ and $v$ (dequeued from~$\frontier$) is
      $\graphnode{xyx}{zu}$.
      The condition on \lnref{line:inclusiontest} is not
      satisfied, hence the algorithm calls
      $\refine(\graphnode{xyx}{zu}, \constrautass)$.
      The refinement 
    yields two new automata assignments, $\autass_1, \autass_2$, which are
      defined in Example~2.   
    The queue $\worklist$ is hence extended to $\lifo{(\autass_1,
      \lifo{\graphnode{ww}{xa}}), (\autass_2, \lifo{\graphnode{ww}{xa}})}$.

    \item[2nd iteration.]
      The dequeued element is $(\autass_2, \lifo{\graphnode{ww}{xa}})$.
      The condition on \lnref{line:inclusiontest} is not satisfied since
      $\langof{\autass_2(x)} = \{ \epsilon \}$ 
    and $\langof{\autass_2(w)} = \Sigma^*$. In this case,
      $\refine(\graphnode{ww}{xa}, \autass_2) = \emptyset$, 
    hence nothing is added to $\worklist$, i.e., $\worklist = \lifo{(\autass_1, \lifo{\graphnode{ww}{xa}})}$.

    \item[3rd iteration.]
      The dequeued element is $(\autass_1, \lifo{\graphnode{ww}{xa}})$.
      The condition on \lnref{line:inclusiontest} is not satisfied
      ($\Sigma^*\concat\Sigma^* \not\subseteq a\concat a$) and
      $\refine(\graphnode{ww}{xa}, \autass_1) = \{\autass_3\}$
      where $\autass_3$ is as $\autass_1$ except that $\autass_3(w)$ accepts only $a$. 
    $\worklist$ is then updated to $\lifo{(\autass_3, \emptyset)}$.

    \item[4th iteration.] The condition on \lnref{ln:prop:sat} is satisfied and the algorithm returns $\sat$.
      \qed
  \end{description}
\end{example}
\vspace{-1mm}

As stated by \cref{thm:soundness} below, 
the algorithm is sound in the general case (an answer is always correct).
Moreover, \cref{thm:chain-free-compl,thm:acyclicIGChainFree} imply that when \cref{alg:inclgraph} is used to construct the inclusion graph, we have a complete algorithm for chain-free constraints.  

\vspace{-1mm}
\begin{restatable}[Soundness]{theorem}{thmSoundness}\label{thm:soundness}
  If $\propagate(\graph_\constreq, \constrautass)$ 
  returns $\sat$, then $\constr$ is satisfiable, and
  if $\propagate(\graph_\constreq, \constrautass)$ returns $\unsat$,
  $\constr$ is unsatisfiable.
\end{restatable}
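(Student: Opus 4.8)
The plan is to prove the two implications of soundness separately, by establishing a precise loop invariant for \cref{alg:propagate} and then invoking \cref{thm:graphwstable}. The key invariant I would maintain is the following: at every iteration, for the current worklist $\worklist$, the original constraint $\constr$ has a solution if and only if there is some pair $(\autass,\frontier)\in\worklist$ such that the multi-equation system with language assignment $\autass$ (restricted to the inclusions in $\graph_\constreq$) has a solution satisfying every inclusion already certified, i.e.\ every node of $\vertices\setminus\frontier$. Informally: the set of branches in the worklist, taken disjunctively, captures exactly the solution space of $\constr$, and $\frontier$ records which inclusions still need to be made stable.

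First I would handle the $\sat$ direction. Suppose the algorithm returns $\sat$ on \lnref{ln:prop:sat}. Then the dequeued pair $(\autass,\emptyset)$ has an empty frontier, meaning every inclusion of the inclusion graph has passed the test on \lnref{line:inclusiontest} under the current $\autass$ (and, crucially, no refinement has invalidated any previously passed inclusion, which is tracked by the re-insertion of affected targets $u$ via the edge set $\edges$). Hence $\autass$ is a feasible language assignment that is stable for $\graph_\constreq$ in the sense required by \cref{thm:graphwstable}. Feasibility follows because a branch with an infeasible automaton (empty language for some variable) is never enqueued: \cref{alg:refinement} discards noodles whose intersection is empty via the \texttt{Continue} on the emptiness check. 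Since $\autass$ refines $\constrlass$ — each refinement step only intersects languages, so the refinement relation $\subseteq$ is preserved transitively — \cref{thm:graphwstable} yields a solution of $\constr$, giving satisfiability.

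For the $\unsat$ direction, I would argue the contrapositive via the invariant: if $\constr$ is satisfiable, the algorithm never falsely returns $\unsat$. The heart is showing that each refinement step \emph{preserves all solutions}, which is exactly the tightness property of $\refine$ established in the excerpt: the set $\tightset$ returned by \cref{alg:refinement} is tight w.r.t.\ $\graphnode\sterm\tterm$, so every solution of $\sterm=\tterm$ under $\autass$ survives in some refinement $\autass'\in\tightset$. Combining tightness across all equations with the correct handling of $\frontier$ — whenever a node $v$ is refined, every edge-successor $u$ (whose inclusion may now be invalidated) is re-inserted into the frontier by \igcondref{cond:edges}'s characterisation of dependencies — shows the invariant is maintained across iterations. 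Consequently, if $\constr$ has a solution it must, by the invariant, be witnessed by some surviving branch, so the worklist cannot become empty; the algorithm therefore does not reach the final \Return{$\unsat$}. Taking the contrapositive gives soundness of the $\unsat$ answer.

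The main obstacle I anticipate is the $\unsat$ direction, specifically the bookkeeping that the frontier $\frontier$ correctly tracks \emph{all} inclusions that a given refinement might have invalidated. One must verify that refining to satisfy $\graphnode\sterm\tterm$ cannot silently break an inclusion $\graphnode{\sterm'}{\tterm'}\in\vertices\setminus\frontier$ that is not an edge-successor of $v$ in $\graph_\constreq$; this rests on \igcondref{cond:edges}, which guarantees an edge exactly when $\sterm$ and $\tterm'$ share a variable, i.e.\ exactly when the language of $\sterm'$ could change. Making this dependency argument airtight — and reconciling it with the subtlety that refinements intersect languages of \emph{all} occurrences of a shared variable — is the delicate step, whereas the $\sat$ direction reduces almost immediately to \cref{thm:graphwstable} once stability and feasibility of the terminal assignment are read off from the loop guards.
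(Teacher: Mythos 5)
Your proposal is correct and follows essentially the same route as the paper's proof: the same loop invariant (every node outside the frontier satisfies its inclusion under that branch's current assignment, maintained across refinements via the edges of \igcondref{cond:edges}, including the self-loop case where $\sterm$ and $\tterm$ share a variable), \cref{thm:graphwstable} to convert an empty frontier into a solution for the $\sat$ answer, and solution preservation (tightness) of $\refine$ to show every solution survives in some branch for the $\unsat$ answer. One small correction to your final paragraph: the edge $(v,u)$ with $u = \graphnode{\sterm'}{\tterm'}$ is the relevant dependency because refining $v$ can shrink $\lass(\tterm')$, the \emph{right-hand} side of $u$ (shrinking $\lass(\sterm')$ can never falsify $\lass(\sterm')\subseteq\lass(\tterm')$), not because ``the language of $\sterm'$ could change.''
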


\vspace{-5mm}
\begin{restatable}{theorem}{thmChainFreeComplete}\label{thm:chain-free-compl}
  If $\graph_\constreq$ is acyclic, then $\propagate(\graph_\constreq, \constrautass)$
  terminates.
\end{restatable}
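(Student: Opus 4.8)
The plan is to model a run of $\propagate$ as a tree and prove that this tree is finite, which gives termination immediately (regardless of the traversal order of the worklist). The root is the initial branch $(\constrautass,\vertices)$ placed on $\worklist$, and the children of a branch $(\autass,\frontier)$ are exactly the branches pushed when it is processed in the main loop of \cref{alg:propagate}: a single child with a strictly smaller frontier when the inclusion test succeeds (a \emph{skip} iteration), and one child $(\autass',\frontier')$ per assignment $\autass'\in\tightset$ when a refinement is performed. Since $\refine$ builds a product automaton and noodlifies it, it returns only finitely many noodles and hence a finite $\tightset$; together with the degree-one skip steps this makes the tree finitely branching, and each call to $\refine$ itself terminates. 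By König's lemma it therefore suffices to prove that the tree has no infinite path.

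The heart of the argument is a bound, along any fixed path of the tree, on the number $r(v)$ of refinements performed at each node $v=\graphnode{\sterm}{\tterm}$. A node is refined only when it is dequeued from $\frontier$ with a failing inclusion test, so $r(v)$ is at most the number of times $v$ is enqueued. Now $v$ is enqueued once at the start (all of $\vertices$ is placed in the initial frontier) and thereafter only in the successor-adding loop that runs after a refinement; concretely, $v$ is re-enqueued only when some node $w$ with $(w,v)\in\edges$ is refined. This yields the recurrence $r(v)\le 1+\sum_{(w,v)\in\edges}r(w)$. Here acyclicity of $\graph_\constreq$ is decisive: it rules out self-loops, so refining $v$ never re-enqueues $v$ itself, and, more generally, it lets us solve the recurrence by induction along a topological order of $\graph_\constreq$. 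Sources have no predecessors and satisfy $r(v)\le 1$, and the inductive step bounds $r(v)$ by a finite sum of already-finite quantities. Hence every $r(v)$ is finite and the total number of refinements along any path is at most $\sum_{v\in\vertices}r(v)<\infty$.

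It remains to exclude infinite paths made up of infinitely many non-refining iterations interleaved with the finitely many refinements. For this I would observe that $\frontier$ is always a subset of the finite set $\vertices$, that a skip iteration removes the dequeued node and adds nothing, and that nodes enter $\frontier$ only during a refinement. Thus between two consecutive refinements the frontier strictly decreases and can fire at most $|\vertices|$ skips before it either becomes empty (the branch returns $\sat$ and the path ends) or a refinement occurs; the same bound applies after the final refinement. Consequently each path performs finitely many iterations, so the tree has no infinite path and, being finitely branching, is finite; $\propagate$ therefore terminates. The main obstacle is the middle step: making precise the bookkeeping that every re-enqueueing of a node is caused by the refinement of a graph-predecessor (via the edge characterisation \igcondref{cond:edges}), and then converting the resulting recurrence into a well-founded induction using acyclicity. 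The interplay between the refinement count and the interleaved skip iterations must be tracked carefully so that neither source of iterations escapes the bound.
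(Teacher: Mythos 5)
Your proposal is correct, but it takes a genuinely different route from the paper's own proof. Both arguments view the run of \cref{alg:propagate} as a finitely branching tree of branch elements and both exploit a topological order of the acyclic graph~$\graph_\constreq$, and both ultimately rest on the same core observation: because of \igcondref{cond:edges} and the bookkeeping in the algorithm, a node can re-enter $\frontier$ only when one of its graph predecessors is refined. The paper turns this into a monotone invariant along each branch: once a prefix $\{v_1,\dots,v_k\}$ of the topological order is disjoint from the frontier, it remains disjoint in all successors, and---using the FIFO discipline of the worklists---every branch makes progress, eventually clearing $v_1$, then $v_2$, and so on, so every branch is finite. You instead run a counting argument: the recurrence $r(v)\le 1+\sum_{(w,v)\in\edges} r(w)$, solved by induction along the topological order, bounds the total number of refinements on any path; the interleaved skip iterations are bounded because each one strictly shrinks the duplicate-free frontier $\frontier\subseteq\vertices$; K\"{o}nig's lemma (with finiteness of $\tightset$, hence finite branching) then finishes. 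What your version buys is independence of the queue discipline: you never use FIFO, since you bound dequeues by enqueues rather than arguing that any particular node is eventually dequeued, so starvation of a node is harmless, whereas the paper's progress step (``each branch contains a vertex with $v_1\notin W'$'') does lean on FIFO to ensure $v_1$ is actually processed. The paper's invariant is arguably shorter given that the algorithm fixes FIFO anyway; yours is more robust to implementation changes and additionally yields an explicit (if exponential) bound on the number of iterations per path.
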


\subsection{Working with Shortest Words}

\cref{alg:propagate} can be improved with a weaker termination condition that takes
into account only shortest words in the languages assigned to variables. Importantly, this
gives us completeness in the $\sat$ case for general constraints, i.e., the algorithm is always guaranteed to return $\sat$ if a~solution exists. 

Let $\Min{\lass}$ be the language assignment obtained from~$\lass$ by assigning to every $x\in \vars$
the set of shortest words from $\lass(x)$
i.e., $\Min{\lass}(x) = \{w \in \lass(x) \mid \forall u \in \lass(x)\colon |w| \leq |u|\}$.
Then, for $\constr\colon \sterm=\tterm\land \bigwedge_{x\in \vars}x\in \constrlass(x)$, we say 
that~$\lass$ is \emph{strongly min-stable for $\constr$} if $\Min \lass$ is stable for~$\constr$.
Similarly, $\lass$ is \emph{weakly min-stable for $\constr$} if $s=t$ is weak and 
$\Min{\lass}(\sterm) \subseteq \lass(\tterm)$. 
Note that for weak min-stability, it is enough to have the min-language only on the left, which gives a weaker condition.
%
%
\cref{thm:main,thm:weaklystable} hold for min-stability and weak min-stability,
respectively, as well (the proof of the min-versions are in fact a part of the proof of the Theorems~\ref{thm:main}~and~\ref{thm:weaklystable}).
%
%
%

%
The min-stability of a multi-equation system is then defined 
in the same way as before, different only in that it uses the min-stability at the nodes instead of  stability.
Namely, in \cref{alg:propagate}, the test $\langof{\autass(\sterm)} \subseteq \langof{\autass(\tterm)}$ on \lnref{line:inclusiontest} is replaced by 
$\minlangof{\autass(\sterm)} \subseteq \langof{\autass(\tterm)}$.  
We call this variant of the algorithm $\propagatemin$. 
Not only that this new algorithm is still partially correct, may terminate after less refinements, and uses a~cheaper test to detect termination,
but, mainly, it is complete in the SAT case: if there is a~solution, then it is guaranteed to terminate for 
any system, no matter whether chain-free or not.
Intuitively, the algorithm in a sense explores the words in the languages of the variables systematically, taking the words ordered by length, the shortest ones first.
Hence, besides that variants
of \cref{thm:soundness,thm:chain-free-compl} with $\propagatemin$ still hold, we
also have \cref{thm:sattermination}:
%

\vspace{-2mm}
\begin{restatable}{theorem}{thmSatTermination}
If $\constr$ is satisfiable, then $\propagatemin(\incl(\constreq), \constrautass)$ terminates.
\label{thm:sattermination}
\end{restatable}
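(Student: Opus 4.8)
The plan is to fix a solution $\ass$ of $\constr$ and show that one distinguished branch of $\propagatemin$ reaches the empty-frontier state after finitely many steps, and then to use the first-in-first-out discipline of the worklist to argue that this branch is reached within a finite amount of total work. Write $\ell_x = |\ass(x)|$ for each $x\in\vars$ and set $L = \max_i |\ass(\sterm_i)|$. Since $\ass$ solves every equation, $L$ also bounds every $|\ass(\tterm_i)|$, and hence bounds the total length of either side of every equation under~$\ass$; this single constant will uniformly control the minimal-word lengths at all nodes of the inclusion graph.

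The first ingredient is an \emph{alive-branch invariant}. Using the tightness of $\refine$ from \cref{alg:refinement} (every solution of $\sterm=\tterm$ under the current assignment survives in some returned refinement), I would maintain inductively a branch $(\autass,\frontier)$ in $\worklist$ for which $\ass$ is still feasible, i.e.\ $\ass(x)\in\langof{\autass(x)}$ for all $x$, and $\ass$ solves every equation. Whenever the alive branch is refined at a node, tightness yields at least one child $\autass'\in\tightset$ under which $\ass$ is still feasible and solving, and that child is chosen as the continuation of the alive branch. Consequently the minimal length $m(x) := \min\{|w| \mid w\in\langof{\autass(x)}\}$ satisfies $m(x)\leq\ell_x\leq L$ throughout the alive branch.

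The heart of the argument is a \emph{progress lemma}: along the alive branch every refinement strictly decreases the measure $\mu(\autass)=\sum_{x\in\vars}\bigl|\langof{\autass(x)}\cap\Sigma^{\leq L}\bigr|$. A refinement at $\graphnode\sterm\tterm$, with $\sterm=x_1\concats x_n$, is triggered only when the test $\minlangof{\autass(\sterm)}\subseteq\langof{\autass(\tterm)}$ fails, so there is a shortest word $w\in\minlangof{\autass(\sterm)}$ with $w\notin\langof{\autass(\tterm)}$. Here $|w|=\sum_i m(x_i)\leq L$, and its factorisation $w=w_1\concats w_n$ into blocks of lengths $m(x_1),\dots,m(x_n)$ is \emph{forced}: in any child every factor lies in $\langof{\autass'(x_i)}\subseteq\langof{\autass(x_i)}$ and hence has length at least $m(x_i)$, and since these lengths already sum to $|w|$ each factor must have length exactly $m(x_i)$. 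Because $w\notin\langof{\autass(\tterm)}$, the word $w$ lies in no noodle of $\prod$ and therefore in no child's language of $\sterm$; by the rigid factorisation this forces, in every child and in particular the alive one, the removal of some factor $w_i$ of length $m(x_i)\leq L$ from the language of $x_i$, strictly decreasing $\mu$.

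Since $\mu(\constrautass)\in\nat$ and $\mu$ strictly decreases at each refinement of the alive branch, that branch performs only finitely many refinements; between refinements it only pops nodes that pass the test, each strictly shrinking $\frontier$ without adding nodes, while refinements add only finitely many successors. Hence the alive branch reaches $\frontier=\emptyset$ after finitely many iterations and returns $\sat$. To lift this to termination of the whole procedure I would use that $\worklist$ is processed first-in-first-out and is finite at every iteration (it starts with one element and each iteration enqueues finitely many branches, as $\noodlify$ produces finitely many noodles): no enqueued branch is starved, so each successive state of the alive branch is dequeued after finitely many iterations, and its empty-frontier state is thus processed within finitely many total iterations. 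The main obstacle I anticipate is the progress lemma, specifically the rigid-factorisation step: one must check that the offending minimal word is eliminated in \emph{every} child rather than merely in their union, that the single bound $L$ controls minimal-word lengths at all nodes simultaneously, and that repeated occurrences of a variable (which $\refine$ intersects) are treated correctly; a secondary delicate point is invoking the finiteness and fairness of the FIFO worklist to rule out divergence on other, possibly non-terminating, branches before the alive one is completed.
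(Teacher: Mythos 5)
Your proposal is correct and follows essentially the same route as the paper's (sketched) proof: fix a solution, use the solution-preservation (tightness) of $\refine$ to keep one alive branch, show that every refinement on that branch makes progress in a well-founded measure over words whose lengths are bounded by the fixed solution, and invoke the FIFO discipline of $\worklist$ so that the alive branch cannot be starved by other, possibly diverging, branches. The only difference is presentational: where the paper argues per variable that either the shortest-word length grows or the set of minimal words strictly shrinks, you package the same progress into the single counter $\mu(\autass)=\sum_{x\in\vars}\bigl|\langof{\autass(x)}\cap\Sigma^{\leq L}\bigr|$, and your rigid-factorisation argument supplies exactly the detail that the paper's proof idea leaves implicit.
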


\newcommand{\figpyexothers}[0]{
\begin{figure}[t]
  \begin{subfigure}[t]{0.47\linewidth}
  \begin{center}
  \includegraphics[width=\linewidth,keepaspectratio]{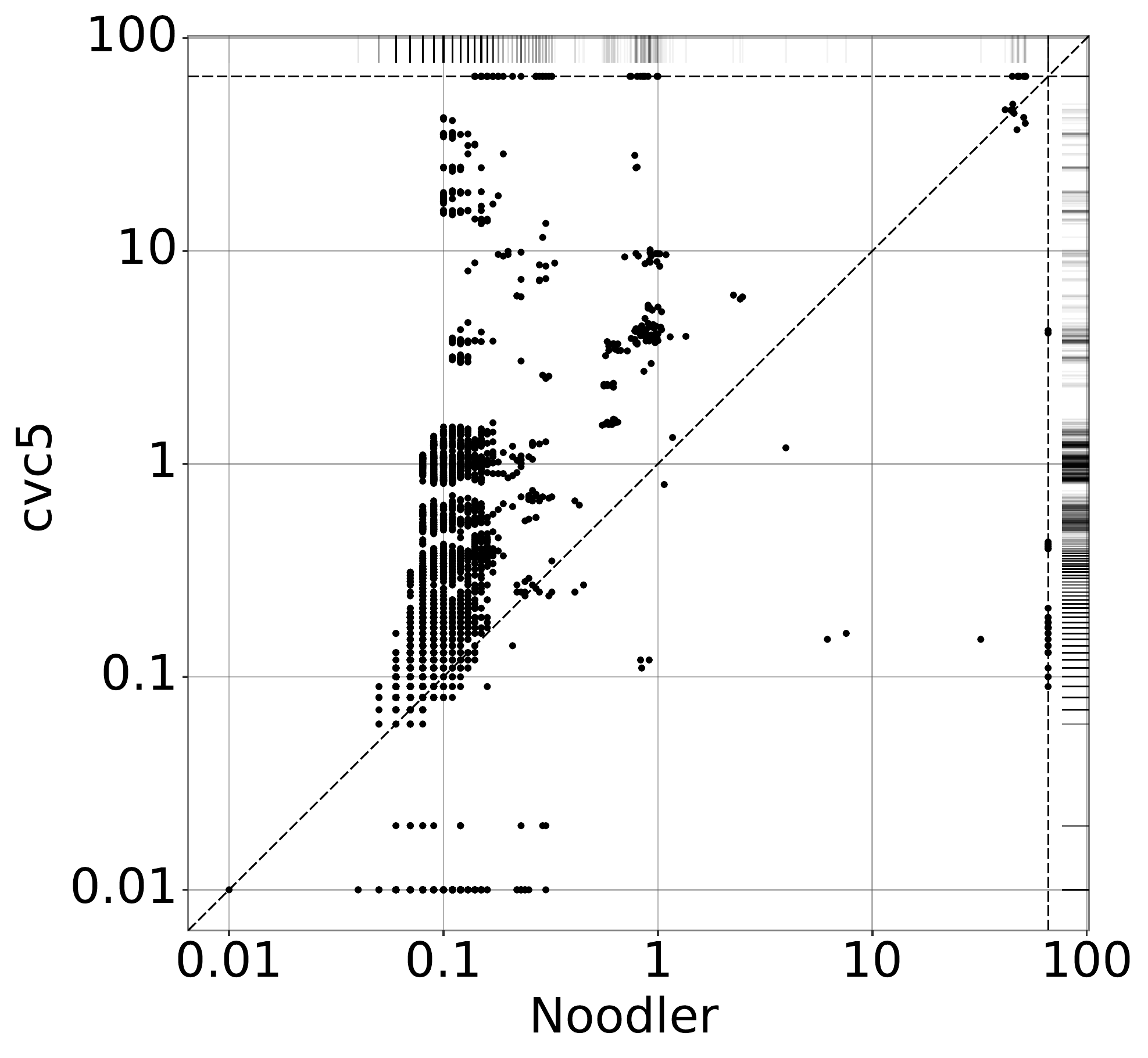}
  \end{center}
  \vspace{-5mm}
  \caption{\noodler vs.\ $\cvcv$.}
  \label{fig:pyex-cvc5}
  \end{subfigure}
  \hfill
  %
  \begin{subfigure}[t]{0.47\linewidth}
  \begin{center}
  \includegraphics[width=\linewidth,keepaspectratio]{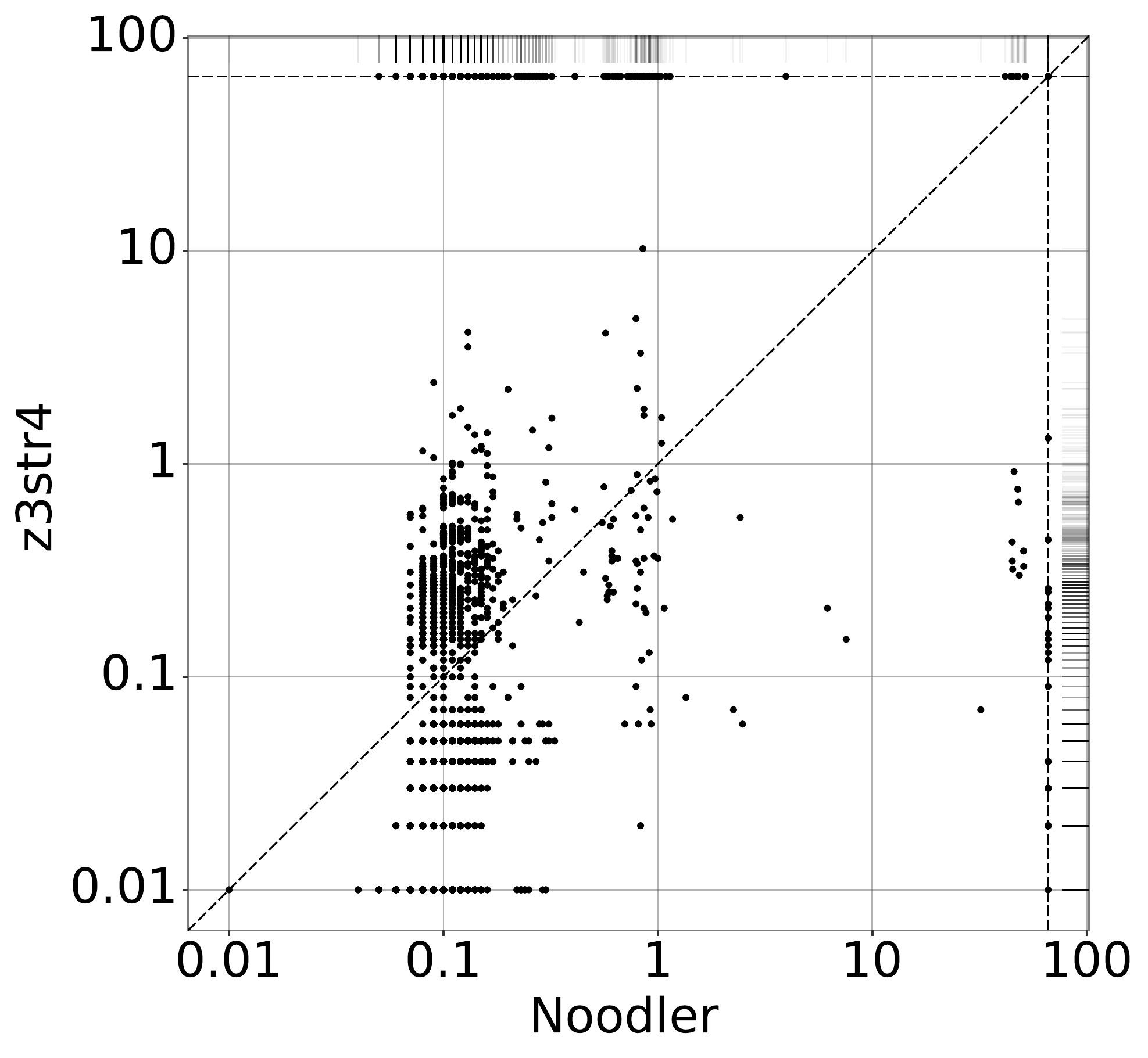}
  \end{center}
  \vspace{-5mm}
  \caption{\noodler vs.\ $\ziiistriv$.}
  \label{fig:pyex-cvc5}
  \end{subfigure}
  \vspace{-2mm}
  \caption{The performance of \noodler and other tools on $\pyexhard$.
    Times are given in seconds, axes are logarithmic.
    Dashed lines represent timeouts (60\,s).
    }
  \vspace*{-3mm}
\label{fig:scatter}
\end{figure}
}

\newcommand{\figCactus}[0]{
\begin{wrapfigure}[15]{r}{5.8cm}
\vspace*{-11.8mm}
\hspace*{-2mm}
\begin{minipage}{6cm}
\includegraphics[width=\linewidth,keepaspectratio]{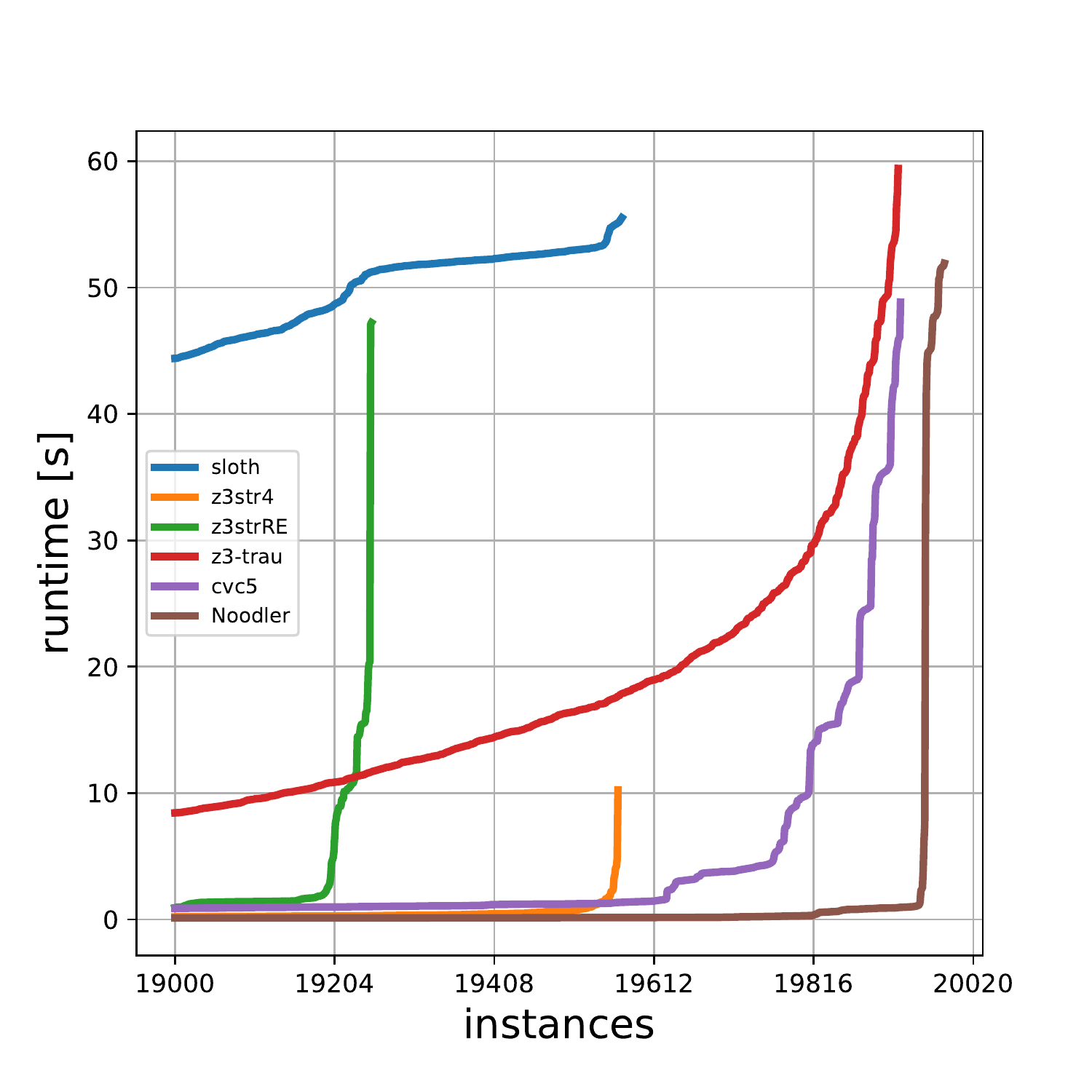}
\end{minipage}
\vspace*{-6mm}
\caption{Times for solving the hardest 1,023 formulae for the tools on \pyexhard}
\label{fig:cactus}
\end{wrapfigure}
}

\newcommand{
\begin{table}[t]
\caption{
  Results of experiments.
  For each benchmark and tool, we give the number of timeouts (``T/Os''), the
  total run time (in seconds), and the run time without timeouts
  (``time$-$T/O'').
  Best values are in \textbf{bold}.
}
\label{tab:results}
\hspace*{-4mm}
\resizebox{1.04\textwidth}{!}{
\newcolumntype{g}{>{\columncolor{Gray!30}}r}
\newcolumntype{f}{>{\columncolor{Gray!30}}l}
\newcolumntype{h}{>{\columncolor{Gray!30}}c}
\newcommand{\ping}[0]{\bf}
\begin{tabular}{lgggrrrgggrrr}
\toprule
                   & \multicolumn{3}{h}{\pyexhard (20,023)} & \multicolumn{3}{c}{\kaluzahard (897)} & \multicolumn{3}{h}{\strii (293)} & \multicolumn{3}{c}{\slog (1,896)}\\
                   & \multicolumn{1}{h}{T/Os} & \multicolumn{1}{h}{time} & \multicolumn{1}{h}{time$-$T/O} &
                     \multicolumn{1}{c}{T/Os} & \multicolumn{1}{c}{time} & \multicolumn{1}{c}{time$-$T/O} &
                     \multicolumn{1}{h}{T/Os} & \multicolumn{1}{h}{~time} & \multicolumn{1}{h}{time$-$T/O} &
                     \multicolumn{1}{c}{T/Os} & \multicolumn{1}{c}{time} & \multicolumn{1}{c}{time$-$T/O} \\
\midrule
\rowcolor{GreenYellow}
\noodler      & \ping 39 & \ping 5,266 & 2,926     & \ping 0 & \ping 46 & 46       & 3                       & \ping 198 & 18      & \ping 0                             & 165      & 165     \\
\ziii         & 2,802    & 178,078     & 9,958     & 207     & 15,360   & 2,940    & 149                     & 8,955     & 15      & 2                                   & 332      & 212     \\
\cvcv         & 112      & 12,523      & 5,803     & \ping 0 & 55       & 55       & 92                      & 5,525     & \ping 5 & \ping 0                             & \ping 14 & \ping 14\\
\ziiistriiire & 814      & 49,744      & 904       & 10      & 622      & 22       & 149                     & 8,972     & 32      & 55                                  & 4,247    & 947     \\
\ziiistriv    & 461      & 28,114      & \ping 454 & 17      & 1,039    & \ping 19 & 154                     & 9,267     & 27      & 208                                 & 16,508   & 4,028   \\
Z3-\trau      & 108      & 33,551      & 27,071    & \ping 0 & 201      & 201      & 10                      & 724       & 124     & 5                                   & 970      & 670     \\
\ostrich      & 2,979    & 214,846     & 36,106    & 111     & 14,912   & ~8,252   & 238                     & ~14,497    & 217     & 2                                   & 13,601   & 13,481  \\
\sloth        & 463      & ~371,373    & ~343,593  & \ping 0 & 3,195    & 3,195    & \multicolumn{3}{h}{N/A}                       & 202                                 & ~24,940   & ~12,820 \\
\retro        & 3,004    & 199,107     & 18,867    & 148     & 16,404   & 7,524    & \ping 1                 & 299       & 239     & \multicolumn{3}{c}{N/A}         \\
\bottomrule
\end{tabular}

}
\vspace{-4mm}
\end{table}
}[0]{
\begin{table}[t]
\caption{
  Results of experiments.
  For each benchmark and tool, we give the number of timeouts (``T/Os''), the
  total run time (in seconds), and the run time without timeouts
  (``time$-$T/O'').
  Best values are in \textbf{bold}.
}
\label{tab:results}
\hspace*{-4mm}
\resizebox{1.04\textwidth}{!}{
\newcolumntype{g}{>{\columncolor{Gray!30}}r}
\newcolumntype{f}{>{\columncolor{Gray!30}}l}
\newcolumntype{h}{>{\columncolor{Gray!30}}c}
\newcommand{\ping}[0]{\bf}
\begin{tabular}{lgggrrrgggrrr}
\toprule
                   & \multicolumn{3}{h}{\pyexhard (20,023)} & \multicolumn{3}{c}{\kaluzahard (897)} & \multicolumn{3}{h}{\strii (293)} & \multicolumn{3}{c}{\slog (1,896)}\\
                   & \multicolumn{1}{h}{T/Os} & \multicolumn{1}{h}{time} & \multicolumn{1}{h}{time$-$T/O} &
                     \multicolumn{1}{c}{T/Os} & \multicolumn{1}{c}{time} & \multicolumn{1}{c}{time$-$T/O} &
                     \multicolumn{1}{h}{T/Os} & \multicolumn{1}{h}{~time} & \multicolumn{1}{h}{time$-$T/O} &
                     \multicolumn{1}{c}{T/Os} & \multicolumn{1}{c}{time} & \multicolumn{1}{c}{time$-$T/O} \\
\midrule
\rowcolor{GreenYellow}
\noodler      & \ping 39 & \ping 5,266 & 2,926     & \ping 0 & \ping 46 & 46       & 3                       & \ping 198 & 18      & \ping 0                             & 165      & 165     \\
\ziii         & 2,802    & 178,078     & 9,958     & 207     & 15,360   & 2,940    & 149                     & 8,955     & 15      & 2                                   & 332      & 212     \\
\cvcv         & 112      & 12,523      & 5,803     & \ping 0 & 55       & 55       & 92                      & 5,525     & \ping 5 & \ping 0                             & \ping 14 & \ping 14\\
\ziiistriiire & 814      & 49,744      & 904       & 10      & 622      & 22       & 149                     & 8,972     & 32      & 55                                  & 4,247    & 947     \\
\ziiistriv    & 461      & 28,114      & \ping 454 & 17      & 1,039    & \ping 19 & 154                     & 9,267     & 27      & 208                                 & 16,508   & 4,028   \\
Z3-\trau      & 108      & 33,551      & 27,071    & \ping 0 & 201      & 201      & 10                      & 724       & 124     & 5                                   & 970      & 670     \\
\ostrich      & 2,979    & 214,846     & 36,106    & 111     & 14,912   & ~8,252   & 238                     & ~14,497    & 217     & 2                                   & 13,601   & 13,481  \\
\sloth        & 463      & ~371,373    & ~343,593  & \ping 0 & 3,195    & 3,195    & \multicolumn{3}{h}{N/A}                       & 202                                 & ~24,940   & ~12,820 \\
\retro        & 3,004    & 199,107     & 18,867    & 148     & 16,404   & 7,524    & \ping 1                 & 299       & 239     & \multicolumn{3}{c}{N/A}         \\
\bottomrule
\end{tabular}

}
\vspace{-4mm}
\end{table}
}


\vspace{-6.0mm}
\section{Experimental Evaluation}
\vspace{-1.0mm}


We implemented our algorithm in a~prototype string solver called
\noodler~\cite{Noodler} using Python and a~homemade C++ automata library for manipulating NFAs.
We compared the performance of \noodler with a~comprehensive selection of other
tools, namely,
\cvcv~\cite{cvc5} (version 1.0.1),
\ziii~\cite{z3} (version 4.8.14),
\ziiistriiire~\cite{Z3str3RE},
\ziiistriv~\cite{MoraBKNG21},
\ziiitrau~\cite{holik_popl_20},
\ostrich~\cite{AnthonyComplex2019},
\sloth~\cite{holik_string_2018}, and
\retro~\cite{ondravojtastrings20}.
In order to have a~meaningful comparison with compiled tools (\cvcv,
\ziii, \ziiistriiire, \ziiistriv, \ziiitrau), the reported time for \noodler
does not contain the startup time of the Python interpreter and the time taken
by loading libraries (this is a~constant of around~1.5\,s).
To be fair, one should take this into account when considering the time of
other interpreted tools, such as \ostrich, \sloth (both Java), and \retro
(Python). 
As can be seen from the results, it would, however, not significantly impact
the overall outcome.
The experiments were executed on a~workstation with an Intel Core i5
661 CPU at 3.33\,GHz
with 16\,GiB of RAM
running Debian GNU/Linux.
\mbox{The timeout was set to 60\,s.}

\vspace{-2mm}
\paragraph{Benchmarks.}
We consider the following benchmarks, having removed unsupported formulae
(i.e., formulae with length constraints or transducer operations).
\begin{itemize}
  \item
    \pyexhard (\cite{ondravojtastrings20}, 20,023 formulae): it comes from the
    \pyex benchmark~\cite{cvc417}, in particular, it is obtained from 967
    difficult instances that neither \cvciv nor \ziii could solve in 10\,s.
    \pyexhard then contains 20,023 conjunctions of word equations that \ziii's
    DPLL(T) algorithm sent to its string theory solver when trying to solve
    them.

  \item
    \kaluzahard (897 formulae): it is obtained from the \kaluza
    benchmark~\cite{saxena_symbolic} by taking hard formulae from its solution
    in a similar way for \pyexhard.

  \item
    \strii (\cite{LeH18}, 293 formulae) the original benchmark
    from~\cite{LeH18} contains 600 hand-crafted formulae including word
    equations and length constraints; the 307 formulae containing length
    constraints are removed.

  \item 
    \slog (\cite{fang-yu-circuits}, 1,896 formulae) contains 1,976 formulae
    obtained from real web applications using static analysis tools
    \jsa~\cite{ChristensenMS03} and \stranger~\cite{Stranger}.
    80~of these formulae contain transducer operations (e.g., ReplaceAll).
\end{itemize}
%
From the benchmarks, only \slog initially contains regular constraints.
Note that an interplay between equations and regular constraints happens in our algorithm even with pure equations on the input. 
Refinement of regular constraints is indeed the only means in which our algorithm accumulates information.
Complex regular constraints are generated by refinement steps from an initial
assignment of $\Sigma^*$ for every variable.
We also include useful constraints
in preprocessing steps, for instance, the equation $z = x a y$
where~$x$ and~$y$ do not occur elsewhere is substituted by $z \in \Sigma^* a \Sigma^*$.

\begin{table}[t]
\caption{
  Results of experiments.
  For each benchmark and tool, we give the number of timeouts (``T/Os''), the
  total run time (in seconds), and the run time without timeouts
  (``time$-$T/O'').
  Best values are in \textbf{bold}.
}
\label{tab:results}
\hspace*{-4mm}
\resizebox{1.04\textwidth}{!}{

}
\vspace{-4mm}
\end{table}

\paragraph{Results.}
The results of experiments are given in \cref{tab:results}.
For each benchmark, we list the number of timeouts (i.e., unsolved formulae), the total run time (including timeouts), and also the run time on the successfully decided formulae.
The results show that from all tools, \noodler has the lowest number of timeouts
on the aggregation of all benchmarks (42 timeouts in total) and also on each
individual benchmark
(except \strii where it is the second lowest, 3~against~1).
Furthermore, in all benchmarks except \slog, \noodler is faster than other tools (and for \slog it is the second).
The results for \sloth on \strii are omitted because \sloth was incorrect on
this benchmark (the benchmark is not straight-line) and the results for \retro
on $\slog$ are omitted because \retro does not support regular constraints.

\figCactus   
In \cref{fig:scatter}, we provide scatter plots comparing the run times of
\noodler with the best competitors, \cvcv and \ziiistriv, on the \pyexhard
benchmark (scatter plots for the other benchmarks are less interesting and can
be found in \cref{sec:det-results}).
We can see that there is indeed a~large number of benchmarks where \noodler is
faster than both competitors (and that the performance of \noodler is more
stable, which may be caused by the heuristics in the other tools not
always working well).
Notice that \noodler and \cvcv are on this benchmark complementary: they have
both some timeouts, but each formula is solved by at least one of the tools.

Moreover, in \cref{fig:cactus}, we provide a~graph showing times needed to solve 1,023 most difficult formulae for the tools on the \pyexhard benchmark.


\figpyexothers   


\vspace{-2mm}
\paragraph{Discussion.}
The results of the experiments show that our algorithm (even in its prototype
implementation in Python) can beat well established solvers such as \cvcv,
\ziii, and \ziiistriv.
In particular, it can solve more benchmarks, and also the average time for
(successfully) solving a~benchmark is low (as witnessed by the ``time$-$T/O''
column in~\cref{tab:results}).
The scatter plots also show that it is often complementary to other solvers.

\vspace*{-2mm}
\section{Related Work}
\vspace*{-1mm}
\enlargethispage{2mm}

Our algorithm is an improvement of the automata-based algorithm first proposed in~\cite{AutomataSplitting},
which is, at least in part, used as the basis of several string solvers, namely,
\norn~\cite{%
AutomataSplitting,%
Norn,%
ChainFree%
},
\trau~\cite{%
holik_popl_20,%
Trau,%
Flatten,%
Notsubstring%
},
\ostrich~\cite{%
LB16,%
AnthonyReplaceAll2018,%
AnthonyComplex2019%
},
%
and \ziiistriiire~\cite{%
Z3str3RE%
}.
The original algorithm first transforms equations to the disjunction of their solved forms \cite{solvedform} through generating alignments of variable boundaries on the equation sides (essentially an incomplete version of Makanin's algorithm).  
%
Second, it eliminates concatenation from regular constraints
by \emph{automata splitting}.
The algorithm replaces $x\concat y\in L$ by a disjunction of cases 
$x\in L_x \land y\in L_y$, one case for each state of $L$'s automaton.
Each disjunct later entails testing emptiness of $L_x\cap \lass(x)$ and
$L_y\cap \lass(y)$ by the automata product construction. 
\trau uses this algorithm within an unsatisfiability check. 
\trau's main solution finding algorithm also performs a~step similar to our
refinement, though with languages underapproximated as arithmetic formulae
(representing their Parikh images). 
\sloth~\cite{holik_popl_20} implements a compact version of automata splitting through alternating automata. 
\ostrich has a way of avoiding the variable boundary alignment for the straight-line formulae, although still uses it outside of it.
\ziiistriiire optimises the algorithm of~\cite{AutomataSplitting} heavily by the use of length-aware heuristics. 

The two levels of disjunctive branching (transformation into solved form and
automata splitting) are costly. 
For instance, for $xyx=zu \land z\in a(ba)^* \land u \in (baba)^*a$
(a~subformula of the example in \cref{sec:overview}),  
there would be 14 alignments/solved forms, e.g.~those characterised using
lengths as follows:
(1)~$|zu|=0$; 
(2)~$|y|=|zu|$; 
(3)~$|x|<|z|,|y|=0$; 
(4)~$|xy|<z,|y|>0$;
(5)~$|x|<|z|,|xy|>z$;
\ldots
In the case~(5) alone---corresponding to the solved form $z=z_1z_2,u=u_1z_1,
x=z_1,y=z_2u_1$---automata splitting would generate 15 cases from $z_1z_2 \in
\lass(z)$ and $u_1u_2\in \lass(u)$, each entailing one intersection emptiness
check (the NFAs for~$z$ and~$u$ have~3 and~5 states respectively).
There would be about a hundred of such cases overall.
On the contrary, our algorithm generates only 9 of equivalent cases, 7 if optimised (see \cref{sec:overview}).

Our algorithm has an advantage also over pure automata splitting, irrespective of aligning equations. 
For instance, consider the constraint $xyx\in L \land x\in \lass(x) \land y\in \lass(y)$. 
Automata splitting generates a~disjunction of $n^2$ constraints $x \in L_x \land y\in L_y$, with~$n$ being the number of states of the automaton for~$L$, 
each constraint with emptiness checks for $\lass(x)\cap L_x$ and $\lass(y)\cap L_y$. 
Our algorithm avoids generating much of these cases by intersecting with the languages of $\lass(x)$ and $\lass(y)$ early---the construction of $\lass(x)\concat\lass(y)\concat\lass(x)$ prunes much of $L$'s automaton immediately.  
For instance, if $L=(ab)^*a^+ (abcd)^*$ (its NFA has 7~states) and $\lass(x) =
(a+b)^*$, automata splitting explores $7^2 = 49$ cases while our algorithm
explores 9 (7~when optimised) of these cases---it would compute the same
product and noodles as in \cref{sec:overview}, essentially ignoring the
disjunct $(abcd)^*$ of $L$.

Approaches and tools for string solving are numerous and diverse, 
with various representations of constraints, algorithms, or sorts of inputs.
Many approaches use automata, e.g.,
\stranger~\cite{Stranger,fmsd14,yu2011}, 
\norn~\cite{AutomataSplitting,Norn},
\ostrich~\cite{LB16,AnthonyReplaceAll2018,AnthonyComplex2019,AnthonyRegex2022,AnthonyInteger2020},
\trau~\cite{ChainFree,Trau,Flatten,Notsubstring},
\sloth~\cite{holik_popl_20},
\slog~\cite{fang-yu-circuits},
Slent~\cite{slent},
\ziiistriiire~\cite{Z3str3RE},
\retro~\cite{ondravojtastrings20},
ABC~\cite{ABCpaper,ABCtool},
Qzy~\cite{arlen}, or
BEK~\cite{BEK}.
Around word equations are centered tools such as CVC4/5~\cite{cvc4_string14,tinelli-fmsd16,tinelli-hotsos16,tinelli-frocos16,cvc417,cvc422,cvc420},
\ziii~\cite{BTV09,z3},
S3~\cite{S3},
\kepler~\cite{LeH18},
StrSolve~\cite{HW12},
Woorpje~\cite{DayEKMNP19};
bit vectors are (among other things) used in
Z3Str/2/3/4~\cite{Z3-str,Z3Str3,Z3str4,Z3-str15},
HAMPI~\cite{HAMPI};
PASS~uses arrays \cite{PASS};
G-strings~\cite{gstrings} and GECODE+S~\cite{gecode+s} use SAT-solving.
Most of these tools and methods handle much wider range of string constraints than equations and regular constraints.
%
Our algorithm is not a complete alternative but a promising basis that could improve some of the existing solvers and become a core of a new one.
With regard to equations and regular constraints, the fragment of chain-free constraints~\cite{ChainFree} that we handle, 
handled also by \trau, 
is the largest for which any string solvers offers formal completeness guarantees, 
with the exception of quadratic equations, handled, e.g., by~\cite{ondravojtastrings20,LeH18}, which are incomparable but of a smaller practical relevance (although some tools actually implement Nielsen's algorithm~\cite{nielsen1917} to handle simple quadratic cases).
The other solvers guarantee completeness on smaller fragments, notably that of \ostrich (straight-line), \norn, and \ziiistriiire; 
or use incomplete heuristics that work in practice (giving up guarantees of termination, over or under-approximating by various means).
Most string solvers tend to avoid handling regular expressions, by means of postponing them as much as possible or 
abstracting them into arithmetic/length and other constraints (e.g. \trau, \ziiistriiire, \ziiistriv, CVC4/5, S3). 
A major point of our work is that taking the opposite approach may work even better when automata are approached from the right angle and implemented carefully, though, heuristics that utilise length information or Parikh images would most probably speed up our algorithm as well.
The main selling point of our approach is its efficiency compared to the others, demonstrated on benchmark sets used in other works.
\vspace{-3.0mm}
\section{Conclusion and Future Work}\label{sec:label}
\vspace{-1.0mm}
\enlargethispage{1mm}
We have presented a new algorithm for solving a fragment of word equations with
regular constraints, complete in $\sat$ cases and for the chain-free fragment.  
It is based on a tight interconnection of equations with regular constraints and built around a novel characterisation of satisfiability of a string constraint through the notion of stability.
We have experimentally shown that the algorithm is very competitive with existing solutions, better especially on difficult examples.

We plan to continue from here towards a complete string solver. 
This involves including other types of constraints and coming up with a mature and optimised implementation. 
The core algorithm might also be optimised by using a more compact automata representation of noodles that would eliminate redundancies.  

%
%
%

\bibliographystyle{splncs}
\bibliography{literature}

\newpage
\appendix

\section{Proof of \cref{thm:main}}
\label{section:proof}


Assume a single-equation system $\constr\colon y_1\dots y_n=z_1\dots z_m\land \bigwedge_{x\in \vars}x\in \constrlass(x)$ 
where $\vars = \{ y_1, \dots, y_n, z_1, \dots, z_m \}$. In this section, we sometimes use $(e,\lass)$ to denote the single-equation system. 
%
%
Recall that $\Min{L}$ denotes the words that have the minimum length in the
language $L$. 
%
%



For a word $v = a_1\cdots a_\ell$ and two indices $0<i\leq j\leq \ell$, we write $v[i:j]$ to denote
the \emph{infix} $a_i\cdots a_j$ of $v$ between $i$ and $j$.
We also write $v[i]$ to denote the $i$th character $a_i$.

Most of the proof is concerned with proving a simpler version of \cref{thm:main},
in which each language $\constrlass(x),x\in \X$  consists of words of the same minimal length $\ell_x$. 
The simplified version of \cref{thm:main} is formulated as the following lemma.
\begin{lemma}\label{lemma:fixed}
If $\sys$ is strongly min-stable, then it has a solution.
\end{lemma}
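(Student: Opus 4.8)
The plan is to show directly that a strongly stable, feasible, fixed-length system can be driven to a solution by iteratively committing to letters and closing under the equalities the equation imposes. First I would record the reductions that make the statement concrete. Since every word in $\constrlass(x)$ has the same length $\ell_x$, we have $\Min{\constrlass} = \constrlass$, so strong min-stability is just $\constrlass(\sterm) = \constrlass(\tterm)$; as both sides are nonempty (feasibility) and consist of words of a fixed length, this equality forces $\sum_i \ell_{y_i} = \sum_j \ell_{z_j} =: L$, and moreover the length-$\ell_x$ factorisation of any word on either side is \emph{unique}. Membership of a word $w \in \Sigma^L$ in $\constrlass(\sterm)$ is therefore equivalent to ``every left factor lies in the language of its variable'', and similarly on the right. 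Hence a solution is exactly a word $w \in \constrlass(\sterm) = \constrlass(\tterm)$ that is additionally \emph{consistent} across repeated occurrences: whenever two occurrences (on either side) of the same variable cover positions at equal offset, $w$ must agree on them. The whole difficulty is this consistency, since $\constrlass(\sterm)$ by itself does not enforce that distinct occurrences of one variable take the same value.

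I would then prove the lemma by well-founded induction on the finite quantity $\sum_{x\in\vars} |\constrlass(x)|$ (finite because each $\constrlass(x) \subseteq \Sigma^{\ell_x}$). If every $\constrlass(x)$ is a singleton $\{\ass(x)\}$, then strong stability reads $\ass(y_1)\cdots\ass(y_n) = \ass(z_1)\cdots\ass(z_m)$, i.e.\ $\ass$ is already a solution. Otherwise the inductive step rests on the following claim: a feasible, strongly stable, fixed-length system that is not all-singleton admits a proper refinement $\lass$ (with $\lass(x) \subseteq \constrlass(x)$ for all $x$, and $\lass(x') \subsetneq \constrlass(x')$ for some $x'$) that is again feasible and strongly stable. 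Given the claim, the refinement strictly decreases the measure, so the induction terminates at an all-singleton system, which is a solution.

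To prove the claim I would pick a variable whose language is not a singleton, hence a position $p$ at which different elements of $\constrlass(\sterm)$ carry different letters, choose one admissible letter $b$ there, and \emph{propagate}: fixing $w[p]=b$ forces the letter of the left occurrence and of the right occurrence of the variables covering $p$ (the ``opposite positions'' at the same word index), each of which forces, through all other occurrences of that variable, the letters at further positions, and so on until closure; the refinement $\lass$ restricts each variable's language to the words compatible with every letter fixed during this closure. The main obstacle is to show this refinement is still feasible and strongly stable: one must argue that the equality $\constrlass(\sterm) = \constrlass(\tterm)$ guarantees that at each propagation step a compatible letter exists on both sides, so no language is emptied and the two sides remain equal, and that the propagation itself terminates. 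This is exactly where a carefully designed well-founded ordering on partial solutions is needed, because a single committed letter can cascade across many occurrences and across both sides, so a naive count of undetermined positions need not drop at every primitive step; establishing that the cascade always closes consistently, driven by the stability hypothesis, is the technical heart of the argument.
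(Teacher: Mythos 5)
Your reduction to the fixed-length setting and your outer skeleton are sound, and they essentially mirror the paper's own outer loop: the paper also drives the system to an all-singleton assignment by repeatedly restricting the (finite) languages, reading off the solution in the base case exactly as you do. The genuine gap is that your inductive step rests entirely on the claim that a feasible, strongly stable, fixed-length system that is not all-singleton admits a proper refinement that is again feasible and strongly stable---and you leave that claim unproven, explicitly deferring it as ``the technical heart.'' That claim is not a routine consequence of stability; it \emph{is} the theorem, in disguise. In the paper's terms, the closure of your letter-propagation is an \emph{atom class}: atoms are pairs $(x,i)$ with $1\leq i\leq \ell_x$, and $\sim$ is the smallest equivalence relating the left atom and the right atom at every position. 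Your claim then splits into two statements, each proved separately in the paper and neither sketched by you: (i)~every single class can agree on one letter under some assignment drawn from the current languages (this is what makes your restricted languages nonempty), and (ii)~restricting every language to the words carrying that letter on the class's positions preserves the equality $\lass(\sterm)=\lass(\tterm)$ (this is what keeps the refinement strongly stable).

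Statement (i) is the delicate one, and your own worry about termination points at the right spot without resolving it. The paper proves it by constructing a sequence of \emph{agreeing} atom sets converging to the class, and this sequence is not monotone: taking the smallest \emph{hole} (a position with exactly one of its two atoms in the current set $A$), with missing atom $(x,k)$, the next set is $A\cup\{(x,k)\}\setminus\{(x,l)\mid k<l\leq\ell_x\}$---all earlier commitments about later indices of $x$ are discarded, because the step replaces the entire value of $x$. The replacement value is where stability is actually used: the current left-hand word $w=\ass(y_1)\cdots\ass(y_n)$ lies in $\lass(\tterm)$ by language equality, hence factors uniquely as $v_1\cdots v_m$ with $v_j\in\lass(z_j)$, and one re-assigns $x\mapsto v_j$ for the factor covering the hole; the smallest-hole choice is precisely what guarantees $v_j$ agrees with every letter already fixed for $x$ below index $k$, and at $k$ itself. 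Because commitments get thrown away, termination cannot be argued by counting fixed positions or by language sizes at this inner level (as you yourself note); the paper introduces a lexicographic well-founded order on the per-variable index sequences of the atom sets. Without this construction---or a concrete substitute for it---your central claim is asserted rather than proved, so the proposal as it stands is a correct plan with the essential argument missing.
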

The next lemma explains why \cref{lemma:fixed} implies \cref{thm:main}.
Let $\Min{L}$ denote the words that have the minimum length in the language $L$.
\begin{lemma}\label{lemma:general->fixed}
If $\constrlass(y_1)\cdots \constrlass(y_n) = \constrlass(z_1)\cdots \constrlass(z_m)$, then \\ $\Min{\constrlass}(y_1)\cdots \Min{\constrlass}(y_n) =  \Min{\constrlass}(z_1)\cdots \Min{\constrlass}(z_m)$.
\end{lemma}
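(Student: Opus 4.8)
The plan is to isolate the single nontrivial fact behind this lemma, namely that the operation $\Min{(\cdot)}$ of taking minimal-length words commutes with language concatenation, and then simply apply the hypothesis. First I would prove the auxiliary claim that for any languages $K_1,\dots,K_k$,
\[
  \Min{(K_1\concats K_k)} = \Min{K_1}\concats\Min{K_k},
\]
and afterwards observe that, since $\constrlass(y_1)\concats\constrlass(y_n)$ and $\constrlass(z_1)\concats\constrlass(z_m)$ denote one and the same language, their sets of minimal-length words coincide; rewriting both sides with the claim then yields exactly the desired equality.

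For the auxiliary claim I would argue purely by lengths. Write $\ell_i$ for the minimal length of a word in $K_i$ (handling the degenerate case in which some $K_i=\emptyset$ separately, where both sides reduce to $\emptyset$). Every word $w\in K_1\concats K_k$ admits a decomposition $w=w_1\concats w_k$ with $w_i\in K_i$, whence $|w|=\sum_i|w_i|\geq\sum_i\ell_i$; choosing each $w_i\in\Min{K_i}$ shows the bound $\sum_i\ell_i$ is attained, so the minimal length of the concatenation is exactly $\sum_i\ell_i$. For $\subseteq$, if $|w|=\sum_i\ell_i$ then in any decomposition the inequalities $|w_i|\geq\ell_i$ must all be tight, forcing each $w_i\in\Min{K_i}$ and hence $w\in\Min{K_1}\concats\Min{K_k}$. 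For $\supseteq$, any concatenation of minimal words lies in $K_1\concats K_k$ and has length $\sum_i\ell_i$, so it is of minimal length.

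Granting the claim, the lemma follows immediately: from $\constrlass(y_1)\concats\constrlass(y_n)=\constrlass(z_1)\concats\constrlass(z_m)$ we get the equality $\Min{(\constrlass(y_1)\concats\constrlass(y_n))}=\Min{(\constrlass(z_1)\concats\constrlass(z_m))}$ of sets of minimal-length words, and substituting the claim on each side gives $\Min{\constrlass}(y_1)\concats\Min{\constrlass}(y_n)=\Min{\constrlass}(z_1)\concats\Min{\constrlass}(z_m)$.

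The \emph{main (and essentially only) obstacle} is the tightness step in the $\subseteq$ direction of the auxiliary claim: one must note that a decomposition of a word in $K_1\concats K_k$ need not be unique, yet whatever decomposition is chosen, minimality of the total length forces \emph{every} factor to be of minimal length in its language. This is precisely what makes $\Min{(\cdot)}$ insensitive to the ambiguity inherent in language concatenation, and it is the observation that drives the whole argument.
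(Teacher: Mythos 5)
Your proposal is correct and rests on the same core argument as the paper's (sketched) proof: in any decomposition of a minimal-length word of a concatenation of languages, length minimality forces every factor to be of minimal length in its own language. The paper phrases this as a short proof by contradiction (a non-minimal factor on one side would yield a word shorter than the minimum on the other side), whereas you package it more cleanly as the identity $\Min{(K_1 \concats K_k)} = \Min{K_1}\concats\Min{K_k}$ applied to both sides of the language equality---a more modular and more rigorous write-up of essentially the same idea.
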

\begin{proof}(Idea)
From the equality of the languages, every concatenation $u_1\cdots u_n$ of minimum length words on the left must have an equivalent counterpart $v_1\cdots v_m$ on the right, and vice versa.
The words on the right must be minimal too since otherwise one could compose a shorter word on the right, and its counterpart on the left would be shorter then $u_1\cdots u_n$, which contradicts that $u_1\cdots u_n$ are minimal.
\qed
\end{proof}

\thmMain*
\begin{proof}
  Corollary of \cref{lemma:fixed,lemma:general->fixed}.
  \qed
\end{proof}

To prove \cref{lemma:fixed}, we need to introduce some additional notation and prove some auxiliary Lemmas.

\subsection{Proof of \cref{lemma:fixed}}
 Let us fix a min-length system $\sys = (e,\constrlass)$ where $\constrlass(x)$ contains the shortest strings with the length $\ell_x$. Unless stated otherwise, $\constr$ will implicitly be the system of concern in this section.
 Let $\ell_\sys$ be the \emph{length of $\sys$}, the sum
 $\ell_\constr = \sum_{i=1}^n\ell_{y_i} = \sum_{i=1}^m\ell_{z_i}$. We call a number $p$ between
 $1$ and $\ell_\sys$ a \emph{position}, and we call the elements of the set $\atoms =
 \{(x,i)\mid x\in \X,1\leq i\leq\ell_x\}$
 \emph{atoms}.
We say that
$(y_k,j)$ with $k\leq n,j\leq \ell_{y_k}$ is the \emph{left atom at the position} $p = \sum_{i=1}^{k-1}\ell_{y_i}+j$ and write $\atoml(p) = (y_k,j)$.
Conversely, $(z_k,j)$ with $k\leq m,j\leq \ell_{z_k}$ is the \emph{right atom at the position}
$p = \sum_{i=1}^{k-1}\ell_{z_i}+j$,
written $\atomr(p) = (z_k,j)$, and $\atom(p) = \{\atoml(p),\atomr(p)\}$.
We lift $\atoml$, $\atomr$, and $\atom$ to sets of positions $P$ in the standard manner. We define also the set of positions of a set of atoms $A$ as $\pos(A) = \{p\mid \atom(p)\cap A \neq \emptyset\}$. We will write simply $\pos(\alpha)$ to denote $\pos(\{\alpha\})$.

Let us define the relation ${\sim} \subseteq \atoms \times \atoms$ as the smallest equivalence which relates any two atoms that share a position, i.e., they both belong to $\atom(p)$ for some position $p$.
An \emph{atom class} is then an equivalence class of $\sim$. We will denote by
$[\alpha]_\sim$ the atom class containing  the atom $\alpha$.

We say that atoms in a set $A\subseteq \atoms$ \emph{are in agreement}, or that $A$ \emph{agrees},
if there is an assignment $\ass$ and a symbol $a\in \Sigma$ such that for every atom $(x,i)\in
A$, $\ass(x)[i] = a$.  We may also say that $A$ \emph{agrees on the letter $a$ in the assignment $\ass$}.

It is not hard to see that
an  assignment is a solution if and only if every atom class agrees in it.
The core of our argument is a proof that any single atom class $[\alpha]_\sim$ agrees (note that this does not yet imply that all the classes agree in the same assignment, that will need an additional argument).
\begin{lemma}\label{lemma:oneagrees}
If $S$ is stable, then any single atom class agrees in it.
\end{lemma}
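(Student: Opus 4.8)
The plan is to fix a single atom class $C = [\alpha]_\sim$ and exhibit one letter $a$ together with a feasible assignment $\ass$ (i.e.\ $\ass(x) \in \constrlass(x)$ for all $x$) under which every atom of $C$ carries $a$. Write $L := \constrlass(\sterm) = \constrlass(\tterm)$ for the common language guaranteed by stability, where $\sterm = y_1\cdots y_n$ and $\tterm = z_1\cdots z_m$, and abbreviate the touched positions by $\pos(C)$. Note that for every $p \in \pos(C)$ both $\atoml(p)$ and $\atomr(p)$ lie in $C$ (they are $\sim$-related through $p$), so $C = \atom(\pos(C))$.

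\emph{First, a reduction.} I would show it suffices to find a single word $w \in L$ and a letter $a$ with $w[p] = a$ for all $p \in \pos(C)$. Indeed, fix a left decomposition $w = u_1\cdots u_n$ with $u_k \in \constrlass(y_k)$ (and, for variables occurring only on the right, the analogous right decomposition $w = v_1\cdots v_m$). For a variable $x$ having an atom in $C$, pick any one of its occurrences, say $y_k = x$; then for each $i$ with $(x,i)\in C$ the global position of $u_k[i]$ lies in $\pos(C)$, whence $u_k[i] = a$. Setting $\ass(x) := u_k$, and $\ass(x)$ to an arbitrary word of $\constrlass(x)$ for variables with no atom in $C$, gives a feasible assignment in which every atom of $C$ carries $a$, i.e.\ $C$ agrees.

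\emph{Second, the candidate letter.} For a position $p$ put $\Lambda_p := \{w[p] \mid w \in L\}$. Since $L = \constrlass(y_1)\cdots\constrlass(y_n)$ and, by feasibility, each factor may be chosen freely, $\Lambda_p = \{u[j] \mid u \in \constrlass(y_k)\}$ for $\atoml(p) = (y_k,j)$; symmetrically $\Lambda_p = \{u[j'] \mid u \in \constrlass(z_{k'})\}$ for $\atomr(p) = (z_{k'},j')$. Hence the two atoms sharing $p$ have equal letter-sets, and as $\sim$ is generated by position-sharing this set is constant along $C$; call it $\Lambda_C$, which is nonempty by feasibility. The letter $a$ will be chosen from $\Lambda_C$.

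\emph{Main construction and the obstacle.} The remaining, and hardest, step is to upgrade ``each position of $\pos(C)$ individually admits $a$'' to a single $w \in L$ that equals $a$ on \emph{all} of $\pos(C)$ at once, since several positions of the same variable must be forced to $a$ simultaneously. I would do this by propagation alternating between the two sides: start from any $w \in L$, repeatedly select a position of $\pos(C)$ whose value is not $a$, re-choose the factor of the side containing it so that the position becomes $a$ (possible as $a \in \Lambda_p = \Lambda_C$), and then re-decompose on the opposite side — which is always possible precisely because stability gives $L = \constrlass(\sterm) = \constrlass(\tterm)$, so the altered word is still in $L$ — propagating the correction through cross-side shared atoms and through other occurrences of the affected variable. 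The main obstacle is termination, because a careless repair can undo an earlier one; one must therefore equip partial words with a well-founded measure (a lexicographic or multiset count over $\pos(C)$ ordered compatibly with the direction of propagation) and verify each repair strictly decreases it. This is exactly the ``complex well-founded ordering'' referred to in the sketch of \cref{thm:main}. Conceptually the reason the construction cannot get stuck is structural: stability makes $L$ restricted to $\pos(C)$ factorise both along the left-occurrence partition and along the right-occurrence partition, and connectedness of the class combines these into the full box $\Lambda_C^{\pos(C)}$, which contains the constant word; discharging the corner case of positions lying inside a commonly aligned block is the delicate point that the propagation-with-measure argument is designed to handle.
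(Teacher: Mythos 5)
Your setup is sound: the reduction to finding a single word $w \in L$ that is constant on $\pos(C)$ is correct (together with the observation that both atoms sharing a position have the same letter-set, hence a class-wide set $\Lambda_C$), and it matches what the paper's proof does implicitly. The gap is that the entire technical content of the lemma lies in the step you defer. You write that one ``must equip partial words with a well-founded measure \ldots and verify each repair strictly decreases it,'' but you never define the measure, never specify what a repair does to previously fixed positions, and never prove the decrease. This is not a routine detail: a naive repair loop genuinely fails to terminate, because forcing the letter $a$ at one occurrence of a variable can unfix positions at its other occurrences. The paper's proof (\cref{lemma:core}) resolves this with a specific mechanism your proposal lacks. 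The iterated object is not a word but a set $A \subseteq [\alpha]_\sim$ of atoms that agree; the repair locates the \emph{smallest} hole $\smhole$ (a position with exactly one atom in $A$), takes the factor $v_j$ of the \emph{opposite-side} decomposition of the current word that covers $\smhole$, reassigns the variable $x$ of the missing atom $(x,k)$ to $v_j$, and --- crucially --- \emph{discards} from $A$ all atoms $(x,l)$ with $l > k$, deliberately giving up part of what was already achieved. It is exactly this sacrifice that makes a measure work: the per-variable index sequences decrease lexicographically (the well-ordering $\emptier$), while the smallest-hole choice guarantees that the retained prefix atoms $(x,i)$, $i < k$, still agree under the swapped-in factor. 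A repair that insists on preserving all previously corrected positions, as yours does, has no obvious decreasing quantity, which is precisely why the ``careless repair can undo an earlier one'' problem you mention is fatal rather than cosmetic.

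Separately, your closing structural claim --- that stability makes the restriction of $L$ to $\pos(C)$ equal to the full box $\Lambda_C^{\pos(C)}$ --- is stronger than needed and is not justified by the argument you give. Two-sided factorization of $L$ only yields factorization along the \emph{common refinement} of the left-occurrence and right-occurrence partitions of $\pos(C)$; whenever two positions of the class lie in the same left block \emph{and} the same right block, this argument gives no independence between them at all. That situation arises exactly for self-overlapping occurrences, e.g.\ in $vu = sv$ with $\ell_v = 3$, $\ell_u = \ell_s = 1$: all atoms form one class, and positions $2,3$ lie inside the left occurrence of $v$ (positions $1$--$3$) and also inside the right occurrence of $v$ (positions $2$--$4$). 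This is the ``delicate corner case'' you explicitly set aside, and it is the case the lemma is really about (it is where equations like $xu=vx$ force near-periodic structure). So the proposal correctly frames the problem and identifies the crux, but as written it restates the difficulty rather than resolving it.
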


To prove \cref{lemma:oneagrees},  will for a given atom class $[\alpha]_\sim$ construct a sequence $A_0,\ldots,A_k = [\alpha]_\sim$ of sets of atoms that agree and eventually converge to $[\alpha]_\sim$. The proof we will need two components.
First, to ensure that the sequence terminates, we will argue that each $A_{i+1}$ is smaller then $A_i$ in certain well ordering $\emptier$ on sets of atoms. The ordering is defined as follows.
For each $x\in \X$, we define $seq_A(x)$ as the sequence of its indices $\{i\mid(x,i)\in A\}$ ordered from the smallest on the left.
Let $\lord$ be the usual lexicographic ordering on such sequences (e.g., $ 12379\lord 12468$).
Given another set of atoms $B$, we define $A\succeq_x B$ as $seq_A(x)\lord seq_B(x)$,
and we let $A \emptier B$ if and only if a) for all $x\in \X$, $A\succeq_x B$, and b) for at least one $x\in \X$, $B\not \succeq_x A$ (i.e., $A \succeq_x B$ holds for all variables and is sharp for at least one).
It is not difficult to show that $\emptier$ is a well ordering (It is transitive and antisymmetric. It has a finite domain, hence it has no infinite decreasing sequences.).

The second component needed in the proof of \cref{lemma:oneagrees} is a way of constructing $A_{i+1}$ from $A_i$.
It is the core of the entire proof and it is summarised in the following lemma.

\begin{lemma}\label{lemma:core}
For $\alpha\in\atoms$ and $A\subset[\alpha]_\sim$ that agrees, there is \mbox{$A'\subseteq[\alpha]_\sim$ with $A\emptier A'$}.
\end{lemma}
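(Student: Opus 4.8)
The plan is to enlarge the agreeing set $A$ by one ``boundary'' atom, using stability to stay inside the regular constraints. First I would exploit connectivity. Since $\sim$ makes $[\alpha]_\sim$ a single connected component and $\emptyset \neq A \subsetneq [\alpha]_\sim$, there must be a position $p$ whose two atoms are split by $A$: one of $\atoml(p),\atomr(p)$ lies in $A$ and the other, call it $\beta$, lies in $[\alpha]_\sim\setminus A$. By the left/right symmetry of the argument, assume $\atoml(p)\in A$ and $\beta=\atomr(p)=(z_k,j)$. Let the assignment $\ass$ (respecting the regular constraints) and the letter $a$ witness that $A$ agrees. Then the left word $w=\ass(y_1)\concats\ass(y_n)$ satisfies $w[p]=a$, because $\atoml(p)\in A$.

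Next I would invoke stability. We have $w\in\lass(y_1)\concats\lass(y_n)$, and since $\sys$ is stable this language equals $\lass(z_1)\concats\lass(z_m)$, so $w$ also factorises over the right-hand side. As every word of $\lass(z_l)$ has the fixed length $\ell_{z_l}$, the factorisation $w=v_1\concats v_m$ with $v_l\in\lass(z_l)$ is forced to respect exactly the block boundaries used to define $\atomr$; because $\atomr(p)=(z_k,j)$, the $j$-th letter of the $k$-th block is $v_k[j]=w[p]=a$. I would then define $\ass'$ to coincide with $\ass$ everywhere except $\ass'(z_k)=v_k$. This still satisfies all regular constraints, and now $\ass'$ labels the new atom $\beta$ with $a$.

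Finally I would set $A'\subseteq[\alpha]_\sim$ to consist of $\beta$ together with those atoms of $A$ that remain labelled $a$ under $\ass'$; then $A'$ agrees (on $a$ via $\ass'$). Because only the variable $z_k$ was re-assigned, $seq_{A'}(x)=seq_A(x)$ for every $x\neq z_k$, so the whole comparison collapses to the single coordinate $z_k$, and it remains to show $A\emptier A'$, i.e.\ $seq_A(z_k)\lord seq_{A'}(z_k)$ strictly.

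I expect this last step to be the main obstacle. Re-assigning $z_k$ can simultaneously create and destroy the label $a$ at several of its indices, so the delicate point is to pick the split position $p$ (e.g.\ the leftmost split, exploiting the prefix structure of the propagation) so that the newly agreed index $j$ strictly exceeds every index already agreed for $z_k$ while no smaller agreed index is lost. Under such a choice $seq_{A'}(z_k)$ is a proper lexicographic extension of $seq_A(z_k)$, which gives $A\succeq_{z_k}A'$ strictly and $A\succeq_x A'$ (with equality) for all other $x$, hence $A\emptier A'$. The connectivity and stability steps are routine; the genuine difficulty is this index bookkeeping that pins down precisely which indices of $z_k$ may change and forces the well-ordering to decrease in the required direction.
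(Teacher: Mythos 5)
Your overall route is the same as the paper's: pick a position split by $A$ (the paper's ``hole''), use stability plus the fixed word lengths to refactorise $w=\ass(y_1)\concats\ass(y_n)$ over the right-hand side, reassign only $z_k$ to the block $v_k$, and observe $v_k[j]=w[p]=a$. But the two steps you defer are exactly where the lemma lives, and the fix you sketch for the second one cannot work. The preservation of smaller agreed indices is not routine bookkeeping; it is the technical core, and it requires taking $p$ to be the \emph{smallest} split position. With that choice, every position $q<p$ has its left atom in $A$ if and only if its right atom is in $A$; hence for each $i<j$ with $(z_k,i)\in A$, the position $q=p-j+i$ (inside the same occurrence of $z_k$, hence $q<p$) carries the right atom $(z_k,i)\in A$, so its left atom is also in $A$, so $w[q]=a$ (because $A$ agrees on $a$ under $\ass$ and $w$ is assembled from the left-hand side), and therefore $v_k[i]=w[q]=a$. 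Without this chain you cannot rule out that reassigning $z_k$ destroys an agreed index \emph{below} $j$, and then the ordering does not decrease and the induction of \cref{lemma:oneagrees} could cycle forever. Your proposal only asserts this property (``no smaller agreed index is lost'') rather than proving it.

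The second gap is worse: you cannot choose $p$ so that ``the newly agreed index $j$ strictly exceeds every index already agreed for $z_k$.'' The set $A$ may perfectly well contain atoms $(z_k,l)$ with $l>j$ --- they can enter $A$ through \emph{other} occurrences of $z_k$ in the equation, or at earlier stages of the construction --- and no choice of split position makes them disappear: the smallest hole can sit below already-agreed high indices of the same variable. So ``$seq_{A'}(z_k)$ is a proper lexicographic extension of $seq_A(z_k)$'' is unattainable, and the requirement $A\subseteq A'$ has to be abandoned. The paper's resolution is exactly this: it sets $A' = A\cup\{(z_k,j)\}\setminus\{(z_k,l)\mid j<l\leq\ell_{z_k}\}$, i.e., it \emph{discards} all atoms of the reassigned variable above $j$ (after which agreement of $A'$ under $\ass'$ needs only the argument above together with $v_k[j]=a$), and the well-founded ordering $\emptier$ is set up so that gaining the index $j$ while losing only indices above $j$, with everything below $j$ intact, is a strict decrease in the $z_k$-coordinate and equality in all others. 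Your variant of $A'$ (keep whichever atoms above $j$ happen to survive) would also decrease under that ordering once the first gap is closed, so your construction is repairable --- but the repair consists precisely of the paper's smallest-hole argument and its willingness to shrink $A$, neither of which appears in your proposal.
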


\begin{proof}
We call a position $p$ a \emph{hole} of $A$ if $|\atom(p)\cap A| = 1$, that is, exactly one of the two atoms at $p$ is present in $A$.
The \emph{missing atom} of the hole $p$ of $A$ is the atom $\alpha\in\atom(p)\setminus A$ (note that a hole has exactly one missing atom).
Since $A$ is strictly smaller then $[\alpha]_\sim$, it must have a hole (roughly, if there is no hole, then $\sim$ restricted to pairs incident with $A$ does not contain any pair incident with atoms outside $A$, which means that $A$ is an equivalence class of $\sim$, and this contradicts that $A$ is strictly smaller than the equivalence class $[\alpha]_\sim$).
Let $\smhole$ be the smallest hole in $A$ and let $(x,k)$ be its missing atom.
We will present the proof for the case that the missing atom is on the right, i.e., $(x,k) = \atomr(\smhole)$. The case $(x,k) = \atoml(\smhole)$ can be proved analogously.

The new atom class will be constructed in the form
$$A' = A \cup \{(x,k)\} \setminus \{(x,l)\mid k < l \leq \ell_x\}.$$
It is easy to prove that $A \emptier A'$ (for all variables $y\in \X$ except $x$, $A \succeq_y A'$ holds since $A'$ has the same $y$-atoms as $A$, and $A \succ_x A'$ follows from how $A'$ is constructed from $A$ and the def. of $\succeq_x$).
In the rest of the proof, we will concentrate on the most difficult part, which is showing that $A'$ agrees.

Assume that $A$ agrees on $a$ in an assignment $\nu$.
Let $w = \ass(y_1)\cdots\ass(y_n)$ be the word obtained using the $\ass$-values on the left side of the equation.
Then, from $\constrlass(y_1)\cdots \constrlass(y_n) = \constrlass(z_1)\cdots \constrlass(z_m)$, there must be words $v_1, \ldots, v_m$  from the right languages
	$\constrlass(z_1),\ldots,\constrlass(z_m)$, respectively, such that $w = v_1 \cdots v_m$
(note that this sequence of words does not correspond to an assignment since different occurrences of a variable may have different values).
Let $\smhole$ appear withing the $j$th word $v_j$ in the sequence $v_1 \cdots v_m$, that is,
$\sum_{i=1}^{j-1} \ell_{z_i} < \smhole \leq \sum_{i=1}^j \ell_{z_i}$.
Note that the position $\smhole - k + 1$ corresponds to the first character of $j$th word in the concatenation,
and that positions within this occurrence of $v_j$ can be translated to letters of $v_j$ by
\begin{itemize}
\item[($*$)]
$w[\smhole - k + i] = v_j[i]$ for all $i:1\leq i \leq \ell_x$
\end{itemize}
and that they also have the corresponding atoms of $x$ on the right, i.e.
\begin{itemize}
\item[($**$)]
$\atomr(\smhole - k + i) = (x,i)$ for all $i:1\leq i \leq \ell_x$.
\end{itemize}

We will show that when $\ass'$ is constructed from $\ass$ as $\ass' = \ass \setminus \{x\mapsto \ass(x)\} \cup \{x\mapsto v_j\}$, then $A'$ agrees on $a$ in $\ass'$.
For that, we will first argue that $v_j[i] = a$ for all $i$ s.t. $(x,i)\in A$ and $1\leq i < k$, and then we will show that $v_j[k] = a$ as well.

To show that $v_j[i] = a$ for all $i$ s.t. $(x,i)\in A, 1\leq i < k$, we show that
$w[p] = a$ for all positions in $P = \{\smhole - k + i\mid 1\leq i < k, (x,i)\in A\}$ (the implication follows from $(*)$).
Since $A$ agrees on $a$ in $\ass$,
the words $\ass(y_1),\ldots,\ass(y_n)$ have $a$ on all indices corresponding to the atoms in $A$, that is, $\ass(y_j)[i] = a$ for all $1\leq j \leq n$ and $i$ s.t. $(y_j,i)\in A$.
Therefore, for all positions $p$, we have $\atoml(p)\in A \implies w[p] = a$.
The definition of $\smhole$ says that for all positions $p$ smaller than $\smhole$,
$\atoml(p) \in A \Longleftrightarrow \atomr(p) \in A$.
This equivalence applies to all positions of $P$ because they are smaller then $\smhole$ by definition.
In summary, we have that for all $p \in P$,
we have $p =  \smhole-k+i$ for some $i:1\leq i < k \land (x,i)\in A$, that
$(x,i) = \atomr(p)$ by $(**)$, and that $\atomr(p)\in A \implies \atoml(p) \in A \implies w[p] = a$, which gives $w[p] = a$ for all $p\in P$.

We show that $v_j[k] = a$ by a similar argument.
We know that $\smhole$ has an atom from $A$ on the left (by definition of a hole and our initial assumption that the right atom is missing),
hence $w[\smhole] = a$, and because $w[\smhole] = v_j[k]$ by ($*$), we have $v_j[k] = a$.

We have shown that $\ass' = \ass \setminus \{x\mapsto \ass(x)\} \cup \{x\mapsto v_j\}$ can be used as an assignment in which $A'$ agrees on $a$.

\qed
\end{proof}

With \cref{lemma:core} at hand, we can prove \cref{lemma:oneagrees}.

\begin{proof}[\cref{lemma:oneagrees}]
Given arbitrary $\alpha\in\atoms$,
we will inductively construct the sequence $A_0,\ldots,A_k = [\alpha]_\sim$ of sets of atoms, that will for each $0\leq i < 0$ satisfy two conditions:
\begin{enumerate}
\item
$A_i\subseteq [\alpha]_\sim$,
\item
$A_{i}\emptier A_{i+1}$.
\end{enumerate}

For $i=0$, $A_0$ can be chosen as any singleton $\{\beta\} \subseteq [\alpha]_\sim$.
It satisfies both conditions:
$A_0 \subseteq [\alpha]_\sim$ is trivial, and since $A_0$ is a
singleton, it trivially agrees in any assignment.
The induction step then follows in case $A_i$ is not yet equal to $[\alpha]_\sim$.
In that case, we can use \cref{lemma:core} to derive that there is the needed $A_{i+1}$.
The construction terminates since $\emptier$ is a well ordering. The only way how it can terminate is by $A_k = [\alpha]_\sim$ (this is the only possible situation in which \cref{lemma:core} does induce existence of the next set in the sequence).
\qed
\end{proof}

With \cref{lemma:oneagrees} at hand,  we can now finish the proof of \cref{lemma:fixed}.

\begin{proof}[\cref{lemma:fixed}]
Consider all equivalence classes $\E = \{[\alpha_1]_\sim, \ldots,[\alpha_k]_\sim\}$ of
$\sim$. We will construct a sequence of language assignments $\constrlass = \lass^0,\ldots,\lass^k$,
and a sequence of letters $a_1,\ldots,a_k$ such that for each $i:0\leq i \leq k$,
the following holds:
\begin{enumerate}
\item
$\lass^i(x)$ is a nonempty subset of $\constrlass(x)$ for all $x\in\X$,
\item
$(e,\lass^i)$ is a system, that is, $\lass^i(y_1)\cdots \lass^i(y_n) = \lass^i(z_1)\cdots \lass^i(z_m)$,
\item
in the system $(e,\lass^i)$, all classes $[\alpha_j]_\sim, 1\leq j \leq i$ agree on $a_1,\ldots,a_i$, respectively, in every assignment  (notice that this condition is vacuous for $i=0$).
\end{enumerate}

For each $i:0\leq i < k$, we construct $\lass^{i+1}$ from $\lass^i$ so that
for each $x\in \X$, $\lass^{i+1}(x)$ contains the words of $\lass^i(x)$ due to which $[\alpha_i]_\sim$ could agree on $a_i$---those with $a_i$ on positions marked by the atoms in $[\alpha_i]_\sim$, formally,
$$\lass^{i+1}(x) = \{w\in \lass^i(x) \mid \forall j: (x,j)\in [\alpha_{i+1}]_\sim\rightarrow w[j]=a_{i+1}\}.$$

\medskip
Let us now argue by induction on $i$ that the points 1 to 3 above hold.
For $i=0$, $\lass^i = \constrlass$. Points 1 and 2 are obviously satisfied, 3 as well since it is is vacuous for $i=0$.

	Assume that 1-3 hold for $i$ and let us prove that they hold for $i+1$.
We first prove that 1 holds.
        First, $\lass^{i+1}(x)\subseteq \constrlass(x)$ follows from that 1 holds for $i$, that is, $\lass^{i}(x)\subseteq \constrlass(x)$, and from that $\lass^{i+1}(x)\subseteq \lass^{i}(x)$ by construction.
        Second, $\lass^{i+1}(x) \neq \emptyset$ follows from that
	$\lass^i(x) \neq \emptyset$ by IH and that $[\alpha_i]_\sim$ agrees on $a_i$ in $(e,\lass^i)$ (the definition of agreement implies that $\lass^{i}(x)$ must have at least one word with $w[j]=a_i$ for all $(x,j)\in[\alpha_i]_\sim$, and that word will appear in $\lass^{i+1}(x)$).

	Next, we prove point 2.
	We will show the proof only for $\lass^{i+1}(y_1)\cdots \lass^{i+1}(y_n) \subseteq \lass^{i+1}(z_1)\cdots
	\lass^{i+1}(z_m)$. The opposite inclusion may be proved analogously. Consider
	some $w \in \lass^{i+1}(y_1)\cdots \lass^{i+1}(y_n)$. From IH we have $w\in
	\lass^{i}(y_1)\cdots \lass^{i}(y_n)=  \lass^{i}(z_1)\cdots\lass^{i}(z_m)$.
	We can write $w$ as $w = w_{y_1}\cdots w_{y_n} = w_{z_1}\cdots w_{z_m}$ where $w_x \in
	\lass^i(x)$ for each $x\in\X$.
        Assume for the sake of contradiction that $w\not \in\lass^{i+1}(z_1)\cdots\lass^{i+1}(z_m)$.
        That means that for some $j$, $w_{z_j} \not\in \lass^{i+1}(z_j)$.
        By definition of $\lass^{i+1}(z_j)$, there must be an atom $(z_j,l)\in[\alpha_{i+1}]_\sim$ such that $w_{z_j}[l]\neq a_{i+1}$.
        The other, left, atom (on the position $p=\sum_{o=1}^{o\leq j}\ell_o + l$) also belongs in the equivalence class $[\alpha_{i+1}]_\sim$ by definition of $\sim$. Let it be some $(y_p,q)$. We have that $w_{y_p}[q]$ is not $a_{i+1}$, but this contradicts the definition of $\lass^{i+1}(y_p)$.

Last, we prove point 3. In the system $(e,\lass^i)$, $[\alpha_{i+1}]_\sim$ agrees on some $a_{i+1}$ by \cref{lemma:oneagrees}.
By the IH, all $[\alpha_j]_\sim, 1\leq j < i$ agree on $a_1,\ldots,a_i$, respectively, in all assignments. $\lass^{i+1}$ is then constructed by removing words from $\lass^i$ that do not have the $a_{i+1}$ on the positions of the atoms in $[\alpha_{i+1}]_\sim$.
Therefore, after the removal, only assignments in which $[\alpha_{i+1}]_\sim$ agrees on $a_{i+1}$ are are left.
We have thus proved that the properties 1-3 are satisfied for each $i:0\leq i \leq k$.

The solution of the original system $(e,\constrlass)$ is now obtained as follows.
In $\lass^k$, every language $\lass^k(x),x\in \X$ equals the singleton $\{w_x\}$ such that for each $i:1\leq i\leq \ell_x$,
$w_x[j] = a_i$ where $(x,i)\in [\alpha_i]_\sim$ (since 3 holds for $\lang^k$). We also know that $w_x\in \constrlass(x)$ (since 2 holds for all $0\leq i \leq k$). The map $\ass = \{x\mapsto w_x\mid x\in \X\}$ is then an assignment in $(e,\constrlass)$, and it is also a solution because all atom classes agree in it (since 3 holds).
\qed
\end{proof}

\vspace{-0.0mm}
\section{Proof of \cref{thm:graphwstable}}\label{sec:proofThmGraphWStable}
\vspace{-0.0mm}

We start by defining some notation.
Given an inclusion graph~$\graph$, a~\emph{strongly connected component} (SCC)
of~$\graph$ is a~maximal subgraph of~$\graph$ where each two vertices are
reachable.
An SCC is \emph{trivial} if it has exactly one vertex (which does not contain
a~self-loop) and \emph{non-trivial} otherwise.
An SCC~$C$ is \emph{terminal} if the set of vertices reachable from~$C$
equals~$C$.
Given an SCC~$C$ of~$\graph$, then $\varsof C$ is the set of all variables that
occur in the vertices of~$C$.

We first start with proving the following lemma that gives us some properties of
inclusion graphs.

\begin{lemma}\label{lem:graph_structure}
Let~$\graph = (\vertices, \edges)$ be an inclusion graph of~$\constr$.
Then the following holds:
\begin{enumerate}
  \item  Every non-trivial SCC of~$\graph$ is terminal.\label{lem:graph_structure:terminal}
  \item  Let~$C_1$ and~$C_2$ be two different non-trivial SCCs
    of~$\graph$, then $\varsof{C_1} \cap \varsof{C_2} =
    \emptyset$.\label{lem:graph_structure:vars}
  \item  If~$x$ is a~variable on the right-hand side of some vertex $\graphnode
    \sterm \tterm$ that is a~trivial SCC, then~$x$ does not occur on the
    right-hand side of any other vertex of~$\graph$.\label{lem:graph_structure:rhs}
\end{enumerate}
\end{lemma}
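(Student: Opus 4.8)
The plan is to derive all three parts from two elementary observations about the dual operation $v\mapsto\dualof v$ that swaps the two sides of an inclusion, recalling that each side of an equation is a nonempty word over $\vars$ (a side equal to $\epsilon$ forces the variables on the other side to be $\epsilon$ and can be removed by preprocessing). First I would record a \emph{self-cycle} observation: whenever a vertex $v=\graphnode\sterm\tterm$ and its dual $\dualof v=\graphnode\tterm\sterm$ both lie in $\vertices$, they form a $2$-cycle, since by \condref{cond:edges} there is an edge $v\to\dualof v$ (the left-hand side $\sterm$ of $v$ shares a variable with the right-hand side $\sterm$ of $\dualof v$) and symmetrically $\dualof v\to v$; hence $v$ and $\dualof v$ always lie in the same SCC. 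Second I would record an \emph{edge-reversal} observation: for $v,w$ with $\dualof v,\dualof w\in\vertices$, both ``$v\to w\in\edges$'' and ``$\dualof w\to\dualof v\in\edges$'' are by \condref{cond:edges} equivalent to the left-hand side of $v$ sharing a variable with the right-hand side of $w$. Combining these with \condref{cond:cycle} (every vertex of a non-trivial SCC lies on a cycle, so its dual lies in $\vertices$) and the self-cycle observation, every non-trivial SCC is \emph{closed under duality}.

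For Part~\ref{lem:graph_structure:terminal} I would argue by contradiction: suppose there is an edge $v\to w$ with $v$ in a non-trivial SCC $C$ and $w\notin C$, witnessed by a variable $x$ shared between the left-hand side of $v$ and the right-hand side of $w$. By dual-closure $\dualof v\in C$, so $w\neq\dualof v$; moreover $x$ occurs both in the right-hand side of $\dualof v$ (which is the left-hand side of $v$) and in the right-hand side of $w$, so $x$ has multiple occurrences among right-hand sides of vertices. Applying \igcondref{cond:multioccur} to $w$ then yields $\dualof w\in\vertices$. Now edge-reversal gives $\dualof w\to\dualof v$, the self-cycle observation applied to $w$ gives $w\to\dualof w$, and together with the given $v\to w$ and the in-$C$ edge $\dualof v\to v$ this closes the cycle $v\to w\to\dualof w\to\dualof v\to v$, forcing $w\in C$ --- a contradiction. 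Thus no edge leaves $C$, so $C$ is terminal. I expect this cycle-closing step, in particular the invocation of \condref{cond:multioccur} to secure $\dualof w\in\vertices$, to be the main obstacle; the remaining parts are comparatively routine corollaries.

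For Part~\ref{lem:graph_structure:vars}, suppose $x\in\varsof{C_1}\cap\varsof{C_2}$ for distinct non-trivial SCCs. Using dual-closure, $x$ occurs on both sides within each $C_i$; in particular on the left-hand side of some vertex of $C_1$ and on the right-hand side of some vertex of $C_2$, so \condref{cond:edges} gives an edge from $C_1$ into $C_2$, contradicting the terminality of $C_1$ from Part~\ref{lem:graph_structure:terminal}. For Part~\ref{lem:graph_structure:rhs}, suppose $x$ occurs on the right-hand side of a trivial SCC $v=\graphnode\sterm\tterm$ and also on the right-hand side of some other vertex $w$. Then $x$ has multiple occurrences among right-hand sides, so \igcondref{cond:multioccur} applied to $v$ gives $\dualof v\in\vertices$; since $\sterm\neq\tterm$ we have $v\neq\dualof v$, and the self-cycle observation places $v$ on a $2$-cycle, contradicting that $v$ is a trivial SCC. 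Hence no such $w$ exists.
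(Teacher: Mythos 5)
Your proposal is correct and takes essentially the same approach as the paper's proof: both establish dual-closure of non-trivial SCCs from \condref{cond:cycle} together with the \condref{cond:edges} two-cycle between a vertex and its dual, invoke \condref{cond:multioccur} to obtain $\dualof{w}$ for the outside vertex~$w$, and close the cycle $v \to w \to \dualof{w} \to \dualof{v} \to v$ for Part~1, with Parts~2 and~3 following by the same short arguments. The only deviations are cosmetic: you derive Part~2 from terminality (Part~1) instead of exhibiting edges in both directions as the paper does, and you make explicit the harmless nonemptiness assumption on equation sides that the paper leaves implicit.
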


\begin{proof}
  \begin{enumerate}
    \item  For the sake of contradiction, assume that there is an SCC~$C$
      of~$\graph$, a~vertex $\graphnode{\sterm}{\tterm} \in \vertices$ that
      is not in~$C$, and a~vertex~$\graphnode{\sterm_C}{\tterm_C} \in C$ such
      that $(\graphnode{\sterm_C}{\tterm_C}, \graphnode{\sterm}{\tterm}) \in
      \edges$.
      From \igcondref{cond:edges} of inclusion graphs, it follows that
      $\sterm_C$ and $\tterm$ share a~variable, for example~$x$.
      Also, from Conditions~\condref{cond:edges} and~\condref{cond:cycle} and
      the fact that $\graphnode{\sterm_C}{\tterm_C}$ is in an SCC, it follows
      that~$C$ also contains the vertex $\graphnode{\tterm_C}{\sterm_C}$ and the
      transition $(\graphnode{\tterm_C}{\sterm_C}, \graphnode{\sterm_C}{\tterm_C})$.
      But then, it follows that the variable~$x$ occurs on the right-hand side
      of at least two vertices, so by \igcondref{cond:multioccur}, there also
      needs to be the vertex $\graphnode{\tterm}{\sterm} \in \vertices$, and 
      by \igcondref{cond:edges}, there also needs to be the edge
      $(\graphnode{\sterm}{\tterm}, \graphnode{\tterm}{\sterm})$.
      Finally, applying \igcondref{cond:edges} one more time, we obtain that
      $(\graphnode{\tterm}{\sterm}, \graphnode{\tterm_C}{\sterm_C}) \in \edges$
      (since $\sterm_C$ and $\tterm$ share~$x$).
      But then~$C$ is reachable from~$\graphnode{\sterm}{\tterm}$, which is
      a~contradiction.

    \item  For the sake of contradiction, assume two different SCCs~$C_1$
      and~$C_2$ of~$\graph$ and a~variable $x \in \varsof{C_1} \cap
      \varsof{C_2}$.
      Then there will be vertices
      $\graphnode{\sterm_1}{\tterm_1},\graphnode{\tterm_1}{\sterm_1} \in C_1$ and
      $\graphnode{\sterm_2}{\tterm_2},\graphnode{\tterm_2}{\sterm_2} \in C_2$
      with occurrences of $x$ in~$\sterm_1$ and~$\sterm_2$.
      From \igcondref{cond:edges}, we have that there will then also be edges
      $(\graphnode{\sterm_1}{\tterm_1}, \graphnode{\tterm_2}{\sterm_2}) \in
      \edges$ and
      $(\graphnode{\sterm_2}{\tterm_2}, \graphnode{\tterm_1}{\sterm_1}) \in
      \edges$, which is a~contradiction with the assumption that~$C_1$ and~$C_2$
      are different.

    \item  For the sake of contradiction, assume that~$x$ occurs on the
      right-hand side of $\graphnode \sterm \tterm \in \vertices$ and also on the right-hand
      side of $\graphnode{\sterm'}{\tterm'} \in \vertices$.
      Then, from \igcondref{cond:multioccur}, there will also be
      a~vertex~$\graphnode \tterm \sterm \in \vertices$ and by
      \igcondref{cond:edges}, there will be edges $(\graphnode \sterm \tterm,
      \graphnode \tterm \sterm) \in \edges$ and $(\graphnode \tterm \sterm,
      \graphnode \sterm \tterm) \in \edges$, therefore $\graphnode \sterm
      \tterm$ is not a~trivial SCC, which is a~contradiction.
      \qed
  \end{enumerate}
\end{proof}

Let us proceed to the proof of \cref{thm:graphwstable}.

\thmGraphWStable*

\begin{proof}
  ($\Leftarrow$): Trivial by construction of singleton language assignments from
  the solution.

  ($\Rightarrow$): We will show how to construct a~solution of~$\constr$ given
  a~live language assignment~$\lass$ that refines~$\constrlass$ and is stable
  for~$\graph$.
  Intuitively, the construction proceeds by computing \emph{partial} string
  assignments bottom-up on the structure of~$\graph$, starting from non-trivial
  SCCs (using \cref{thm:main}) and then proceeding upward on the (SCC-free)
  structure of~$\graph$, in each step taking the concatenation of the terminal
  vertices, removing those vertices, and using \cref{thm:weaklystable} to infer
  assignments to other variables.

  Formally, let us consider a~non-trivial SCC~$C$ of~$\graph$ (we note that due
  to \cref{lem:graph_structure}(\ref{lem:graph_structure:terminal}), $C$~is
  terminal).
  From \igcondref{cond:cycle}, it follows that for every vertex $\graphnode
  \sterm \tterm$ of~$C$, there is also the vertex $\graphnode \tterm \sterm$
  in~$C$.
  Therefore, since~$\lass$ is live and stable for~$\graph$ (and so also
  for~$C$), from \cref{thm:main} it follows that there exists a~string
  assignment~$\ass_C$ that assigns strings to the variables in $\varsof C$ that
  is a~solution of~$\constr_{|C}$ (where $\constr_{|C}$ is a~formula obtained
  from~$\constr$ by removing all subformulae with variables not occurring in the
  vertices of~$C$).
  We can now build the string assignment $\ass_0$ by uniting the assignments
  $\ass_C$ for all non-trivial SCCs~$C$ of~$\graph$ (this is possible since from
  \cref{lem:graph_structure}(\ref{lem:graph_structure:vars}), the sets of
  variables of two different non-trivial SCCs in~$\graph$ are disjoint).
  Let us also create the graph~$\graph_0$ by removing all non-trivial SCCs
  from~$\graph$ (they are not needed any more since we have their solutions
  in~$\ass_0$).
  We note that~$\graph_0$ is acyclic.

  We now proceed by constructing a~sequence of partial solutions $\ass_0
  \subseteq \ass_1 \subseteq \ldots \subseteq \ass_n$, where $\ass_n$ is
  a~solution of~$\constr$, as follows.  We start by assigning $i \gets 0$ and
  keep repeating the following steps until~$\graph_i$ is empty.
  \begin{enumerate}
    \item  First, we create the language assignment~$\lass_i$ such that
      \begin{equation}
        \lass_i(x) = \begin{cases}
          \{w\} & \text{if } (x \mapsto w) \in \ass_i, \\
          \lass(x) & \text{otherwise.}
        \end{cases}
      \end{equation}

    \item
      Now, we pick from~$\graph_i$ the vertices to process in this iteration.
      Let $T_i = \{\graphnode{\sterm_1}{\tterm_1}, \ldots,
      \graphnode{\sterm_k}{\tterm_k}\}$ be the largest set of terminal vertices
      of~$\graph_i$ such that for all $1 \leq j \leq k$, all variables
      in~$\sterm_j$ either
      \begin{inparaenum}[(i)]
        \item  are assigned a~string in~$\ass_i$ or
        \item  do not occur on the
          right-hand side of any vertex in~$\graph_i$.
      \end{inparaenum}
      We claim that~$T_i$ is unique and non-empty (we give the claim formally
      below as \cref{claim:T_i-non-empty}).

    \item  Using~$T_i$, we construct the single-equation system $\varphi_i\colon
      \sterm_1 \sharp \ldots \sharp \sterm_k = \tterm_1 \sharp \ldots \sharp
      \tterm_k \land \bigwedge_{x \in \vars} x \in \constrlass(x)$.
      We claim that~$\lass_i$ is weakly stable for~$\varphi_i$ (we formally
      give the claim and its proof below as \cref{claim:weakly_stable_i}).
      Using the claim, we can use \cref{thm:weaklystable} (especially
      the algorithm in its proof) to obtain a~solution~$\ass'$ to~$\varphi_i$.
      We then set $\ass_{i+1} \gets \ass_i \cup \ass'$.

    \item  Next, we set $\graph_{i+1}$ to be the same as~$\graph_i$ without the
      vertices (and the incident edges) from~$T_i$.
      \label{step:remove_nodes}

    \item  Finally, $i \gets i + 1$.
  \end{enumerate}

  \noindent
  Let us use~$n$ for the last value of~$i$.
  Before we continue, we give proofs of the two claim used above.
  First, we show that each~$T_i$ is unique and non-empty (which is also the
  reason why the algorithm above always terminates).

  \begin{claim}\label{claim:T_i-non-empty}
    For all $0 \leq i \leq n$ it holds that~$T_i$ is unique and non-empty.
  \end{claim}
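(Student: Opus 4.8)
The plan is to treat uniqueness and non-emptiness separately, the latter being the substantive part. Throughout I would use that every~$\graph_i$ is a~finite acyclic graph: $\graph_0$ is acyclic by construction (all non-trivial SCCs were removed), and each~$\graph_{i+1}$ is obtained from~$\graph_i$ by deleting the vertices of~$T_i$, hence is an~induced subgraph, which preserves acyclicity; in particular every SCC of~$\graph_i$ is trivial. Moreover, $\graph_i$ is non-empty for every index in the stated range, since~$T_i$ is formed only while the loop runs, i.e.\ while $\graph_i \neq \emptyset$.

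For \textbf{uniqueness}, I would observe that the condition defining membership in~$T_i$ is a~\emph{per-vertex} property: whether a~terminal vertex $\graphnode{\sterm}{\tterm}$ qualifies depends only on its own left-hand side~$\sterm$, on~$\ass_i$, and on~$\graph_i$, and not on which other terminal vertices are selected. Consequently, the family of sets of terminal vertices satisfying the condition is closed under union, so it has a~unique largest element, namely the set of \emph{all} terminal vertices of~$\graph_i$ that satisfy the condition. This set is exactly~$T_i$, which settles uniqueness.

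For \textbf{non-emptiness}, I would first invoke the standard fact that a~finite non-empty DAG has a~sink, i.e.\ a~vertex with no outgoing edges; in our acyclic setting such a~vertex is precisely a~terminal (trivial) SCC, so~$\graph_i$ has at least one terminal vertex~$v = \graphnode{\sterm}{\tterm}$. The crux, which I expect to be the main obstacle, is then to show that \emph{every} terminal vertex automatically qualifies. Take any variable~$x$ occurring in~$\sterm$ and suppose, for contradiction, that~$x$ also occurs on the right-hand side of some vertex~$w$ of~$\graph_i$. By \igcondref{cond:edges}, the sharing of~$x$ between the left-hand side of~$v$ and the right-hand side of~$w$ forces the edge $(v,w) \in \edges$. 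If $w \neq v$, this contradicts~$v$ being a~sink; and if $w = v$, then~$x$ occurs on both sides of~$v$, so \igcondref{cond:edges} yields a~self-loop on~$v$, contradicting the acyclicity of~$\graph_i$. Thus no variable of~$\sterm$ occurs on a~right-hand side in~$\graph_i$, so condition~(ii) holds for every variable of~$\sterm$ (without even appealing to~$\ass_i$), and~$v$ qualifies. Hence $v \in T_i$, so~$T_i$ is non-empty. The delicate point is precisely this edge/self-loop dichotomy; everything else is routine bookkeeping about induced subgraphs and the existence of a~sink in a~finite DAG.
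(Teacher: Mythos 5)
Your proof is correct, and its substantive half takes a genuinely different---and tighter---route than the paper's. Uniqueness is handled identically in both (the membership condition is per-vertex, so the qualifying sets are closed under union and a unique largest one exists). For non-emptiness, however, the paper argues globally: it assumes that \emph{every} node of~$\graph_i$ has an unassigned left-hand-side variable occurring on some right-hand side, chains the resulting edges given by \igcondref{cond:edges} into a cycle, and contradicts acyclicity of~$\graph_i$. As written, that argument concludes only that \emph{some} node of~$\graph_i$ satisfies the defining condition, whereas membership in~$T_i$ requires a \emph{terminal} node satisfying it; bridging that last step is left implicit. Your argument supplies exactly this missing piece, and more directly: a finite non-empty DAG has a sink, and every sink vacuously satisfies clause~(ii), because a variable shared between its left-hand side and any right-hand side in~$\graph_i$ would force, by \igcondref{cond:edges}, either an outgoing edge (contradicting sinkhood) or a self-loop (contradicting acyclicity, or equivalently the removal of non-trivial SCCs when $\graph_0$ was formed). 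This buys a strictly stronger conclusion---$T_i$ is precisely the set of terminal vertices of~$\graph_i$, and clause~(i) about~$\ass_i$ plays no role in non-emptiness---while using the same two ingredients, \igcondref{cond:edges} and acyclicity, as the paper does.
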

  \begin{claimproofnoqed}
    \begin{itemize}
      \item  (\emph{non-emptiness})
    First, we show that~$T_i$ is non-empty by contradiction: assume that~$\graph_i$
    is non-empty and that for every $G_i$-node $\graphnode \sterm \tterm$,
    there is an occurrence of a~variable~$x$ in~$\sterm$ such that
    \begin{inparaenum}[(i)]
      \item  $x$~is not assigned a~string by~$\ass_i$ and
      \item  $x$~occurs on the right-hand side of some vertex
        $\graphnode{\sterm'}{\tterm'}$
    \end{inparaenum}
    (and so~$T_i$ is empty).
    Then, according to \igcondref{cond:edges} of inclusion graphs, it follows
    that there is an edge from $\graphnode \sterm \tterm$ to
    $\graphnode{\sterm'}{\tterm'}$.
    Because the previous condition holds for all nodes in~$\graph_i$, it follows
    that there is a~cycle in~$\graph_i$, which is a~contradiction with the fact
    that~$\graph_i$ is acyclic.

      \item (\emph{uniqueness})
        For the sake of contradiction, assume that there are two different
        non-empty sets~$T_i^1$ and~$T_i^2$ that are largest and satisfy the
        requirements.
        But then their union $T_i^1 \cup T_i^2$ would also satisfy the
        requirements, which is a~contradiction with~$T_i^1$ and~$T_i^2$ being
        largest.
        \claimqed
    \end{itemize}
  \end{claimproofnoqed}

  \noindent
  Next, we prove the claim that allowed application of~\cref{thm:weaklystable}
  above.

  \begin{claim}\label{claim:weakly_stable_i}
  For all $0 \leq i \leq n$ it holds that
  $\lass_i$ is weakly stable for~$\varphi_i$.
  \end{claim}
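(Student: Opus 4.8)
The plan is to unfold weak stability of $\lass_i$ for $\varphi_i$ into its two constituents and verify each separately: (A)~the equation $\sterm_1\sharp\cdots\sharp\sterm_k=\tterm_1\sharp\cdots\sharp\tterm_k$ is \emph{weak}, i.e.\ every variable occurring in its right-hand side occurs exactly once in the whole equation, and (B)~$\lass_i(\sterm_1\sharp\cdots\sharp\sterm_k)\subseteq\lass_i(\tterm_1\sharp\cdots\sharp\tterm_k)$. The workhorse observation I would establish first is that every vertex of~$\graph_i$ is a \emph{trivial} SCC of the original graph~$\graph$: indeed $\graph_0$ arises from~$\graph$ by deleting its non-trivial (terminal) SCCs, and each $\graph_i$ arises from $\graph_0$ only by further vertex removals, so each $\graphnode{\sterm_j}{\tterm_j}\in T_i$ is a single-vertex SCC of~$\graph$.

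For~(A) I would first show that any variable~$x$ appearing in $\tterm_1\sharp\cdots\sharp\tterm_k$ occurs exactly once among all right-hand sides of~$\graph$. If it occurred at least twice (possibly twice inside a single $\tterm_j$), the trivial-SCC vertex $\graphnode{\sterm_j}{\tterm_j}$ carrying~$x$ on its right would satisfy the hypothesis of \igcondref{cond:multioccur}, forcing its dual $\graphnode{\tterm_j}{\sterm_j}$ into~$\vertices$; by \igcondref{cond:edges} the two vertices form a $2$-cycle, contradicting triviality. It then remains to rule out~$x$ on the left-hand side $\sterm_1\sharp\cdots\sharp\sterm_k$. Here I would use the defining property of~$T_i$: each variable of every $\sterm_{j'}$ is either (i)~already assigned in~$\ass_i$ or (ii)~absent from all right-hand sides of~$\graph_i$. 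Since~$x$ sits on the right of $\graphnode{\sterm_j}{\tterm_j}\in\graph_i$, option~(ii) fails for any left occurrence, so such an occurrence would make~$x$ assigned in~$\ass_i$; the invariant that no variable assigned in~$\ass_i$ occurs on a right-hand side of~$\graph_i$ (discussed below) then contradicts $x\in\tterm_j$. Hence the equation is weak.

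For~(B) I would use that the fresh separator~$\sharp$ (a letter foreign to all variable languages) makes the concatenated languages factor position by position, so it suffices to prove $\lass_i(\sterm_j)\subseteq\lass_i(\tterm_j)$ for each~$j$. Since $\lass$ is stable for~$\graph$ and $\graphnode{\sterm_j}{\tterm_j}$ is a vertex, $\lass(\sterm_j)\subseteq\lass(\tterm_j)$. By the same invariant, every variable of~$\tterm_j$ is unassigned, whence $\lass_i(\tterm_j)=\lass(\tterm_j)$; and every variable~$y$ of~$\sterm_j$ satisfies $\lass_i(y)\subseteq\lass(y)$ (either $y$ is unassigned and $\lass_i(y)=\lass(y)$, or $\lass_i(y)=\{\ass_i(y)\}$ with $\ass_i(y)\in\lass(y)$), so $\lass_i(\sterm_j)\subseteq\lass(\sterm_j)$. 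Chaining the three inclusions and reconcatenating over~$\sharp$ yields~(B). The same facts show $\lass_i$ refines~$\constrlass$ and is feasible (singletons are non-empty and~$\lass$ is feasible), which is what the later appeal to \cref{thm:weaklystable} needs.

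The crux, and the step I expect to be the main obstacle, is the invariant invoked in both parts: \emph{no variable assigned in~$\ass_i$ occurs on a right-hand side of~$\graph_i$, and every such assigned value lies in the corresponding~$\lass(x)$.} I would prove it by induction on~$i$ jointly with the construction. For the base case, a variable of a non-trivial SCC~$C$ appears on some right-hand side inside~$C$ (its dual is present by \igcondref{cond:cycle}), so by \cref{lem:graph_structure}(\ref{lem:graph_structure:rhs}) it cannot also lie on a right-hand side of any trivial-SCC vertex of~$\graph_0$; membership of $\ass_0$-values in~$\lass$ comes from applying \cref{thm:main} to~$\lass$ restricted to each~$C$. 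For the step, a variable first assigned while processing~$T_{i'}$ is either a right occurrence of a $T_{i'}$-vertex, which is deleted and, by part~(A) at stage~$i'$, occurs on no other right-hand side, or a type-(ii) left variable, which by definition is absent from all right-hand sides of $\graph_{i'}\supseteq\graph_i$; its value lies in~$\lass$ because the solution of~$\varphi_{i'}$ is extracted from the stable assignment~$\lass_{i'}$ through the constructive proof of \cref{thm:weaklystable}. Phrasing this mutual induction between part~(A) and the invariant without circularity—noting that part~(A)'s uniqueness sub-claim is purely structural and independent of the invariant—is the delicate bookkeeping the proof must handle.
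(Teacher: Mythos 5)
Your proposal is correct, and its overall skeleton---splitting weak stability into (A)~weakness of the concatenated equation and (B)~the inclusion $\lass_i(\sterm_1 \sharp \ldots \sharp \sterm_k) \subseteq \lass_i(\tterm_1 \sharp \ldots \sharp \tterm_k)$---is exactly the paper's decomposition into its points~(ii) and~(i). For the multiple-right-occurrence half of~(A) your argument coincides with the paper's: \igcondref{cond:multioccur} forces the dual vertex into $\vertices$, \igcondref{cond:edges} yields a two-cycle, and this contradicts survival of the vertex into~$\graph_i$; your phrasing (contradiction with the vertex being a trivial SCC of~$\graph$) is in fact tighter than the paper's appeal to acyclicity of~$\graph_i$, since the dual vertex need not even lie in~$\graph_i$. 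Where you genuinely diverge is~(B): the paper proves, by induction on~$i$, that \emph{every} vertex $\graphnode{\sterm}{\tterm}$ of~$\graph$ satisfies $\lass_i(\sterm) \subseteq \lass_i(\tterm)$ (base case via \cref{lem:graph_structure}(\ref{lem:graph_structure:rhs}) and the SCC solutions, inductive step via the observation that refinement shrinks only variables absent from right-hand sides of~$\graph_i$), whereas you prove only the inclusions actually needed, for the vertices of~$T_i$, by chaining $\lass_i(\sterm_j) \subseteq \lass(\sterm_j) \subseteq \lass(\tterm_j) = \lass_i(\tterm_j)$ from stability of the original~$\lass$ and your invariant that assigned variables never occur on right-hand sides of~$\graph_i$ and keep values inside~$\lass$. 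These are two packagings of the same underlying facts---your invariant is precisely what makes the paper's right-hand sides stationary---but your version is more economical in what it establishes, at the price of the mutual-induction bookkeeping you identify (which, as you note, is non-circular because the structural half of~(A) needs no invariant). Two further remarks: first, you are more careful than the paper on weakness itself, which also requires that a right-hand-side variable not occur on the \emph{left} of the concatenated equation; the paper's point~(ii) silently omits this case. Second, for that case your detour through the invariant is more roundabout than necessary: if~$x$ occurred in~$\tterm_j$ and in~$\sterm_{j'}$ for vertices of~$T_i$, then \igcondref{cond:edges} would give an edge from $\graphnode{\sterm_{j'}}{\tterm_{j'}}$ to $\graphnode{\sterm_j}{\tterm_j}$ inside~$\graph_i$, contradicting directly that $T_i$ consists of terminal trivial vertices.
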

  \begin{claimproofnoqed}
    In order to show that~$\lass_i$ is weakly stable for~$\varphi_i$, we
    need to show the following two properties:
    \begin{enumerate}[(i)]
      \item  $\lass_i(\sterm_1 \sharp \ldots \sharp \sterm_k) \subseteq
        \lass_i(\tterm_1 \sharp \ldots \sharp \tterm_k)$:
        We prove the property by showing that for each vertex
        $\graphnode{\sterm}{\tterm}$ in $\graph$, we have
        $\lass_i(\sterm) \subseteq \lass_i(\tterm)$.
        We proceed by induction on~$i$.
        \begin{enumerate}
          \item  $i=0$:
            Let~$Y$ be the set of variables occurring in non-trivial SCCs
            of~$\graph$.
            Then, from the construction, $\lass_0$~assigns each variable~$y
            \in Y$ a~value obtained from the solution of non-trivial
            SCCs~$\ass_0$.
            As a~consequence, for every vertex $\graphnode \sterm \tterm$ of
            a~non-trivial SCC in~$\graph$, it holds that $\lass_0(\sterm) =
            \lass_0(\tterm)$, which is even stronger than $\lass_0(\sterm)
            \subseteq \lass_0(\tterm)$.

            Furthermore, for every vertex $\graphnode{\sterm'}{\tterm'}$
            that is in the acyclic part of~$\graph$ and that contains some 
            variables from~$Y$ on its left-hand side\footnote{%
            Note that due to
            \cref{lem:graph_structure}(\ref{lem:graph_structure:rhs}), it
            cannot happen that any variable from~$Y$ would be on the
            right-hand side of a~node in the acyclic part of~$\graph$.},
            we know that for any variable~$y \in Y$, it holds that
            $\lass_0(y) \subseteq \lass(y)$ ($\lass_0(y)$ is, in fact,
            a~singleton) and, therefore, since it originally
            held (from the assumption) that $\lass(\sterm') \subseteq
            \lass(\tterm')$, it will also hold that $\lass_0(\sterm')
            \subseteq \lass(\tterm')$ (note that due to
            \cref{lem:graph_structure}(\ref{lem:graph_structure:rhs}), it
            follows that $\lass_0(\tterm') = \lass(\tterm')$ since $\tterm'$
            contains no variable from~$Y$, so we can conclude that
            $\lass_0(\sterm') = \lass_0(\tterm')$).

          \item  $i > 0$:
            Our induction hypothesis is that for every vertex $\graphnode
            \sterm \tterm$ in~$\graph$, it holds that $\lass_{i-1}(\sterm)
            \subseteq \lass_{i-1}(\tterm)$.
            Let us now look at the inclusions represented by vertices
            from~$T_i$.
            By construction, $\lass_{i}(x) \subsetneq \lass_{i-1}(x)$ can only happen
            for variables~$x$ that do not appear on the right-hand side of any
            vertex in~$\graph_i$ (in the construction of~$T_i$, we could not have
            picked a~vertex containing~$x$).
            Therefore, we can conclude that $\lass_i(\sterm) \subseteq
            \lass_i(\tterm)$.

        \end{enumerate}

      \item  \emph{Each variable $x \in \vars$ has no more than one
        occurrence in~$\tterm_1 \sharp \ldots \sharp \tterm_k$}.
        We prove this by contradiction: assume that variable~$x$ has more
        than one occurrence on the right-hand sides of~$T_i$, and let one of
        the occurrences in the vertex
        $\graphnode{\sterm_j}{\tterm_j}$.
        Then, by \igcondref{cond:multioccur}, $\graph$~also contains the
        vertex~$\graphnode{\tterm_j}{\sterm_j}$ and, by
        \igcondref{cond:edges}, the two vertices are in a~non-trivial SCC,
        which is a~contradiction with the fact that~$\graph_i$ is acyclic
        (which follows from the fact that~$\graph_0$ is acyclic and that we
        only removed vertices/edges when we were creating~$\graph_i$ in
        Step~\ref{step:remove_nodes}) of the previous iteration of the loop.
        \claimqed
    \end{enumerate}
  \end{claimproofnoqed}

  \noindent
  Finally, we need to show that~$\ass_n$ is a~solution of~$\constr$.
  We leverage the following property from the proof of
  \cref{claim:T_i-non-empty}(i): for every $0 \leq i \leq n$ and every vertex
  $\graphnode \sterm \tterm$ in~$\graph$, it holds that $\lass_i(\sterm)
  \subseteq \lass_i(\tterm)$.
  Together with the fact that $\lass_n$ assigns each variable a~singleton
  language, we infer that $\lass_n(\sterm) = \lass_n(\tterm)$.
  It follows that~$\ass_n$ is a~solution of~$\constr$.
  \qed
\end{proof}

\vspace{-0.0mm}
\section{Proof of \cref{thm:inclGraphCorr}}\label{sec:proof-incl-graph-corr}
\vspace{-0.0mm}

\thmInclCorr*

\begin{proof}
  Let $\graph_\constreq = (V,E)$ be a graph obtained by $\incl(\constreq)$. First, we prove the following auxiliary claim.
  


  \begin{claim}\label{claim:unreach}
    Let $\St'$ be a set of vertices on the end of the algorithm. Then, every $v \in \St'$ is not reachable 
    from some $u \in V\setminus \St'$ in $G_\constreq$.
  \end{claim}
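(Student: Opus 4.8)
The plan is to exploit the order in which vertices are moved into $\St'$ together with the ``source'' property each of them enjoys at the moment of its removal. First I would enumerate $\St' = \{v_1,\dots,v_t\}$ in the order in which the vertices are added on \lnref{line:addtscc} of \cref{alg:inclgraph}, and for each $i$ let $\G_i = \splitgraph \setminus \bigcup_{j<i}\{v_j,\dualof{v_j}\}$ denote the value of $\G$ at the start of the $i$-th iteration. By the loop guard, $v_i$ is a trivial source SCC of $\G_i$, so $v_i$ has no incoming edge in $\G_i$; moreover its dual $\dualof{v_i}$ is deleted from $\G$ and never added to $\St'$, hence $\dualof{v_i}\notin V$.

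The heart of the argument is a local statement $(\star)$: in $G_\constreq$ every predecessor of $v_i$ lies in $\{v_1,\dots,v_{i-1}\}$. To prove it I would take an edge $w\to v_i$ of $G_\constreq$ with $w\in V$; as $v_i$ has no self-loop we get $w\neq v_i$, and I would argue by contradiction that $w\in\{v_1,\dots,v_{i-1}\}$. If not, then $w$ is neither removed before iteration $i$ nor equal to an already-deleted dual (duals are not in $V$), so $w$ is still a vertex of $\G_i$; likewise $v_i$ is. The decisive point is that $w\neq\dualof{v_i}$ since $\dualof{v_i}\notin V$, so $w$ and $v_i$ are induced by two \emph{distinct} equations; hence any variable shared by $\lhs(w)$ and $\rhs(v_i)$ occurs there at two different occurrences. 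Since $w\to v_i$ being a $G_\constreq$-edge means, by \igcondref{cond:edges}, exactly that $\lhs(w)$ and $\rhs(v_i)$ share a variable, the edge $w\to v_i$ is therefore also an edge of $\splitgraph$, and as both endpoints survive in $\G_i$ it is an edge of $\G_i$ into $v_i$ --- contradicting that $v_i$ is a source of $\G_i$.

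With $(\star)$ in hand the claim is immediate by tracing a hypothetical path backwards. Suppose some $u\in V\setminus\St'$ reached $v=v_i$ by a path $u=w_0\to\dots\to w_\ell=v_i$ in $G_\constreq$. Applying $(\star)$ to the last edge gives $w_{\ell-1}=v_{i_1}$ with $i_1<i$, to the next edge $w_{\ell-2}=v_{i_2}$ with $i_2<i_1$, and so on; the strictly decreasing indices guarantee that every vertex of the path, and in particular $u=w_0$, belongs to $\St'$. This contradicts $u\in V\setminus\St'$, so $v$ is reachable from no vertex of $V\setminus\St'$.

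I expect the main obstacle to be making the step inside $(\star)$ airtight, namely reconciling the two edge relations it silently compares. The edges of $G_\constreq$ are defined in \igcondref{cond:edges} merely by two sides \emph{sharing a variable}, whereas $\splitgraph$-edges demand a \emph{different} occurrence of a shared variable; the two notions diverge exactly on the dual pair $\graphnode{u}{v},\graphnode{v}{u}$ coming from an equation $u=v$. The entire proof rests on the observation that, for a node $v_i$ placed in $\St'$, the offending dual $\dualof{v_i}$ has already been pruned from $V$, so this discrepancy can never manufacture an extra incoming $G_\constreq$-edge into $v_i$. Carefully discharging the case analysis $w\neq v_i$ and $w\neq\dualof{v_i}$ (which is what forces $w$ and $v_i$ into distinct equations) is the delicate part of the write-up.
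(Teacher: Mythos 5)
Your proof is correct and takes essentially the same route as the paper's: both arguments rest on the two facts that the dual of any node placed in $\St'$ has been pruned and is therefore not in $V$ (so every $G_\constreq$-edge entering an $\St'$-node is a genuine $\splitgraph$-edge), and that each $\St'$-node was a trivial source SCC of the residual graph at the moment it was removed, after which a hypothetical path from $V\setminus\St'$ is traced backwards to a contradiction. Your organization via the local predecessor property $(\star)$ and strictly decreasing insertion indices is in fact tighter than the paper's terse ``maximal repetition- and dual-free suffix'' device, which leaves exactly these steps implicit.
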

  \begin{claimproof}
    Let $v \in \St'$ and $u \in V\setminus \St'$ be an arbitrary vertex s.t. $v$ is reachable from $u$ in $G_\constreq$. 
    Let $\pi$ be the path from $u$ to $v$ in $G_\constreq$. Then, there is the \emph{maximal} suffix $\pi'$ of $\pi$ s.t.
    $\pi'$ does not contain repetitions of vertices or dual vertices. Since $\pi'$ is maximal, there is some vertex $u'$ of 
    $\pi'$ belonging to $V$ (otherwise, we could extend $\pi'$). Since $\pi'$ has no duals and repetitions, such a path is 
    also in $\splitgraph$. Since $v$ is reachable from $u'\in V$, according to \cref{alg:inclgraph} we have that $v\in V$,
    which is a contradiction.
  \end{claimproof}
  
  We now prove that $\graph_\constreq$ meets conditions \condref{cond:atleastone}--\condref{cond:cycle}.
  \begin{description}
    \item[\condref{cond:atleastone}] Follows trivially from \cref{alg:inclgraph}.
    
    \item[\condref{cond:multioccur}] 
      Consider some $u = \graphnode{s}{w_1xw_2} \in \graph_\constreq$ s.t. $\dualof u \notin \graph_\constreq$. Therefore, $u \in \St'$.
      Now assume that there is some other vertex $v = \graphnode{s'}{w_1'xw_2'} \in \graph_\constreq$. If $v' = \graphnode{w_1'xw_2'}{s'} \in \graph_\constreq$ 
      meaning that $v', v \in V$, then there is an edge between $v'$ and $u$, which is a contradiction according to \cref{claim:unreach}.
      If $v' = \graphnode{w_1'xw_2'}{s'} \notin \graph_\constreq$ meaning that $v \in \St'$. Further, there is the edge between 
      $\graphnode{w_1xw_2}{s} = \dualof u$ and $v$. Since $v \in \St'$, we have that $v$ had to be a source at the time of his adding to $\St'$ and 
      hence $u$ was added to $\St'$ before $v$. But similarly, there is the edge $\dualof v$ and $u$ meaning, according to the same rationale, that 
      $v$ was added to $\St'$ before $u$, which is a contradition.


    \item[\condref{cond:edges}] Given directly from line 6 of \cref{alg:inclgraph}.

    \item[\condref{cond:cycle}] Let $\pi = v_1, v_2, \dots, V_k$ where $k > 1$ be a cycle in 
      $G_\constreq$ (there is the edge between $v_k$ and $v_1$). For the sake of contradition, we assume that for 
      some $v_i = \graphnode{s}{t}$ the opposite vertex 
      $\graphnode{t}{s} \notin G_\constreq$. Hence $v_i \in \St'$. According to \cref{claim:unreach} we have 
      that $v_i \in \St'$ for each $1 \leq i \leq k$. Therefore, $v_i \neq v_j$ for each $i \neq j$ meaning that 
      the cycle $\pi$ is in $\splitgraph$. And hence all vertices from $\pi$ belongs to the same (nontrivial) SCC, which
      is a contradiction to $v_i$ belonging to $\St'$.

  \end{description}

  Now, we prove the second part of the theorem, i.e., if there is an acyclic inclusion graph for $\constreq$, then $\incl(\constreq)$ is acyclic. 
  From the definition of chain-freeness we have that $\constreq$ is chain-free iff $\splitgraph$ is acyclic. From \cref{thm:acyclicIGChainFree}
  we further have that there is an acyclic inclusion graph iff $\splitgraph$ is acyclic. Then, according to \cref{alg:inclgraph} we get that for each $v$
  it does not contain $\dualof v$. Since $\incl(\constreq)$ is an inclusion graph (proven above), it is acyclic (otherwise it must contain at least 
  one vertex together with its dual). 

  The graph $\incl(\constreq)$ has a minimum number of vertices because vertices in nontrivial SCCs in $\splitgraph$ (and between nontrivial SCCs) 
  forces the inclusion graph to have all also the dual ones.
  \qed
\end{proof}

\vspace{-0.0mm}
\section{Proof of \cref{thm:acyclicIGChainFree}}\label{sec:proof_acyclic_splitting}
\vspace{-0.0mm}

\thmAcyclicIGChainFree*

\begin{proof}
($\Rightarrow$)
Given the splitting graph~$\splgraph_\constr = (\positions_\constr,
  \edges_\constr, \varmap)$ of a~system $\constr\colon
\bigwedge_{i = 1}^{m} \sterm_i = \tterm_i \land \bigwedge_{x \in \vars} x \in
\constrlass(x)$, we show how to construct an acyclic inclusion graph~$\graph$
for~$\constr$.
First, we use~$\splgraph_\constr$ to construct the
\emph{\"{u}bergraph}~$\ubergraph_\constr$ of~$\constr$, which can be seen as a~graph
obtained by merging all nodes of~$\splgraph_\constr$ corresponding to all
positions on one side of an~equation into one (i.e., for a~system $\bigwedge_{i
= 1}^{m} \sterm_i = \tterm_i$, there will be one node for each~$\sterm_i$ and
one node for each~$\tterm_i$).
Then, we will use the~$\ubergraph$ to choose, for each equation $\sterm_i =
\tterm_i$, whether to add into~$\graph$ either
$\graphnode{\sterm_i}{\tterm_i}$ or~$\graphnode{\tterm_i}{\sterm_i}$ (adding
both is not allowed since that would imply that there is a~cycle
in~$\graph$).

Formally, the \"{u}bergraph of~$\constr$ is a~graph~$\ubergraph_\constr = (\{1,
\ldots, 2m\}, H)$ with nodes $\{1, \ldots, 2m\}$ and edges $H = \{ (j_1, j_2)
  \mid ((j_1, k_1), (j_2, k_2)) \in \edges_\constr \text{ for some } k_1, k_2
  \}$.

\begin{claim}\label{claim:ubergraph_acyclicity}
$\splgraph_\constr$ is chain-free iff~$\ubergraph_\constr$ is acyclic.
\end{claim}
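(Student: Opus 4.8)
The plan is to exploit that $\ubergraph_\constr$ is, by construction, the image of $\splgraph_\constr$ under the side-projection $\pi\colon \positions_\constr \to \{1,\dots,2m\}$ given by $\pi(j,k)=j$. By the definition of $H$, $\pi$ sends every edge of $\splgraph_\constr$ to an edge of $\ubergraph_\constr$ (a self-loop $(j,j)$ when both endpoints lie on the same side), so $\pi$ is a graph homomorphism onto $\ubergraph_\constr$. Recalling that $\constr$ is chain-free precisely when $\splgraph_\constr$ has no cycle (this is the definition of chain-freeness via the splitting graph), I would prove the two implications separately, both by contraposition, and both reduced to the elementary fact that a closed walk of positive length in a directed graph always contains a directed cycle (take a shortest closed subwalk).

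For ``$\ubergraph_\constr$ acyclic $\Rightarrow$ chain-free'', I argue the contrapositive: a directed cycle of $\splgraph_\constr$ is mapped by $\pi$, edge by edge, to a closed walk of $\ubergraph_\constr$ with at least one edge, which therefore contains a directed cycle; hence $\ubergraph_\constr$ is cyclic. This uses no structural information about $\edges_\constr$: even the degenerate case, where the whole cycle lives on one side and its image is a single self-loop, already exhibits a cycle of $\ubergraph_\constr$.

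The substantial direction is ``chain-free $\Rightarrow$ $\ubergraph_\constr$ acyclic'', equivalently (contrapositive) lifting a cycle of $\ubergraph_\constr$ to a cycle of $\splgraph_\constr$. Here $\pi$ cannot simply be inverted, since merging nodes can create cycles in a quotient of an acyclic graph (two vertex-disjoint edges whose inner endpoints are identified become a two-cycle), so the lift must invoke the concrete definition of $\edges_\constr$. Given an übergraph cycle $j_0 \to j_1 \to \dots \to j_{\ell-1} \to j_0$, each step $j_r \to j_{r+1}$ is witnessed by an edge $(j_r,a_r)\to(j_{r+1},b_r)$ of $\splgraph_\constr$. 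The difficulty is that on a visited side $j_{r+1}$ the position $(j_{r+1},b_r)$ reached by the incoming witness need not coincide with the position $(j_{r+1},a_{r+1})$ from which the outgoing witness departs, so the witnesses do not automatically concatenate into a walk of $\splgraph_\constr$. I would close this gap by choosing the witnesses coherently along the cycle, using that each edge of $\splgraph_\constr$ is induced by a shared variable between its two incident sides: the aim is to select, on every side of the cycle, a single position serving simultaneously as entry of the incoming witness and exit of the outgoing witness (routing, if necessary, through the edges of $\splgraph_\constr$ internal to an equation, which connect the two sides of the same equation). Once every entry is aligned with the corresponding exit, the witnesses glue into a closed walk of positive length in $\splgraph_\constr$, which contains a directed cycle, contradicting chain-freeness.

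I expect this intra-side alignment of witnesses to be the main obstacle, as it is exactly where the specific shape of $\edges_\constr$ (edges linking occurrences of the same variable on the two incident sides, rather than an arbitrary side-to-side relation) is indispensable; without it the statement would simply be false, as the quotient counterexample above shows. The remaining work---checking that distinct sides on a shortest übergraph cycle keep the lifted walk nontrivial, and the symmetric bookkeeping when two consecutive sides coincide or belong to the same equation---is routine and can be confined to a short case analysis.
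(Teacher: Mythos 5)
Your first direction (a cycle of $\splgraph_\constr$ projects under $\pi$ to a closed walk, hence a cycle, of $\ubergraph_\constr$) is correct and is exactly the paper's argument, which it dismisses in one line as ``existential abstraction''. The genuine gap is in the substantial direction, and it sits precisely where you yourself predicted: you correctly isolate the entry/exit alignment problem --- the incoming witness reaches $(j_{r+1},b_r)$ while the outgoing witness departs from $(j_{r+1},a_{r+1})$ --- but you never solve it; you only announce a plan (``choose the witnesses coherently'', ``route through intra-equation edges'') and flag the alignment as the expected main obstacle. The missing idea is a uniformity property of $\edges_\constr$ that the paper invokes explicitly and that makes the alignment problem vanish: by the definition of the splitting graph, an edge from $(j,k)$ to $(j',k')$ exists iff the variable $\varmapof{(j',k')}$ has a (distinct) occurrence on the side dual to~$j$, a condition that does not mention $k$ at all. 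Hence there is an edge from \emph{some} position of side~$j$ to $(j',k')$ iff there is an edge from \emph{every} position of side~$j$ to $(j',k')$. Given arbitrary witnesses $(j_r,a_r)\to(j_{r+1},b_r)$ for the \"{u}bergraph cycle, this immediately yields edges $(j_{r+1},b_r)\to(j_{r+2},b_{r+1})$ for all~$r$ (the witness shows some position of side $j_{r+1}$ has an edge to $(j_{r+2},b_{r+1})$, hence $b_r$ does too), so the entry positions themselves already form a cycle in $\splgraph_\constr$; no coherent choice of witnesses is needed. This observation is essentially the paper's entire proof of this direction.

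Moreover, the fallback mechanism you sketch would not close the gap: edges of $\splgraph_\constr$ that are internal to an equation connect positions on the two \emph{opposite} sides of that equation (and only when the relevant variable occurs there at a distinct position), never two positions on the same side, so they cannot realign an entry with an exit on the same side $j_{r+1}$; and a detour across the equation and back would require edges in both directions whose existence is not guaranteed --- indeed such round trips are exactly the kind of cycle chain-freeness forbids. So, as written, the hard half of the claim remains unproven: your proposal correctly reduces it to the alignment issue but lacks the one fact about $\edges_\constr$ (source-position independence of edges) that resolves it.
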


\begin{claimproof}
($\Rightarrow$) 
By contradiction.
Assume that $\splgraph_\constr$ is chain-free and~$\ubergraph_\constr$ contains
a~cycle $j_0 \to j_1 \to \ldots \to j_\ell \to j_0$.
From the definition of edges in a~splitting graph, it follows that there is an
edge from $(j_i, k_i)$ to $(j_{i'}, k_{i'})$ in~$\splgraph_\constr$ for $i' = i
+ 1 \modulo \ell$ iff there is an edge from every $(j_i, k_\alpha)$ for $1 \leq
k_\alpha \leq n_{j_i}$ to $(j_{i'}, k_{i'})$.
We can therefore construct a~chain $(j_0, k_0), (j_1, k_1), \ldots, (j_\ell,
k_\ell), (j_0, k_0)$ in~$\splgraph_\constr$ where each $k_{i + 1 \modulo \ell}$ is
a~position with an incoming edge from $(j_i, k_i)$, which is a~contradiction.
\medskip

\noindent
($\Leftarrow$) 
Because $\ubergraph_\constr$ can be seen as an existential abstraction
of~$\splgraph_\constr$, a~chain in~$\splgraph_\constr$ implies a~cycle
in~$\ubergraph_\constr$.
\end{claimproof}

\begin{claim}\label{claim:ubergraph_sinks_sources}
Let $\sterm_i = \tterm_i$ be an equation of~$\constr$ and~$(2i-1)$ and~$(2i)$ be
the nodes of~$\ubergraph_\constr$ that correspond to~$\sterm_i$ and~$\tterm_i$
respectively.
Then the node~$(2i - 1)$ is a~source (resp.\ a~sink) iff the node~$(2i)$ is a~sink
(resp.\ a~source).
%
\end{claim}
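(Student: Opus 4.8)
The plan is to reduce the claim to a single \emph{duality} property of the graph $\ubergraph_\constr$ and then read off sources and sinks from it. For a side $j\in\{1,\dots,2m\}$, write $\overline{j}$ for the opposite side of the same equation, so that $\overline{2i-1}=2i$ and $\overline{2i}=2i-1$; note that $\overline{(\cdot)}$ is an involution, hence a bijection on $\{1,\dots,2m\}$. The property I would establish is
\[
  (j_1,j_2)\in H \quad\Longleftrightarrow\quad (\overline{j_2},\overline{j_1})\in H,
\]
i.e.\ reversing an edge of $\ubergraph_\constr$ while replacing both endpoints by their opposite sides preserves edges. Granting this, the claim is immediate: $(2i-1)$ is a source iff it has no incoming edge, iff $(j,2i-1)\notin H$ for all $j$; by the duality (instantiated at $j_1=j$, $j_2=2i-1$) this is equivalent to $(\overline{2i-1},\overline{j})=(2i,\overline{j})\notin H$ for all $j$, and since $\overline{(\cdot)}$ is a bijection this says exactly that $(2i,j')\notin H$ for all $j'$, i.e.\ that $(2i)$ has no outgoing edge and is a sink. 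The companion statement ``$(2i-1)$ is a sink iff $(2i)$ is a source'' is obtained by the same computation with the roles of incoming and outgoing edges exchanged, so the parenthetical ``resp.'' case needs no separate treatment.

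To prove the duality I would unfold the two levels of definitions. By definition of $\ubergraph_\constr$, $(j_1,j_2)\in H$ holds iff $((j_1,k_1),(j_2,k_2))\in\edges_\constr$ for some $k_1,k_2$. Unfolding the definition of the edges of $\splgraph_\constr$, and using the very uniformity over the source index that is exploited in the proof of \cref{claim:ubergraph_acyclicity} (the presence of such an edge is independent of $k_1$), one obtains a purely side-level characterisation: there is an edge on side $j_1$ into position $(j_2,k_2)$ exactly when the variable $\varmapof{(j_2,k_2)}$ occurs, as a distinct occurrence, on the opposite side $\overline{j_1}$. Summing over $k_1,k_2$ yields
\[
  (j_1,j_2)\in H \quad\Longleftrightarrow\quad \overline{j_1}\ \text{and}\ j_2\ \text{share a variable}.
\]
Because ``sharing a variable'' is symmetric and $\overline{(\cdot)}$ is an involution, the right-hand side is invariant under $(j_1,j_2)\mapsto(\overline{j_2},\overline{j_1})$ (it turns into ``$j_2$ and $\overline{j_1}$ share a variable''), which is precisely the duality. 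I note that the symmetry argument is insensitive to which of the two natural orientations the splitting-graph definition actually produces: with the mirror convention the characterisation reads ``$j_1$ and $\overline{j_2}$ share a variable'', and the same invariance still holds.

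The cleanest way to finish, which also makes the borderline cases transparent, is to observe that both ``$(2i-1)$ has an incoming edge'' and ``$(2i)$ has an outgoing edge'' unfold—via $\overline{2i-1}=2i$, $\overline{2i}=2i-1$, bijectivity of $\overline{(\cdot)}$, and symmetry of variable-sharing—to one and the same side-level condition attached to equation $i$ (that $\sterm_i$ shares a variable with some side of the system); hence they coincide, which is exactly the statement. The main obstacle I expect is not the symmetry argument itself but the bookkeeping in the first unfolding step: faithfully translating the splitting-graph edge definition of~\cite{ChainFree} into the side-level form, and checking that the ``distinct occurrence'' caveat admits the degenerate contributions—a variable repeated inside a single side (which produces the edge $(2i)\to(2i-1)$) and a variable shared between $\sterm_i$ and $\tterm_i$ (which produces the self-loops at $2i-1$ and at $2i$)—into the incoming condition at $(2i-1)$ and the outgoing condition at $(2i)$ in exactly matching fashion, so that the final equivalence is exact rather than merely ``up to degenerate cases''.
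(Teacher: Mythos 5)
Your proof is correct and rests on the same key fact as the paper's own argument: unfolding an \"{u}bergraph edge $(j_1,j_2)\in H$ through the splitting-graph definition shows that the opposite side $\overline{j_1}$ shares a variable with side $j_2$, which by symmetry of variable-sharing forces the dual edge $(\overline{j_2},\overline{j_1})\in H$. The paper applies exactly this observation inside a proof by contradiction for the one instance it needs (if $(2i-1)$ is a source and an edge $(2i)\to(2h)$ existed, the shared variable would yield the edge $(2h-1)\to(2i-1)$), whereas you extract it as a standalone involution-duality lemma and read off both the source/sink and sink/source directions at once --- a difference of packaging, not of substance.
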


\begin{claimproof}
  Let us show that if node $(2i-1)$ is a~source in $\ubergraph_\constr$, then 
  node~$(2i)$ is a~sink (the other part can be proved similarly).
  We proceed by contradiction: assume that $(2i-1)$ is a~source but~$(2i)$ is
  not a~sink, i.e., w.l.o.g., there is an~edge $(2i) \to (2h)$.
  From the definition of \"{u}bergraph, it needs to hold that there is an edge
  $((2i, k_\alpha), (2h, k_\beta))$ in $\splgraph_\constr$ for some $1 \leq
  k_\alpha \leq n_{2i}$ and $1 \leq k_\beta \leq n_{2h}$.
  Let $\sterm_i = \tterm_i$ and $\sterm_h = \tterm_h$ be the two equations
  whose sides correspond to nodes $(2i-1)$, $(2i)$, $(2h-1)$, and $(2h)$ respectively.
  W.l.o.g, let $\varmapof{(2i, k_\alpha)} = x$  and $\varmapof{(2h, k_\beta)} = y$.
  For the edge $((2i, k_\alpha), (2h, k_\beta))$ to be in~$\splgraph_\constr$,
  it needs to hold that $\sterm_i$ contains an occurrence of variable~$y$, say,
  at position $(2i - 1, k_\gamma)$.
  Since~$y$ has an occurrence in~$\tterm_h$ and also in~$\sterm_i$, there will
  also be an edge from $(2h-1)$ to $(2i-1)$, which is a~contradiction with the
  fact that~$(2i-1)$ is a~source.
\end{claimproof}

Our procedure for constructing the inclusion graph~$\graph$ for~$\constr$ now
works as follows:
\begin{enumerate}
  \item  First, let $\sinks$ be the set of \emph{sink} nodes
    of~$\ubergraph_\constr$ and $\sources$ be the set of its \emph{source} nodes.
    We also start with an empty inclusion graph~$\graph = (\emptyset,
    \emptyset)$.
  \item  Then we proceed in a~loop until $\ubergraph_\constr$ is empty:
      \begin{enumerate}
        \item  We pick a~pair of nodes $(2i-1)$ and $(2i)$ such that $(2i-1)$ is
          a~sink and $(2i)$ is a~source (if it is the other way around, we just
          switch the sides of the $i$-th equation).
          \cref{claim:ubergraph_sinks_sources} and acyclicity of
          $\ubergraph_\constr$ (\cref{claim:ubergraph_acyclicity}) guarantee
          that this is possible.
          We remove $(2i-1)$ and $(2i)$ from~$\ubergraph_\constr$.

        \item  We insert vertex $\graphnode{\sterm_i}{\tterm_i}$ into~$\graph$
          (the two sides of the inclusion correspond to nodes $(2i-1)$ and $(2i)$
          of the \"{u}bergraph).

        \item  For every node of the modified~$\ubergraph_\constr$ that is not
          in $\sinks \cup \sources$, we check if it is a~sink (resp.\
          a~source) and add it to $\sinks$ (resp.\ $\sources$).

      \end{enumerate}

  \item  Finally, we set the set of edges of~$\graph$ to satisfy
    \igcondref{cond:edges} (this is deterministic).
\end{enumerate}

\begin{claim}
$\graph$~is an acyclic inclusion graph.
\end{claim}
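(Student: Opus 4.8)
The plan is to check that the graph~$\graph$ produced by the procedure is an inclusion graph, i.e.\ that it satisfies \igcondref{cond:atleastone}--\igcondref{cond:cycle}, and, separately, that it is acyclic. Conditions \igcondref{cond:atleastone} and \igcondref{cond:edges} are immediate: the loop inserts exactly one oriented vertex $\graphnode{\sterm_i}{\tterm_i}$ per equation, and Step~3 installs edges precisely according to \igcondref{cond:edges}. Moreover, once acyclicity is shown, \igcondref{cond:cycle} holds vacuously, since no vertex can then lie on a cycle. So the two genuine obligations are acyclicity and \igcondref{cond:multioccur}, and both will be reduced to the scheduling of node removals. The single device driving both is a characterisation of übergraph edges that is implicit in the proof of \cref{claim:ubergraph_sinks_sources}: writing $\overline P$ for the side opposite to side~$P$ in its equation, there is an edge $P \to Q$ in $\ubergraph_\constr$ iff $\overline P$ and~$Q$ share a variable (at distinct occurrences). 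I would first record this equivalence explicitly. Throughout, let $\tau(i)$ denote the step at which equation~$i$ is removed, and let $S_i$ (the chosen sink side, $=\sterm_i$) and $T_i$ (the chosen source side, $=\tterm_i$) be the übergraph nodes of the inserted vertex $\graphnode{\sterm_i}{\tterm_i}$; note $\overline{S_i}=T_i$ and $\overline{T_i}=S_i$.

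For acyclicity I would show that every edge of~$\graph$ strictly increases~$\tau$. An edge $\graphnode{\sterm_i}{\tterm_i}\to\graphnode{\sterm_j}{\tterm_j}$ means, by \igcondref{cond:edges}, that $\sterm_i$ and $\tterm_j$ share a variable; since $\sterm_i=\overline{T_i}$ and $\tterm_j=T_j$, the edge characterisation yields an übergraph edge $T_i\to T_j$. Because $T_j$ is a source in the reduced graph at step $\tau(j)$, all its in-neighbours, in particular $T_i$ (removed at step $\tau(i)$), must already be gone, so $\tau(i)<\tau(j)$. Hence $\tau$ is a topological order of~$\graph$, and $\graph$ is acyclic.

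For \igcondref{cond:multioccur} I would argue that its premise never holds, so the condition is vacuously satisfied (consistent with the procedure never inserting a dual). Concretely, no variable has two occurrences among the right-hand sides $\{T_i\}$, which splits into two cases. If a variable~$x$ occurred on two distinct source sides $T_i,T_j$, the edge characterisation would give übergraph edges $S_i\to T_j$ and $S_j\to T_i$, forcing both $\tau(i)<\tau(j)$ and $\tau(j)<\tau(i)$ by the sourceness argument above---a contradiction. If $x$ occurred twice on a single side~$A$ of equation~$i$, then $\overline A$ and~$A$ share~$x$ at distinct occurrences, yielding the within-equation edge $\overline A\to A$; this prevents~$A$ from ever being a source and $\overline A$ from ever being a sink while equation~$i$ is present, so at its removal $A$ must be the chosen sink side $\sterm_i$ and $\overline A$ the source side $\tterm_i$, putting the repeated variable on the left and never on the right. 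Together these exhaust all repetitions inside right-hand sides.

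The main obstacle is pinning down the übergraph-edge characterisation precisely---including the ``distinct occurrences'' caveat, which is exactly what keeps the two oriented vertices of an equation like $u=v$ from creating spurious edges---and then converting each such edge into a crisp inequality on the removal order~$\tau$. Once this bridge between the combinatorics of $\ubergraph_\constr$ and the scheduling of removals is in place, both acyclicity and \igcondref{cond:multioccur} fall out of the same observation about sources. A minor point to dispatch along the way is that the procedure actually empties $\ubergraph_\constr$, so that every $\tau(i)$ is well defined; this rests on \cref{claim:ubergraph_acyclicity} and \cref{claim:ubergraph_sinks_sources}, which together guarantee that a sink/source pair is always available.
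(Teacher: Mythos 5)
Your proof is correct and rests on the same two pillars as the paper's: the characterisation of \"{u}bergraph edges ($P \to Q$ in $\ubergraph_\constr$ iff $\overline{P}$ and $Q$ share a variable at distinct occurrences, which is exactly what the paper uses implicitly in \cref{claim:ubergraph_sinks_sources} and in both cases of its \igcondref{cond:multioccur} argument), and the observation that the chosen source side can have no in-neighbour still present at its removal. Your handling of \igcondref{cond:atleastone}, \igcondref{cond:edges}, and \igcondref{cond:cycle} is identical to the paper's, and your two cases for \igcondref{cond:multioccur} are precisely the paper's cases (b) and (a), respectively. The one genuine difference is the acyclicity argument: the paper lifts a hypothetical cycle in $\graph$ to a cycle $\tterm_1 \to \cdots \to \tterm_\ell \to \tterm_1$ in $\ubergraph_\constr$ and contradicts \cref{claim:ubergraph_acyclicity} (i.e., chain-freeness), whereas you show that the removal time $\tau$ is a strict topological order of $\graph$. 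Your version is slightly more self-contained --- acyclicity of $\graph$ follows from the scheduling alone, with chain-freeness needed only to guarantee that a sink/source pair always exists so that $\tau$ is total --- and it disposes of self-loops in $\graph$ uniformly ($\tau(i) < \tau(i)$ is absurd), a degenerate case the paper's phrasing leaves implicit. One slip to fix: in your second \igcondref{cond:multioccur} case, the within-equation edge $\overline{A} \to A$ follows because $A$ shares $x$ \emph{with itself} at two distinct occurrences (instantiate the characterisation with $P = \overline{A}$, so $\overline{P} = A$ and $Q = A$), not because ``$\overline{A}$ and $A$ share $x$'' --- as written, $x$ need not occur in $\overline{A}$ at all; the edge you derive and the use you make of it are nevertheless correct.
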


\begin{claimproofnoqed}
  Let us show that the conditions of acyclic inclusion graphs hold for~$\graph$.
  \begin{enumerate}
    \item[\condref{cond:atleastone}:]
      This obviously holds because the construction of~$\graph$
      from~$\ubergraph_\constr$ proceeds until $\ubergraph_\constr$ is empty, in
      each iteration removing one equation $\sterm_i = \tterm_i$
      from~$\ubergraph_\constr$ and adding into~$\graph$ either
      $\graphnode{\sterm_i}{\tterm_i}$ or $\graphnode{\tterm_i}{\sterm_i}$.

    \item[(acyclicity):]
      By contradiction: assume that $\splgraph_\constr$ is chain-free but
      $\graph$ contains a~cycle $\graphnode{\sterm_1}{\tterm_1} \to 
      \graphnode{\sterm_2}{\tterm_2} \to
      \graphnode{\sterm_3}{\tterm_3} \to
      \ldots \to
      \graphnode{\sterm_\ell}{\tterm_\ell} \to
      \graphnode{\sterm_1}{\tterm_1}$.
      From the construction of edges in~$\graph$, it needs to hold that~$\sterm_1$
      and~$\tterm_2$ share a~variable (say~$x_1$), $\sterm_2$~and~$\tterm_3$
      share a~variable (say~$x_2$), \ldots, and~$\sterm_\ell$ and~$\tterm_1$
      share a~variable (say~$x_\ell$).
      Looking at the \"{u}bergraph~$\ubergraph_\constr$, there will then be
      an edge from~$\tterm_1$ to~$\tterm_2$ (since~$\sterm_1$ and~$\tterm_2$
      both contain~$x_1$), $\tterm_2$~to~$\tterm_3$, \ldots,
      $\tterm_\ell$~to~$\tterm_1$, creating a~cycle in~$\ubergraph_\constr$.
      From \cref{claim:ubergraph_acyclicity} it follows that~$\splgraph_\constr$
      is not chain-free, which is a~contradiction.

    \item[\condref{cond:multioccur}:]
      We show the antecedent of the condition never holds, i.e., that there is
      no vertex $\graphnode{\sterm}{\tterm}$ in~$\graph$ such that~$\tterm$
      contains a~variable with multiple occurrences on right-hand sides of
      vertices of~$\graph$.
      We proceed by contradiction and assume that there are multiple occurrences
      of some variable~$x$ on the right-hand sides of vertices in~$\graph$.
      Let us analyze the following two possible cases:
      \begin{enumerate}
        \item  \emph{$\graph$~contains a~vertex $\graphnode \sterm \tterm$ with (at
          least) two occurrences of~$x$ in~$\tterm$.}
          Then, in~$\ubergraph_\constr$, there will be an edge from~$\sterm$
          to~$\tterm$.
          Therefore, in the algorithm, at some point, $\sterm$~will become
          a~source and~$\tterm$ will become a~sink.
          But then the node inserted into~$\graph$ will be~$\graphnode \tterm
          \sterm$, which is a~contradiction with the fact that~$\graph$
          contains~$\graphnode \sterm \tterm$ and (acyclicity).

        \item  \emph{$\graph$~contains two different vertices $\graphnode{\sterm_1}{\tterm_1}$
          and $\graphnode{\sterm_2}{\tterm_2}$ with~$x$ in both~$\tterm_1$
          and~$\tterm_2$.}
          The two vertices correspond to two equations $\sterm_1 = \tterm_1$ and
          $\sterm_2 = \tterm_2$.
          Since~$x$ is in~$\tterm_1$ and in~$\tterm_2$, there will be edges
          $\sterm_1 \to \tterm_2$ and $\sterm_2 \to \tterm_1$
          in~$\ubergraph_\constr$.
          W.l.o.g., assume that the nodes~$\sterm_1$ and~$\tterm_1$ were removed
          from~$\ubergraph_\constr$ before~$\sterm_2$ and~$\tterm_2$,
          adding~$\graphnode{\sterm_1}{\tterm_1}$ into~$\graph$.
          Since~$\sterm_1$ has an outgoing edge and~$\tterm_1$ has an incoming
          edge, at best, $\sterm_1$~can be a~source and~$\tterm_1$ can be
          a~sink, which would mean that we would
          add~$\graphnode{\tterm_1}{\sterm_1}$ to~$\graph$, which is a
          contradiction with the fact that~$\graph$
          contains~$\graphnode{\sterm_1}{\tterm_1}$ and (acyclicity).

      \end{enumerate}

    \item[\condref{cond:edges}:]
      Trivial from the construction.

    \item[\condref{cond:cycle}:]
      Follows from (acyclicity).
      \claimqed
  \end{enumerate}
\end{claimproofnoqed}

\medskip

\noindent
($\Leftarrow$)
By contradiction:
let~$\graph$ be an acyclic inclusion graph of~$\constr$ and assume that there is
a~cycle $\sterm_1 \to \sterm_2 \to \ldots \to \sterm_\ell \to \sterm_1$
in~$\ubergraph_\constr$ (by \cref{claim:ubergraph_acyclicity}, considering
a~cycle in the \"{u}bergraph is sufficient).
Further, assume that the equations whose sides participate in the cycle are 
$\sterm_1 = \tterm_1$, \ldots, $\sterm_\ell = \tterm_\ell$.
From the definition of the splitting graph, we know that~$\tterm_1$
and~$\sterm_2$ share a~variable (say~$x_1$),
$\tterm_2$~and~$\sterm_3$ share a~variable (say~$x_2$),
\ldots,
and~$\tterm_\ell$ and~$\sterm_1$ share a~variable (say~$x_\ell$).
From the definition of an inclusion graph, if it is acyclic, then from each
equation $\sterm = \tterm$, there can only be one inclusion in~$\graph$
(either~$\graphnode \sterm \tterm$ or~$\graphnode \tterm \sterm$).
There are then two general ways how the inclusion graph for~$\constr$ can look
like:
\begin{enumerate}
  \item  It contains the nodes
    $\graphnode{\tterm_1}{\sterm_1}$, 
    $\graphnode{\tterm_2}{\sterm_2}$, 
    \ldots,
    $\graphnode{\tterm_\ell}{\sterm_\ell}$.
    But then, from \igcondref{cond:edges}, since~$\tterm_1$ and~$\sterm_2$
    share~$x_1$, $\tterm_2$~and~$\sterm_3$ share~$x_2$, \ldots,
    $\tterm_\ell$~and~$\sterm_1$ share~$x_\ell$, there will be a~cycle
    $\graphnode{\tterm_1}{\sterm_1} \to \graphnode{\tterm_2}{\sterm_2} \to
    \ldots
    \to \graphnode{\tterm_\ell}{\sterm_\ell}
    \to \graphnode{\tterm_1}{\sterm_1}$ in~$\graph$, which is a~contradiction
    with the fact that~$\graph$ is acyclic.
    (similar reasoning holds for nodes
    $\graphnode{\sterm_\ell}{\tterm_\ell}$, 
    \ldots,
    $\graphnode{\sterm_2}{\tterm_2}$, 
    $\graphnode{\sterm_1}{\tterm_1}$).

  \item  Or it contains the nodes
    $\graphnode{\tterm_1}{\sterm_1}$, 
    $\graphnode{\tterm_2}{\sterm_2}$, 
    \ldots,
    $\graphnode{\tterm_\ell}{\sterm_\ell}$ with at least one inclusion
    $\graphnode{\sterm_i}{\tterm_i}$ swapped such that the next inclusion is in the
    ``normal'' direction: $\graphnode{\tterm_{i+1}}{\sterm_{i+1}}$.
    But then, since~$\tterm_i$ shares~$x_i$ with~$\sterm_{i+1}$, the
    variable~$x_i$ occurs multiple times on the right-hand sides of vertices
    of~$\graph$, which means (due to \igcondref{cond:multioccur}) that~$\graph$
    also contains the vertex~$\graphnode{\tterm_i}{\sterm_i}$ and from
    \igcondref{cond:edges}, it follows that there is a~cycle in~$\graph$, which
    is a~contradiction with the fact that~$\graph$ is acyclic.
    \qed

\end{enumerate}

\end{proof}

\vspace{-0.0mm}
\section{Proof of \cref{thm:soundness}}\label{sec:proof-soundness}
\vspace{-0.0mm}

\thmSoundness*

\begin{proof}
  We inductively prove the following invariant of the algorithm that need to hold in every iteration of the main 
  loop: for each $(\autass, W) \in \worklist$ and $v = \graphnode \sterm \tterm \in V\setminus W$ it holds that $\langof{\autass(\sterm)} \subseteq 
  \langof{\autass(\tterm)}$. 
  \begin{enumerate}
    \item Base case: At the beginning $W = V$ and therefore the invariant holds trivially.
    \item Inductive case: Assume that the invariant holds for all $(\autass, W)$ in $\worklist$. We prove that it 
    is still valid after an iteration of the main loop. Consider $(\autass, W)$ on line 3. If the condition on line 
    \ref{line:inclusiontest} holds then for $(\autass, W\setminus\{v\})$ the invariant clearly holds. We proceed with 
    the case the condition is not fulfilled. Now assume the case that $\sterm$ and $\tterm$ do not share a variable. Then,
    from the property of $\refine$, we have that for each $\autass'\in \tightset$: $\langof{\autass'(\sterm)} \subseteq 
    \langof{\autass'(\tterm)}$, therefore, $v$ need not be included in $W'$. From the condition \condref{cond:edges} of the 
    inclusion graph, we have that only successors of $v$ in $\graph_\constreq$ might be affected by the refinement of $\sterm$ 
    (other variables remain the same). Hence, for each $(\autass', W')$ where $\autass'\in \tightset$ the invariant holds. 
    For the case that $\sterm$ and $\tterm$ share a variable, the reasoning is the same as in the previous case except that 
    the inclusion $\langof{\autass'(\sterm)} \subseteq \langof{\autass'(\tterm)}$ might not be true in general. However, 
    again from \condref{cond:edges} we get that $v$ has a self-loop in $\graph_\constreq$ and hence it is included to 
    $W'$ as well. 
  \end{enumerate}

  Based on the invariant, if the algorithm returns $\sat$, we have that the $\graph_\constreq$ is stable wrt. $\autass$ (line 4). 
  From \cref{thm:graphwstable} we then obtain that $\constr$ is satisfiable. Further, we know 
  that $\refine$ preserves solutions. Therefore, for each solution $\nu$ of $\constr$, there is some $(\autass'', W'') \in \worklist$ 
  s.t. $\nu(x) \in \langof{\autass''(x)}$ for each variable $x$. Hence, if the algorithm returns $\unsat$, $\worklist = \emptyset$, which 
  means that $\constr$ is unsatisfiable. \qed
\end{proof}

\vspace{-0.0mm}
\section{Proof of \cref{thm:chain-free-compl}}\label{sec:proof-completeness}
\vspace{-0.0mm}

\thmChainFreeComplete*

\begin{proof}
  In the following, successors of a $(\autass, W)$ we mean all $(\autass', W')$ that were transitively obtained 
  from the refinement of $(\autass, W)$. Therefore, the computation of the algorithm can be seen as an infinite 
  tree whose vertices are labelled by items from $\worklist$.
  Since the inclusion graph $\graph_\constreq$ is acyclic, it is possible 
  do a topological sort of vetices as $\{ v_1, \dots, v_n \} = V$. From the FIFO used for storing the vertices, 
  we hence have if $W \cap \{ v_1, \dots, v_k \} = \emptyset$
  for some $(\autass, W)$ and $k\in\N$, then for all future successors $(\autass', W')$ of $(\autass, W)$ 
  we have $W' \cap \{ v_1, \dots, v_k \} = \emptyset$. Moreover, since we use FIFO structure in the algorithm and the inclusion graph is acyclic, 
  each branch leading from $(\constrautass,\vertices)$ contains either a vertex having no successors (returning 
  $\sat$ or $\tightset = \emptyset$) or a vertex $(\autass', W')$ where $v_1 \notin W'$. The same reasoning can be 
  repeated for such vertices for $v_2$ etc. Therefore, we have if $W \cap \{ v_1, \dots, v_k \} = \emptyset$ for 
  some vertex $(\autass, W)$, then all branches leading from this vertex contains either a vertex having no 
  successors or a vertex $(\autass', W')$ where $\{ v_1, \dots, v_k, v_{k+1} \} \cap W' = \emptyset$ where $k < |V|$.
  Using this argument, we immediately get that all branches ends with a vertex having no successors, or with a 
  vertex having $W = \emptyset$ and therefore, the algorithm always terminates. The rest is a direct consequence 
  of \cref{thm:soundness}.
  \qed
\end{proof}

\vspace{-0.0mm}
\section{Proof of \cref{thm:sattermination}}\label{sec:proof-termination}
\vspace{-0.0mm}

\thmSatTermination* 

\begin{proof}
  \emph{(Idea)} Assume that $\constr$ is satisfiable and therefore it has a fixed 
  solution $\nu$. Then, in each iteration of the algorithm, 
  we have that the length of a current shortest word of $\autass(x)$ is strictly bigger 
  than the length of a shortest word of $\autass'(x)$ (on line~13) or it holds 
  $\Min{\autass'(x)} \subset \Min{\autass(x)}$. Therefore, after a finite number of 
  steps we reach,in the worst case, an automata assignment having $\autass$ s.t. 
  $\Min{\autass(x)} = \{ \nu(x) \}$ for each variable $x$, which is stable on 
  line~6 and hence, the algorithm eventually returns $\sat$.
  \qed
\end{proof}

\vspace{-0.0mm}
\section{Noodles for the example in \cref{sec:overview}}\label{sec:noodles_example}
\vspace{-0.0mm}

The noodles for the example in \cref{sec:overview} are in \cref{fig:ex-noodles}.

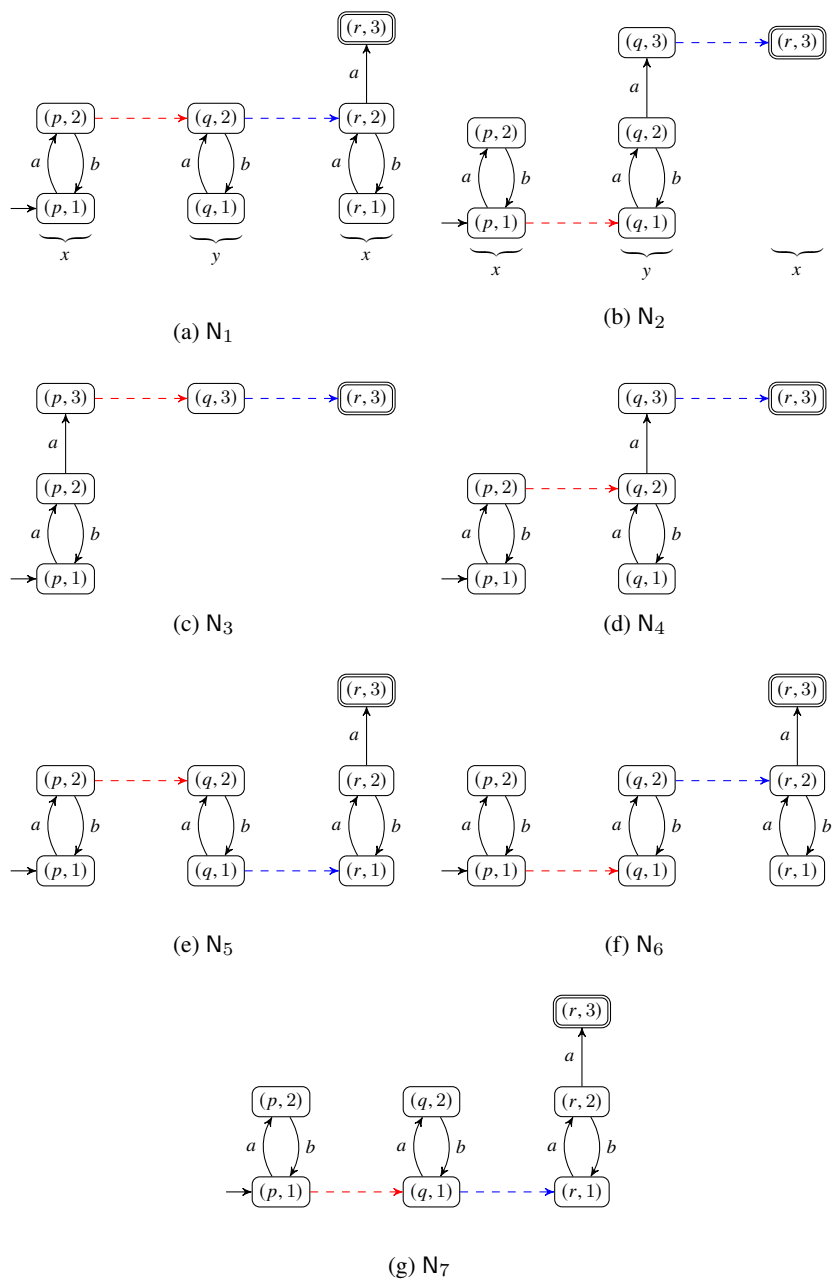
\begin{figure}
  \centering
  \begin{subfigure}{0.45\textwidth}
    \begin{tikzpicture}[->,>=stealth',shorten >=0pt,auto,node distance=15mm,transform shape,scale=0.8]
    \tikzset{prod/.style={draw,rectangle,rounded corners=1mm,inner sep=3pt}}
    \tikzset{hidden/.style={draw=none,fill=white,opacity=0}}

    \node[prod, initial, initial text={}] (p1) {$(p,1)$};
    \node[prod,above of=p1] (p2) {$(p,2)$};
    \node[hidden,above of=p2] (p3) {$(p,3)$};

    \node[prod,right of=p1, node distance=25mm] (q1) {$(q,1)$};
    \node[prod,above of=q1] (q2) {$(q,2)$};
    \node[hidden,above of=q2] (q3) {$(q,3)$};

    \node[prod,right of=q1, node distance=25mm] (r1) {$(r,1)$};
    \node[prod,above of=r1] (r2) {$(r,2)$};
    \node[prod,above of=r2,accepting] (r3) {$(r,3)$};

    \node[below of=p1,yshift=9mm] (p_below) {$\underset{\displaystyle x}{\underbrace{~~~~~~~~~~~}}$};
    \node[below of=q1,yshift=8.5mm] (q_below) {$\underset{\displaystyle y}{\underbrace{~~~~~~~~~~~}}$};
    \node[below of=r1,yshift=9mm] (r_below) {$\underset{\displaystyle x}{\underbrace{~~~~~~~~~~~}}$};

    \draw (p1) edge[bend left] node{$a$} (p2)
          (p2) edge[bend left] node{$b$} (p1)
          (q1) edge[bend left] node{$a$} (q2)
          (q2) edge[bend left] node{$b$} (q1)
          (r1) edge[bend left] node{$a$} (r2)
          (r2) edge[bend left] node{$b$} (r1)
          (r2) edge node{$a$} (r3);
    \draw[dashed,->,color=red] (p2) -- (q2);
    \draw[dashed,->,color=blue]      (q2) -- (r2);

  \end{tikzpicture}
    \caption{$\noodle_1$}
  \end{subfigure}
  ~
  \begin{subfigure}{0.45\textwidth}
    \begin{tikzpicture}[->,>=stealth',shorten >=0pt,auto,node distance=15mm,transform shape,scale=0.8]
    \tikzset{prod/.style={draw,rectangle,rounded corners=1mm,inner sep=3pt}}
    \tikzset{hidden/.style={draw=none,fill=white,opacity=0}}

    \node[prod, initial, initial text={}] (p1) {$(p,1)$};
    \node[prod,above of=p1] (p2) {$(p,2)$};
    \node[hidden,above of=p2] (p3) {$(p,3)$};

    \node[prod,right of=p1, node distance=25mm] (q1) {$(q,1)$};
    \node[prod,above of=q1] (q2) {$(q,2)$};
    \node[prod,above of=q2] (q3) {$(q,3)$};

    \node[hidden,right of=q1, node distance=25mm] (r1) {$(r,1)$};
    \node[hidden,above of=r1] (r2) {$(r,2)$};
    \node[prod,above of=r2,accepting] (r3) {$(r,3)$};

    \node[below of=p1,yshift=9mm] (p_below) {$\underset{\displaystyle x}{\underbrace{~~~~~~~~~~~}}$};
    \node[below of=q1,yshift=8.5mm] (q_below) {$\underset{\displaystyle y}{\underbrace{~~~~~~~~~~~}}$};
    \node[below of=r1,yshift=9mm] (r_below) {$\underset{\displaystyle x}{\underbrace{~~~~~~~~~~~}}$};

    \draw (p1) edge[bend left] node{$a$} (p2)
          (p2) edge[bend left] node{$b$} (p1)
          (q1) edge[bend left] node{$a$} (q2)
          (q2) edge[bend left] node{$b$} (q1)
          (q2) edge node{$a$} (q3);
    \draw[dashed,->,color=red]      (p1) -- (q1);
    \draw[dashed,->,color=blue]      (q3) -- (r3);

  \end{tikzpicture}
    \caption{$\noodle_2$}
  \end{subfigure}
  
  \bigskip
  \begin{subfigure}{0.45\textwidth}
    \begin{tikzpicture}[->,>=stealth',shorten >=0pt,auto,node distance=15mm,transform shape,scale=0.8]
    \tikzset{prod/.style={draw,rectangle,rounded corners=1mm,inner sep=3pt}}
    \tikzset{hidden/.style={draw=none,fill=white,opacity=0}}

    \node[prod, initial, initial text={}] (p1) {$(p,1)$};
    \node[prod,above of=p1] (p2) {$(p,2)$};
    \node[prod,above of=p2] (p3) {$(p,3)$};

    \node[hidden,right of=p1, node distance=25mm] (q1) {$(q,1)$};
    \node[hidden,above of=q1] (q2) {$(q,2)$};
    \node[prod,above of=q2] (q3) {$(q,3)$};

    \node[hidden,right of=q1, node distance=25mm] (r1) {$(r,1)$};
    \node[hidden,above of=r1] (r2) {$(r,2)$};
    \node[prod,above of=r2,accepting] (r3) {$(r,3)$};

    \draw (p1) edge[bend left] node{$a$} (p2)
          (p2) edge[bend left] node{$b$} (p1)
          (p2) edge node{$a$} (p3);
    \draw[dashed,->,color=red]      (p3) -- (q3);
    \draw[dashed,->,color=blue]      (q3) -- (r3);

  \end{tikzpicture}
    \caption{$\noodle_3$}
  \end{subfigure}
  ~
  \begin{subfigure}{0.45\textwidth}
    \begin{tikzpicture}[->,>=stealth',shorten >=0pt,auto,node distance=15mm,transform shape,scale=0.8]
    \tikzset{prod/.style={draw,rectangle,rounded corners=1mm,inner sep=3pt}}
    \tikzset{hidden/.style={draw=none,fill=white,opacity=0}}

    \node[prod, initial, initial text={}] (p1) {$(p,1)$};
    \node[prod,above of=p1] (p2) {$(p,2)$};
    \node[hidden,above of=p2] (p3) {$(p,3)$};

    \node[prod,right of=p1, node distance=25mm] (q1) {$(q,1)$};
    \node[prod,above of=q1] (q2) {$(q,2)$};
    \node[prod,above of=q2] (q3) {$(q,3)$};

    \node[hidden,right of=q1, node distance=25mm] (r1) {$(r,1)$};
    \node[hidden,above of=r1] (r2) {$(r,2)$};
    \node[prod,above of=r2,accepting] (r3) {$(r,3)$};

    \draw (p1) edge[bend left] node{$a$} (p2)
          (p2) edge[bend left] node{$b$} (p1)
          (q1) edge[bend left] node{$a$} (q2)
          (q2) edge[bend left] node{$b$} (q1)
          (q2) edge node{$a$} (q3);
    \draw[dashed,->,color=red] (p2) -- (q2);
    \draw[dashed,->,color=blue]      (q3) -- (r3);

  \end{tikzpicture}
    \caption{$\noodle_4$}
  \end{subfigure}

  \bigskip
  \begin{subfigure}{0.45\textwidth}
    \begin{tikzpicture}[->,>=stealth',shorten >=0pt,auto,node distance=15mm,transform shape,scale=0.8]
    \tikzset{prod/.style={draw,rectangle,rounded corners=1mm,inner sep=3pt}}
    \tikzset{hidden/.style={draw=none,fill=white,opacity=0}}

    \node[prod, initial, initial text={}] (p1) {$(p,1)$};
    \node[prod,above of=p1] (p2) {$(p,2)$};
    \node[hidden,above of=p2] (p3) {$(p,3)$};

    \node[prod,right of=p1, node distance=25mm] (q1) {$(q,1)$};
    \node[prod,above of=q1] (q2) {$(q,2)$};
    \node[hidden,above of=q2] (q3) {$(q,3)$};

    \node[prod,right of=q1, node distance=25mm] (r1) {$(r,1)$};
    \node[prod,above of=r1] (r2) {$(r,2)$};
    \node[prod,above of=r2,accepting] (r3) {$(r,3)$};

    \draw (p1) edge[bend left] node{$a$} (p2)
          (p2) edge[bend left] node{$b$} (p1)
          (q1) edge[bend left] node{$a$} (q2)
          (q2) edge[bend left] node{$b$} (q1)
          (r1) edge[bend left] node{$a$} (r2)
          (r2) edge[bend left] node{$b$} (r1)
          (r2) edge node{$a$} (r3);
    \draw[dashed,->,color=red] (p2) -- (q2);
    \draw[dashed,->,color=blue]      (q1) -- (r1);

  \end{tikzpicture}
    \caption{$\noodle_5$}
  \end{subfigure}
  ~
  \begin{subfigure}{0.45\textwidth}
    \begin{tikzpicture}[->,>=stealth',shorten >=0pt,auto,node distance=15mm,transform shape,scale=0.8]
    \tikzset{prod/.style={draw,rectangle,rounded corners=1mm,inner sep=3pt}}
    \tikzset{hidden/.style={draw=none,fill=white,opacity=0}}

    \node[prod, initial, initial text={}] (p1) {$(p,1)$};
    \node[prod,above of=p1] (p2) {$(p,2)$};
    \node[hidden,above of=p2] (p3) {$(p,3)$};

    \node[prod,right of=p1, node distance=25mm] (q1) {$(q,1)$};
    \node[prod,above of=q1] (q2) {$(q,2)$};
    \node[hidden,above of=q2] (q3) {$(q,3)$};

    \node[prod,right of=q1, node distance=25mm] (r1) {$(r,1)$};
    \node[prod,above of=r1] (r2) {$(r,2)$};
    \node[prod,above of=r2,accepting] (r3) {$(r,3)$};

    \draw (p1) edge[bend left] node{$a$} (p2)
          (p2) edge[bend left] node{$b$} (p1)
          (q1) edge[bend left] node{$a$} (q2)
          (q2) edge[bend left] node{$b$} (q1)
          (r1) edge[bend left] node{$a$} (r2)
          (r2) edge[bend left] node{$b$} (r1)
          (r2) edge node{$a$} (r3);
    \draw[dashed,->,color=red]      (p1) -- (q1);
    \draw[dashed,->,color=blue]      (q2) -- (r2);

  \end{tikzpicture}
    \caption{$\noodle_6$}
  \end{subfigure}

  \bigskip
  \begin{subfigure}{0.45\textwidth}
    \begin{tikzpicture}[->,>=stealth',shorten >=0pt,auto,node distance=15mm,transform shape,scale=0.8]
    \tikzset{prod/.style={draw,rectangle,rounded corners=1mm,inner sep=3pt}}
    \tikzset{hidden/.style={draw=none,fill=white,opacity=0}}

    \node[prod, initial, initial text={}] (p1) {$(p,1)$};
    \node[prod,above of=p1] (p2) {$(p,2)$};
    \node[hidden,above of=p2] (p3) {$(p,3)$};

    \node[prod,right of=p1, node distance=25mm] (q1) {$(q,1)$};
    \node[prod,above of=q1] (q2) {$(q,2)$};
    \node[hidden,above of=q2] (q3) {$(q,3)$};

    \node[prod,right of=q1, node distance=25mm] (r1) {$(r,1)$};
    \node[prod,above of=r1] (r2) {$(r,2)$};
    \node[prod,above of=r2,accepting] (r3) {$(r,3)$};

    \draw (p1) edge[bend left] node{$a$} (p2)
          (p2) edge[bend left] node{$b$} (p1)
          (q1) edge[bend left] node{$a$} (q2)
          (q2) edge[bend left] node{$b$} (q1)
          (r1) edge[bend left] node{$a$} (r2)
          (r2) edge[bend left] node{$b$} (r1)
          (r2) edge node{$a$} (r3);
    \draw[dashed,->,color=red]      (p1) -- (q1);
    \draw[dashed,->,color=blue]      (q1) -- (r1);

  \end{tikzpicture}
    \caption{$\noodle_7$}
  \end{subfigure}

  \caption{Noodles obtained from $\noodlify(\prod)$ based on the product from
  \cref{fig:overview}.}
  \label{fig:ex-noodles}
\end{figure}

\vspace{-0.0mm}
\section{Additional Examples}\label{sec:add-examples}
\vspace{-0.0mm}

\newcommand{
\begin{wrapfigure}[5]{r}{1.5cm}
\vspace*{-7mm}
\hspace*{-2mm}
\begin{minipage}{1.5cm}
  \centering
\scalebox{1.0}{
\begin{tikzpicture}[->,>=stealth',shorten >=0pt,auto,node
  distance=15mm,transform shape,scale=0.8]
  \tikzset{every node/.style={rounded corners=1mm}}

  \node[draw] (x<xy) {$\graphnode{x}{xy}$};
  \node[draw,below of=x<xy,yshift=2mm] (xy<x) {$\graphnode{xy}{x}$};

  \draw (x<xy) edge[bend left] (xy<x);
  \draw (xy<x) edge[bend left] (x<xy);
  \draw (x<xy) edge[loop right] (x<xy);
  \draw (xy<x) edge[loop right] (xy<x);

\end{tikzpicture}
}
\end{minipage}
\end{wrapfigure}
}[0]{
\begin{wrapfigure}[5]{r}{1.5cm}
\vspace*{-7mm}
\hspace*{-2mm}
\begin{minipage}{1.5cm}
  \centering
\scalebox{1.0}{
\begin{tikzpicture}[->,>=stealth',shorten >=0pt,auto,node
  distance=15mm,transform shape,scale=0.8]
  \tikzset{every node/.style={rounded corners=1mm}}

  \node[draw] (x<xy) {$\graphnode{x}{xy}$};
  \node[draw,below of=x<xy,yshift=2mm] (xy<x) {$\graphnode{xy}{x}$};

  \draw (x<xy) edge[bend left] (xy<x);
  \draw (xy<x) edge[bend left] (x<xy);
  \draw (x<xy) edge[loop right] (x<xy);
  \draw (xy<x) edge[loop right] (xy<x);

\end{tikzpicture}
}
\end{minipage}
\end{wrapfigure}
}

\begin{minipage}{\textwidth}

\begin{wrapfigure}[5]{r}{1.5cm}
\vspace*{-7mm}
\hspace*{-2mm}
\begin{minipage}{1.5cm}
  \centering
\scalebox{1.0}{
\begin{tikzpicture}[->,>=stealth',shorten >=0pt,auto,node
  distance=15mm,transform shape,scale=0.8]
  \tikzset{every node/.style={rounded corners=1mm}}

  \node[draw] (x<xy) {$\graphnode{x}{xy}$};
  \node[draw,below of=x<xy,yshift=2mm] (xy<x) {$\graphnode{xy}{x}$};

  \draw (x<xy) edge[bend left] (xy<x);
  \draw (xy<x) edge[bend left] (x<xy);
  \draw (x<xy) edge[loop right] (x<xy);
  \draw (xy<x) edge[loop right] (xy<x);

\end{tikzpicture}
}
\end{minipage}
\end{wrapfigure}

\beginexample\label{ex:nonterminating}
  Consider for instance the system $xy=x \wedge x\in a^+ \wedge y\in a$.
  The inclusion graph (actually the only possible) is shown on the right.
  The initial automata assignment $\constrautass$ is then given as 
  $\langof{\constrautass(x)} = a^+$ and $\langof{\constrautass(x)} = \{ a\}$.
  The queue $\worklist$ on \lnref{ln:prop:init} of \cref{alg:propagate} is hence 
  initialised as $\worklist = \lifo{(\constrautass, \lifo{\graphnode{x}{xy}, \graphnode{xy}{x}}) }$.
  The computation then looks as follows:

  \begin{description}
    \item[1st iteration.] The condition on \lnref{line:inclusiontest} is not
    satisfied and, therefore, the algorithm calls
    $\refine(\graphnode{x}{xy}, \constrautass)$.
    The refinement yields a new automata assignment $\autass_1$ refining $\constrautass$ with $\autass_1(x) = a^+a$.
    The queue $\worklist$ is then given as $\lifo{(\autass_1, \lifo{\graphnode{xy}{x}, \graphnode{x}{xy}}) }$.

    \item[2nd iteration.] The condition on \lnref{line:inclusiontest} is satisfied and, therefore, 
      the queue $\worklist$ is updated to $\lifo{(\autass_1, \lifo{\graphnode{x}{xy}}) }$.

    \item[3rd iteration.] The condition on \lnref{line:inclusiontest} is not
    satisfied and, therefore, the algorithm calls
    $\refine(\graphnode{x}{xy}, \autass_1)$.
    The refinement yields a new automata assignment $\autass_2$ refining $\autass_1$ with $\autass_2(x) = a^+a^2$.
    The queue $\worklist$ is then given as $\lifo{(\autass_2, \lifo{\graphnode{xy}{x}, \graphnode{x}{xy}}) }$.

    \item[4th iteration.] The condition on \lnref{line:inclusiontest} is satisfied and, therefore, 
      the queue $\worklist$ is updated to $\lifo{(\autass_2, \lifo{\graphnode{x}{xy}}) }$.

    \item[] \dots
  \end{description}
  It is evident that \cref{alg:propagate} does not terminate on this case (which is clearly unsatisfiable), 
  since the refined automata assignments for $x$ reach $a^+a^n$ for all $n\in\N$. \qed
\end{minipage}
  %
  
  
  

  \section{Detailed Results}
  \label{sec:det-results}

  The plots comparing times of \noodler with other tools on all benchmarks are in 
  \cref{fig:pyex-all}, \cref{fig:kaluza-all}, \cref{fig:str-all}, and \cref{fig:slog-all}.

  \begin{figure}
    \centering
    \includegraphics[width=0.32\textwidth]{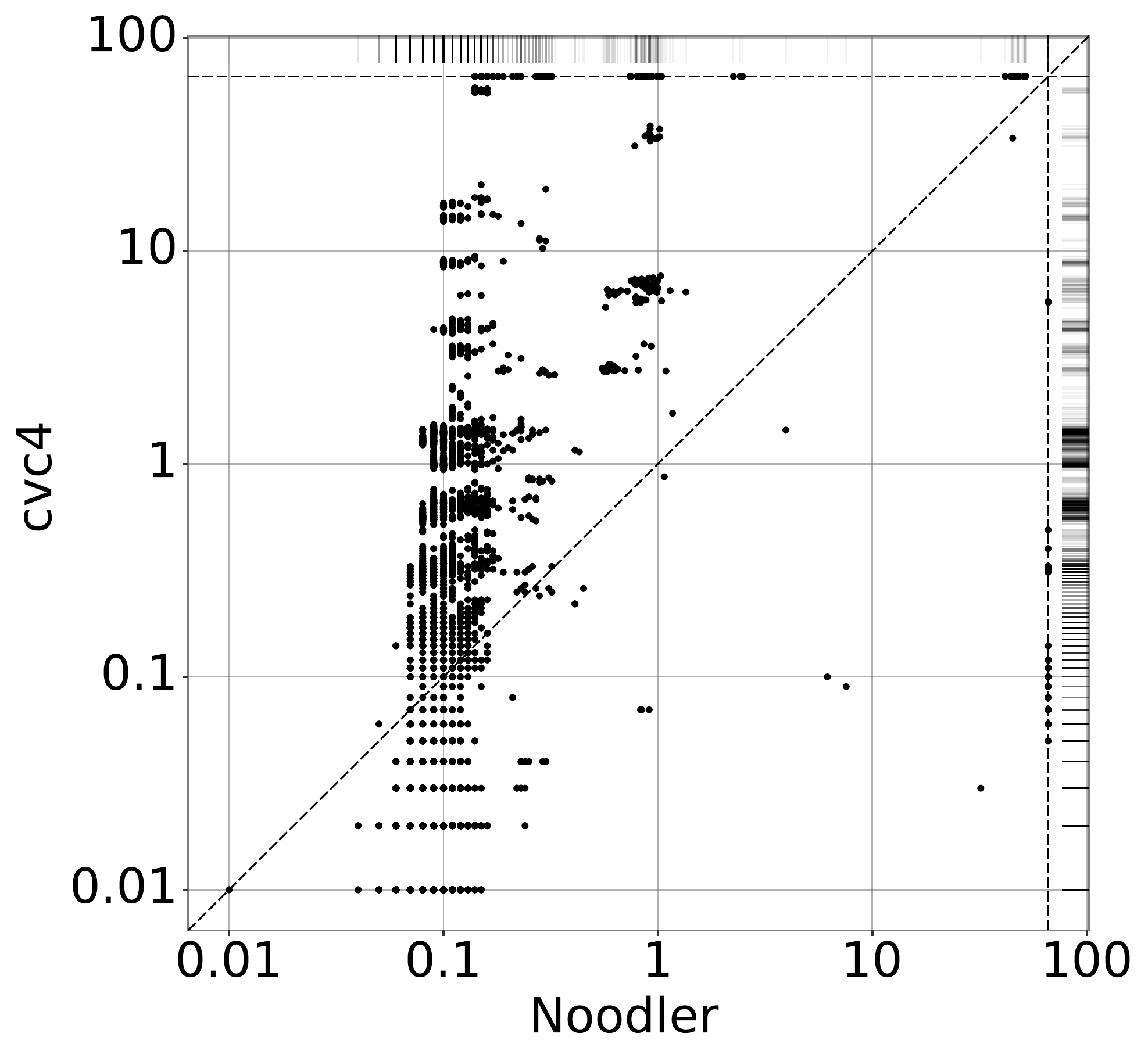}
    \includegraphics[width=0.32\textwidth]{figs/noodler/fig_noodler_vs_cvc5-pyex.pdf}
    \includegraphics[width=0.32\textwidth]{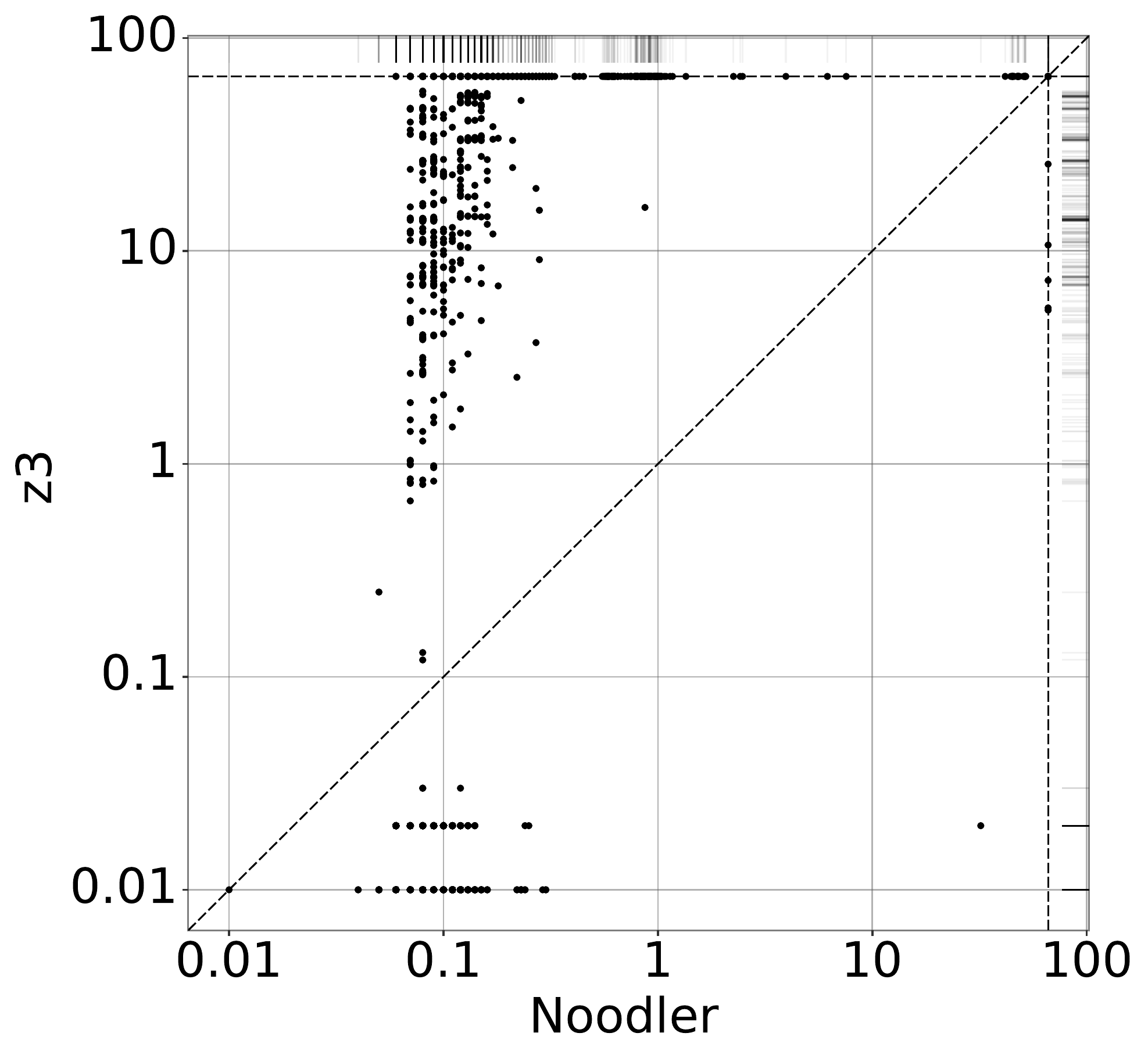}

    \includegraphics[width=0.32\textwidth]{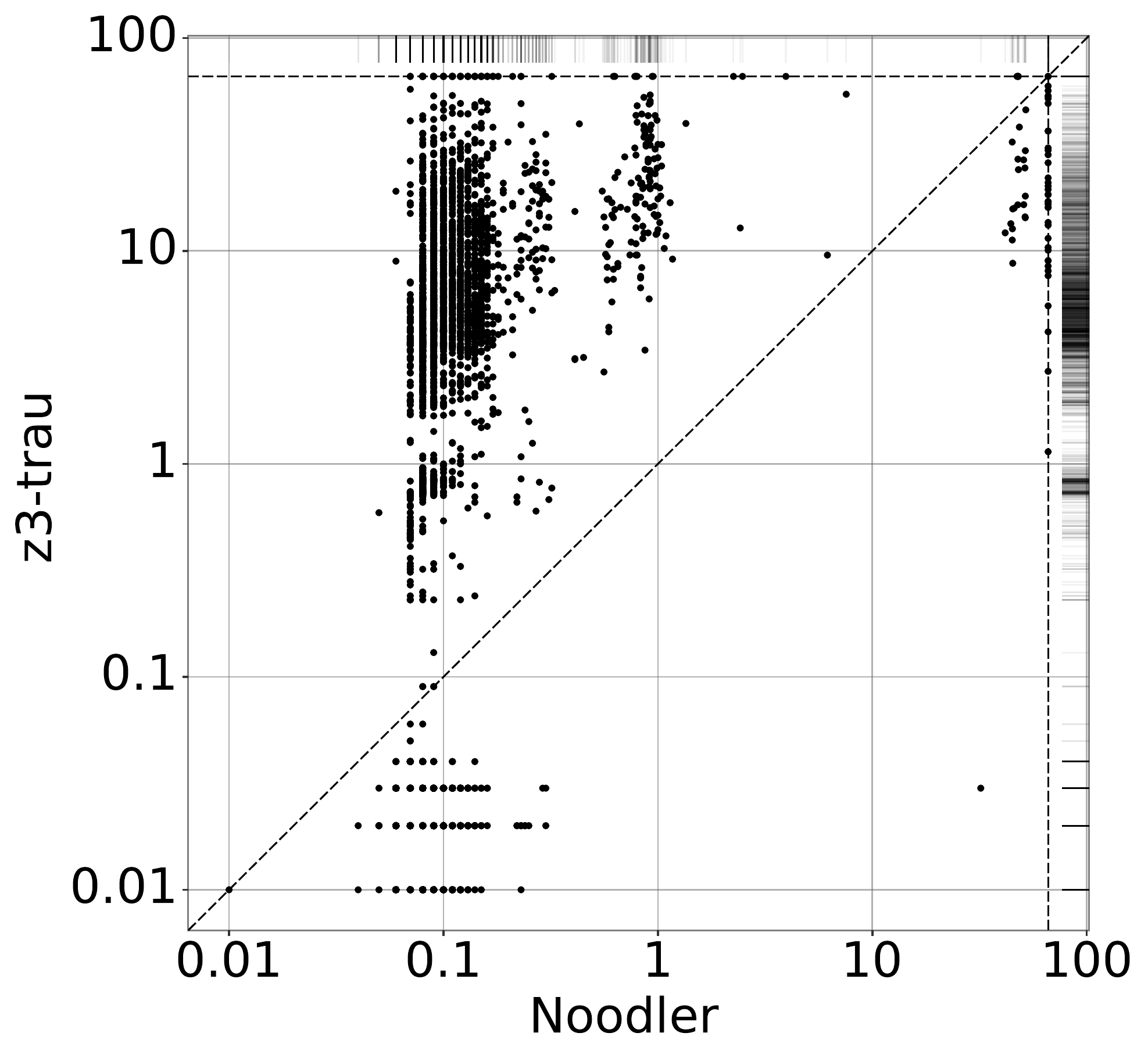}
    \includegraphics[width=0.32\textwidth]{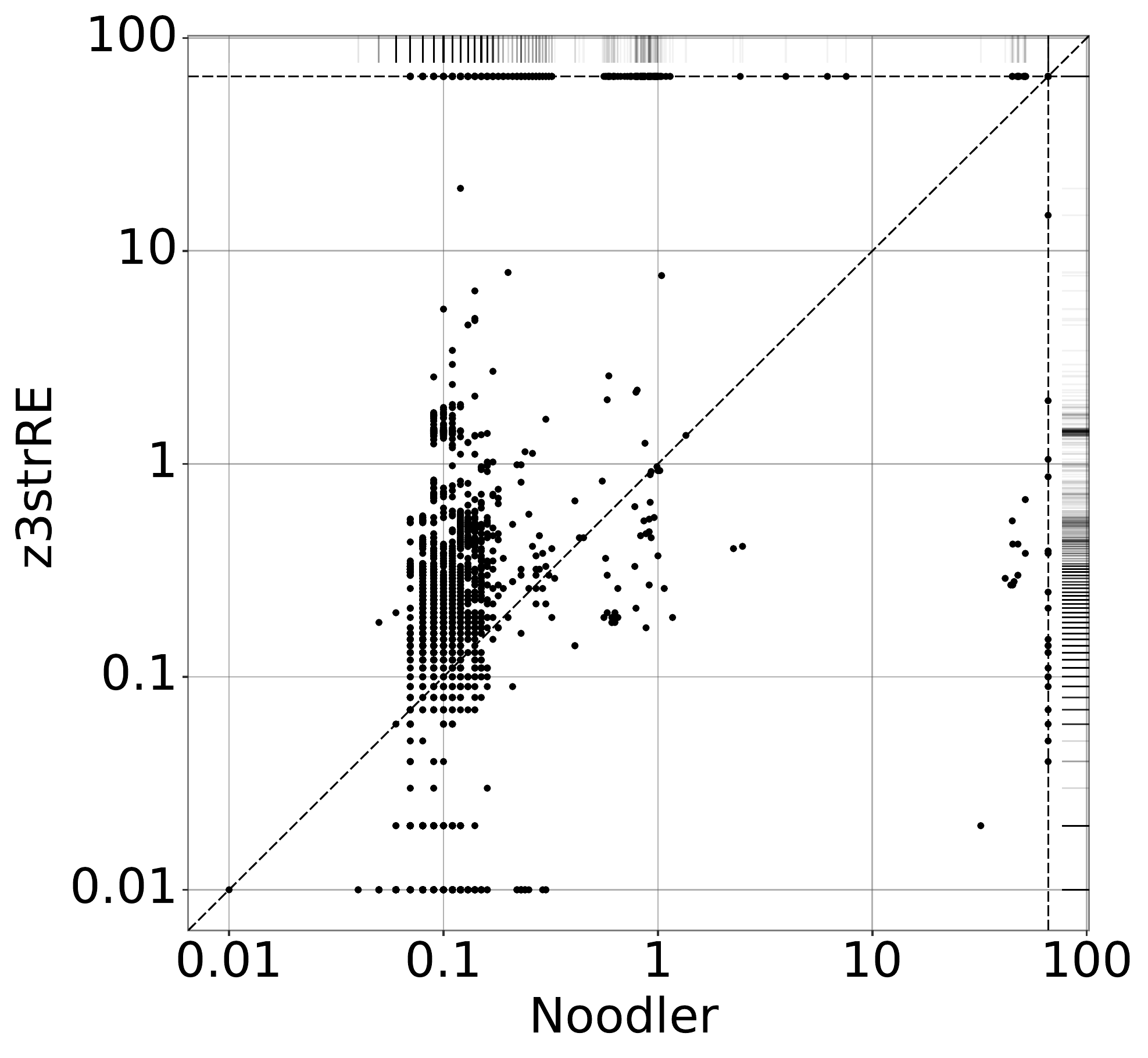}
    \includegraphics[width=0.32\textwidth]{figs/noodler/fig_noodler_vs_z3str4-pyex.pdf}

    \includegraphics[width=0.32\textwidth]{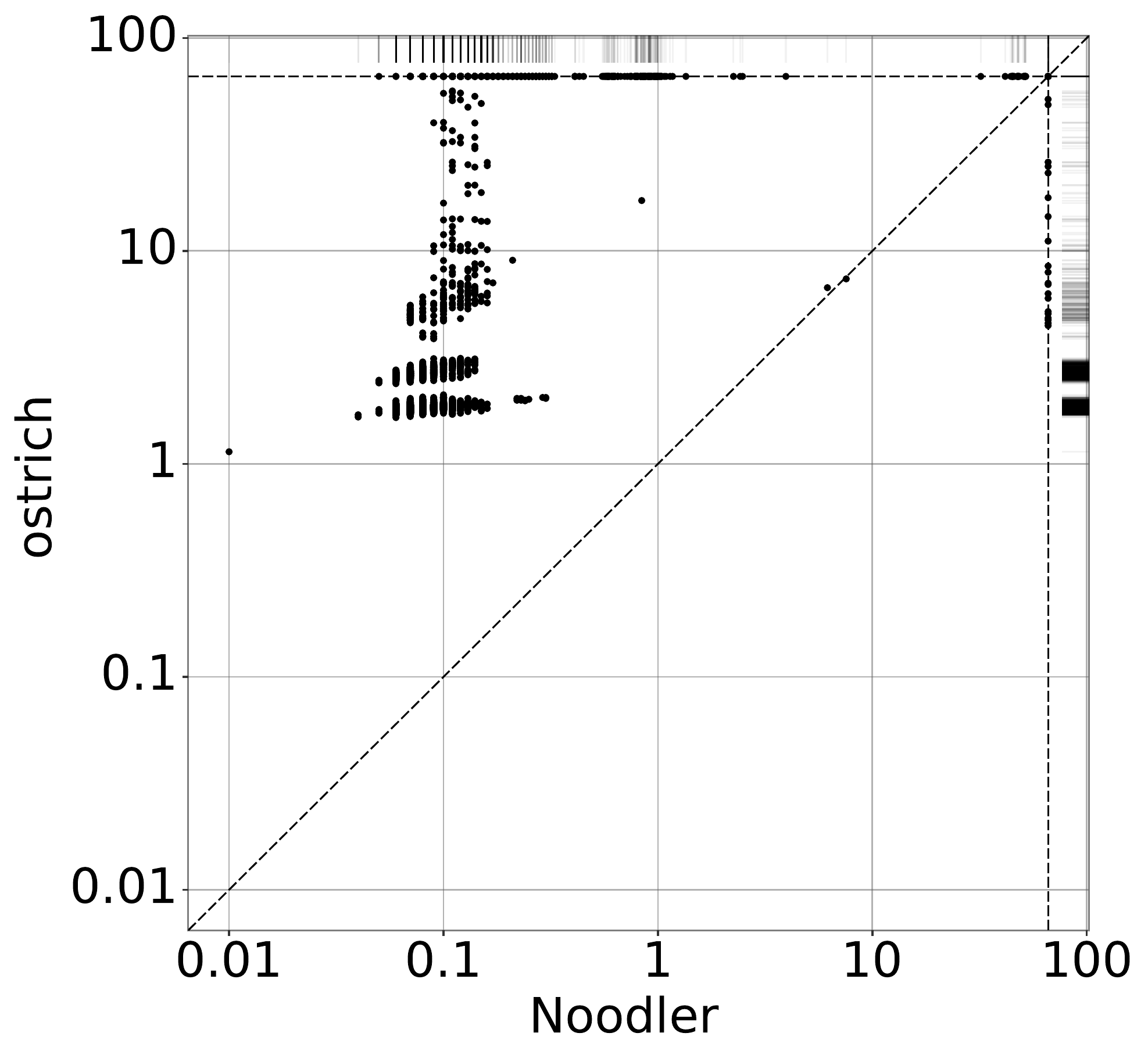}
    \includegraphics[width=0.32\textwidth]{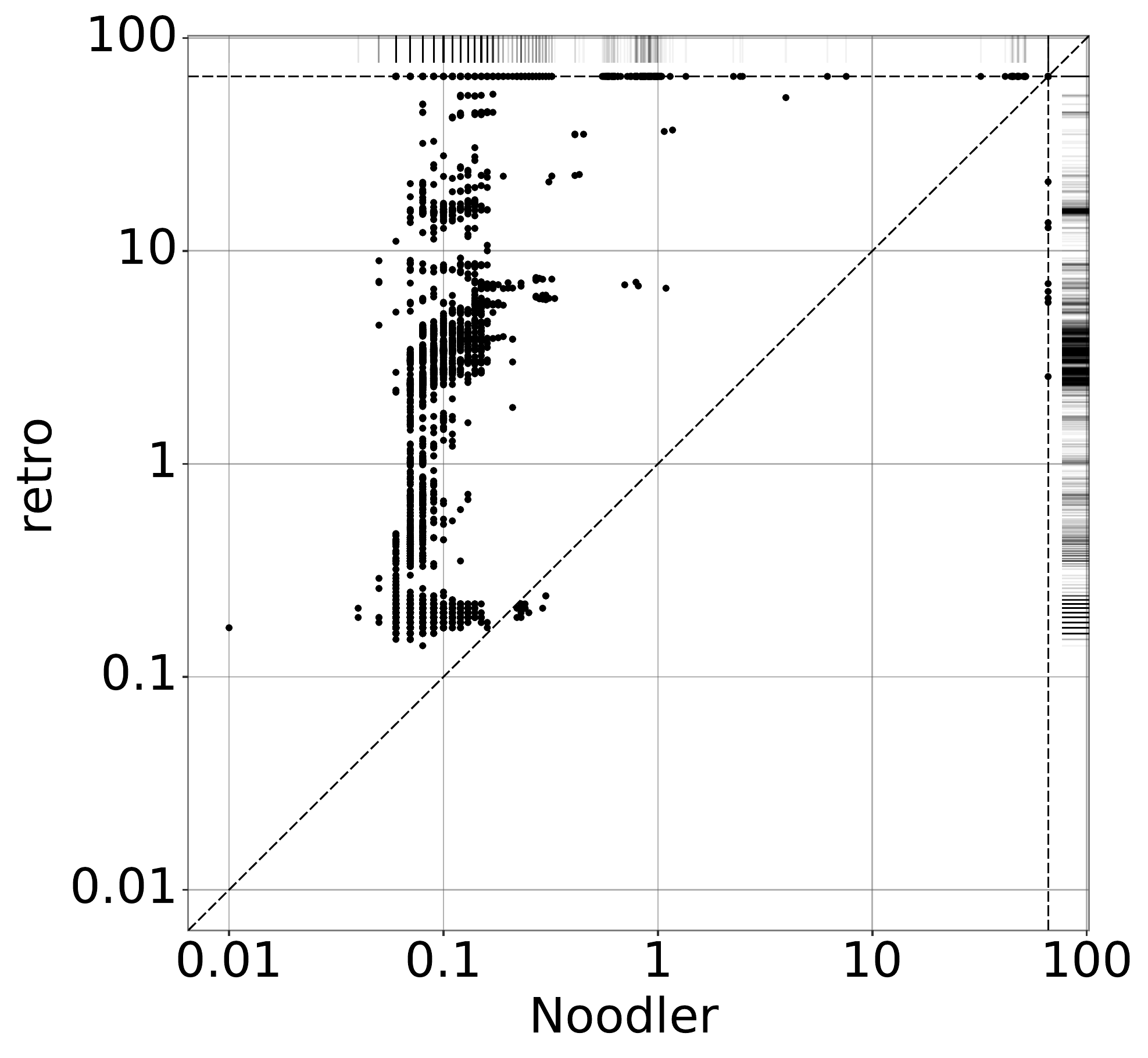}
    \includegraphics[width=0.32\textwidth]{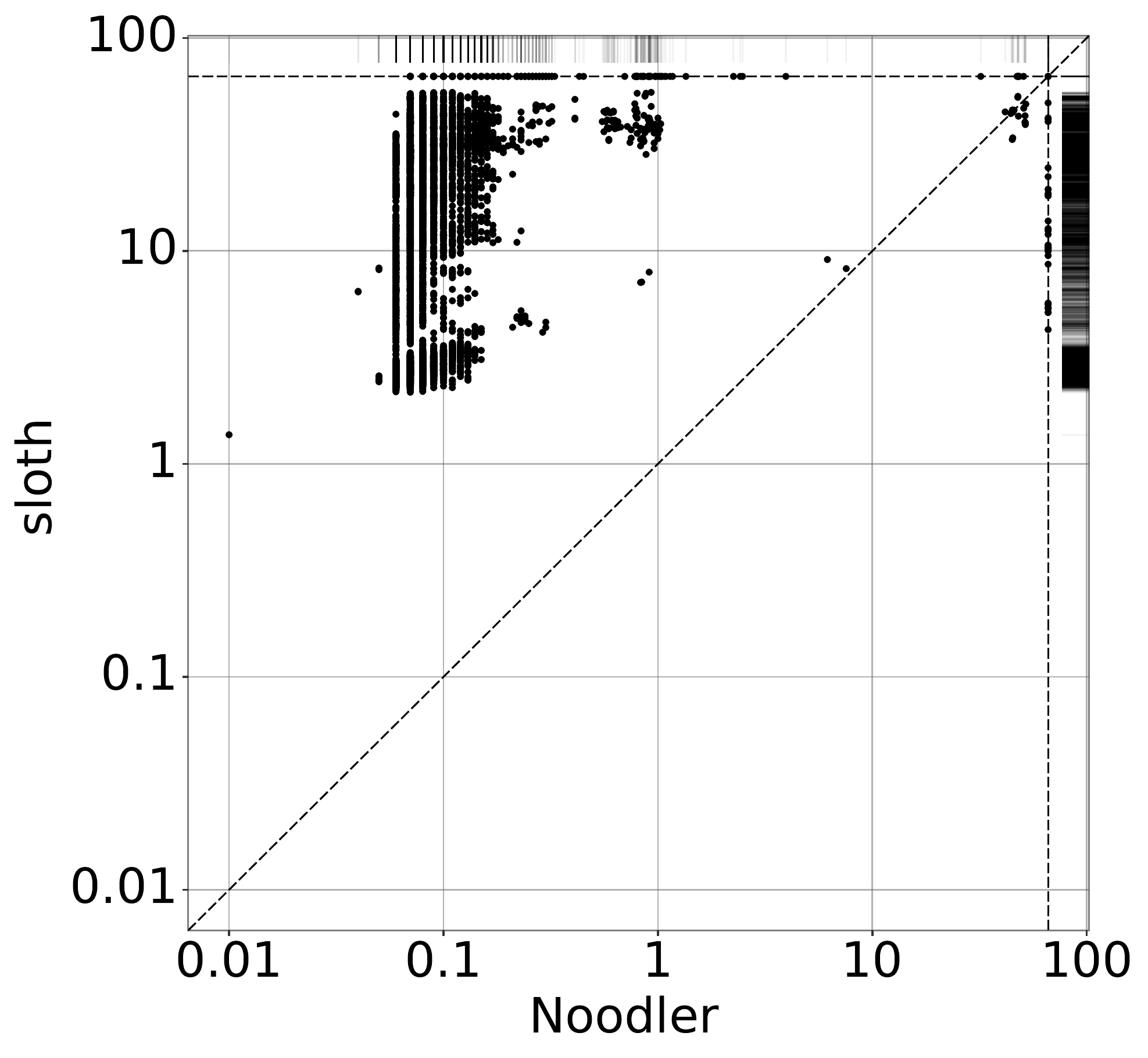}

    \caption{Comparison of run times of \noodler with all other tools (in seconds) on \pyexhard. 
    Dashed lines represent timeouts.}
    \label{fig:pyex-all}
  \end{figure}

  \begin{figure}
    \centering
    \includegraphics[width=0.32\textwidth]{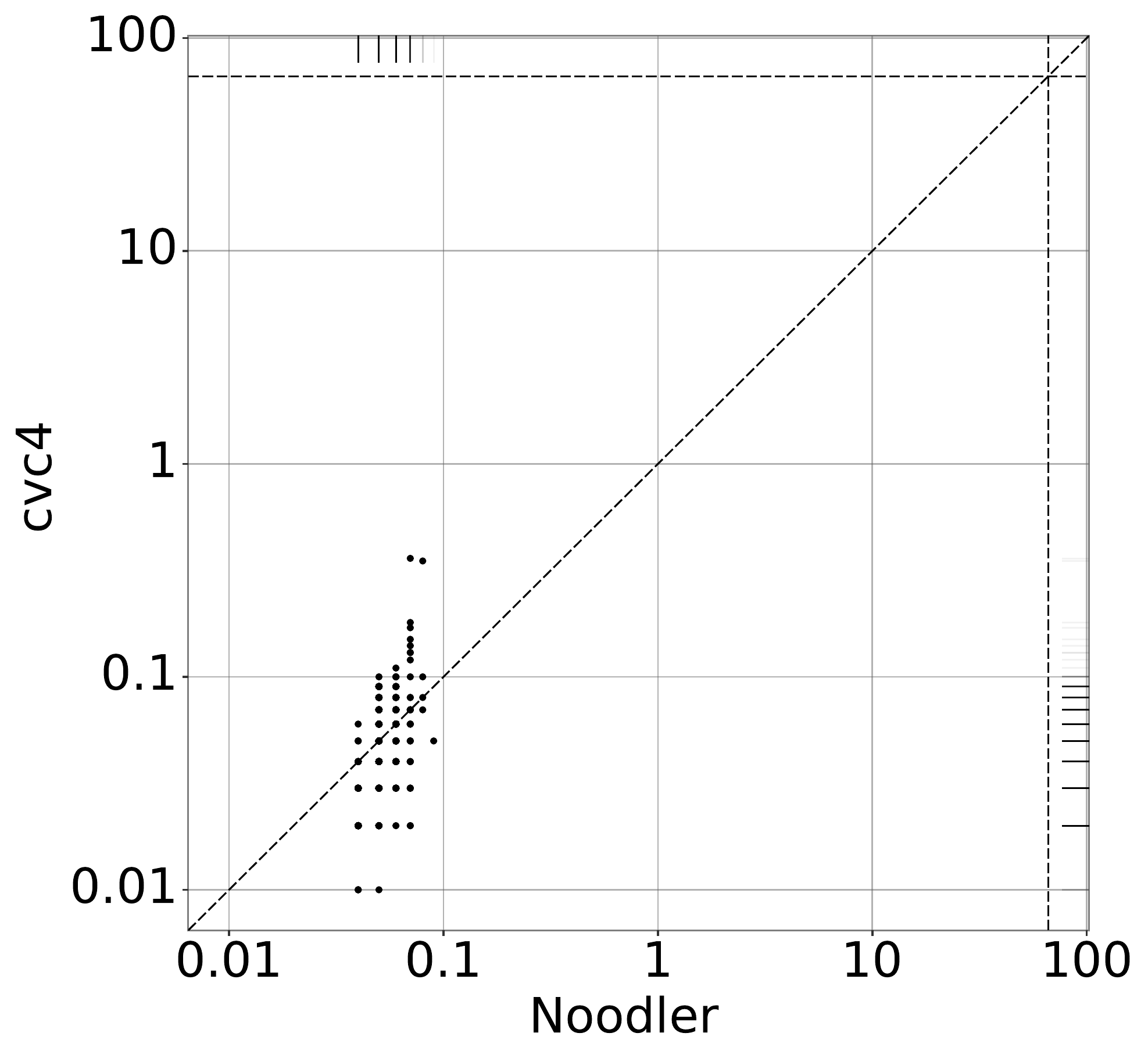}
    \includegraphics[width=0.32\textwidth]{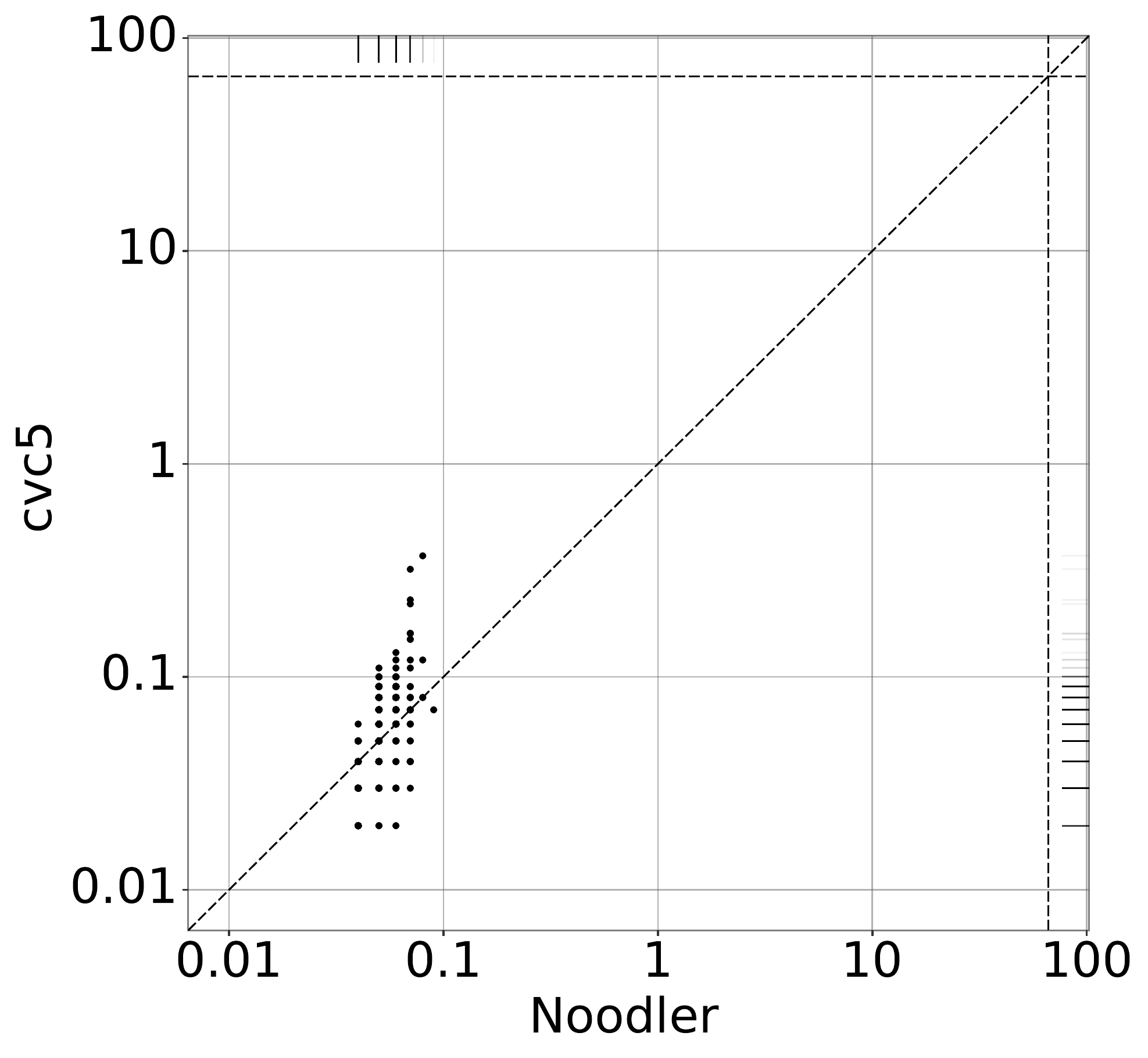}
    \includegraphics[width=0.32\textwidth]{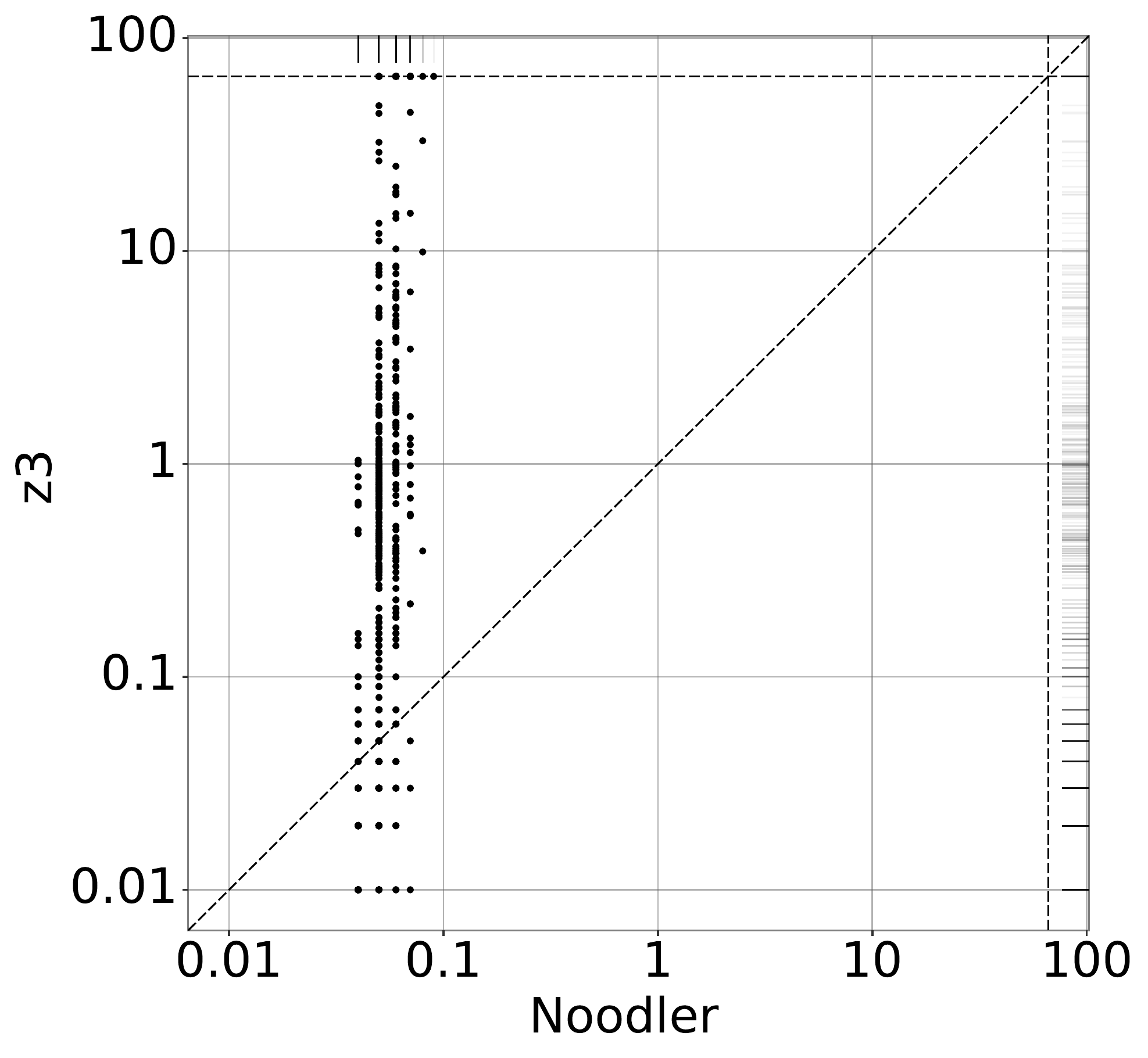}

    \includegraphics[width=0.32\textwidth]{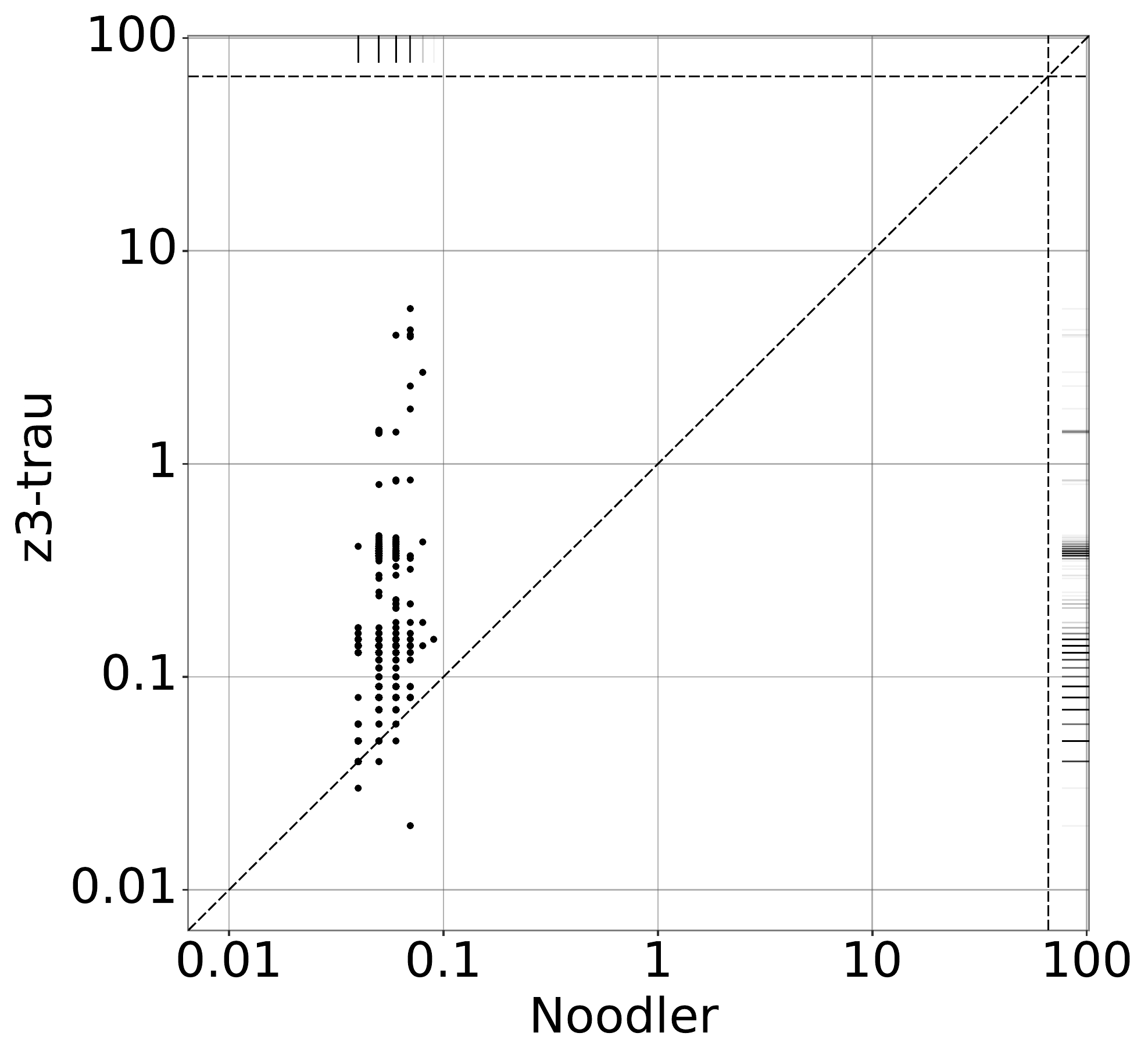}
    \includegraphics[width=0.32\textwidth]{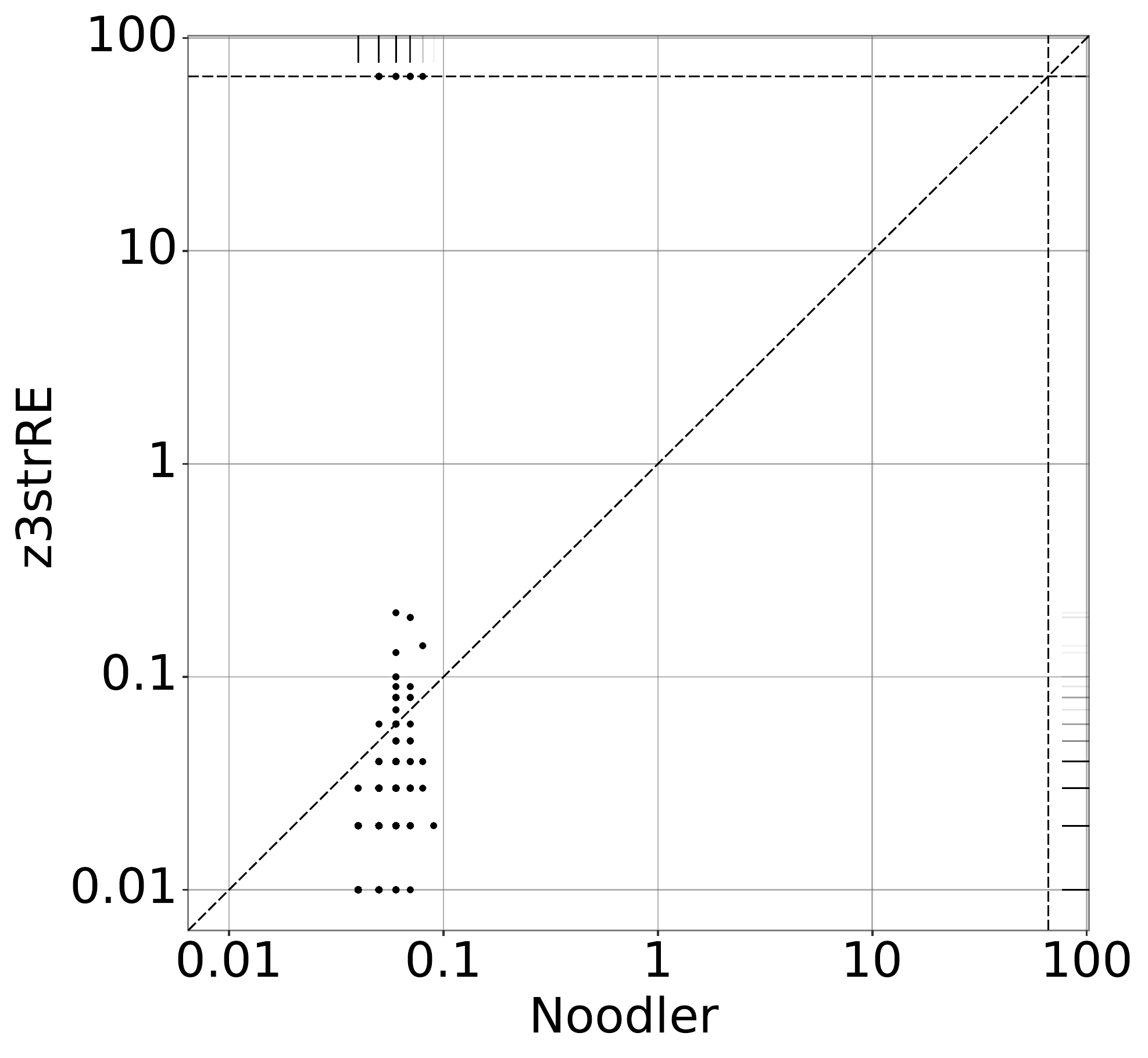}
    \includegraphics[width=0.32\textwidth]{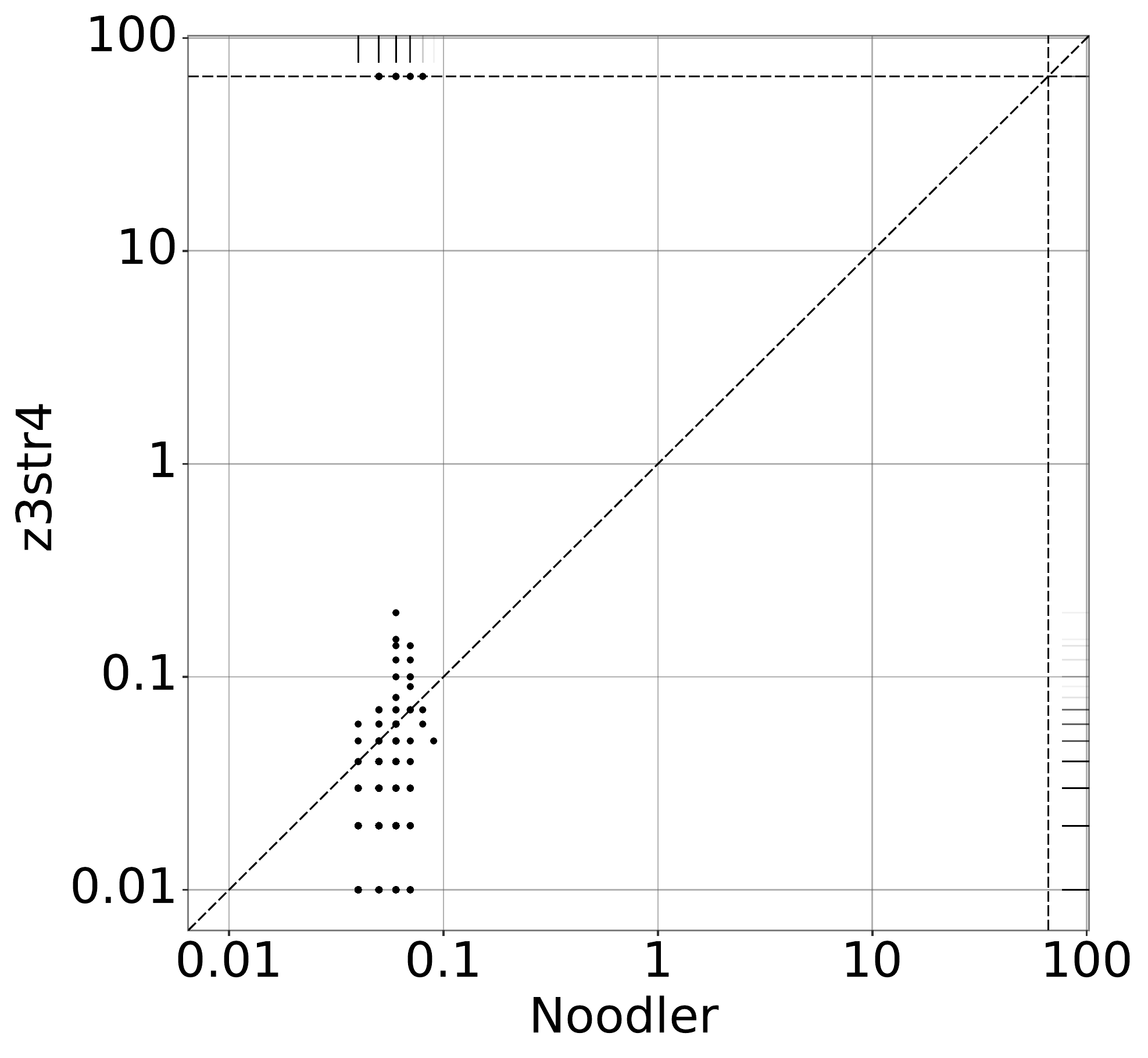}

    \includegraphics[width=0.32\textwidth]{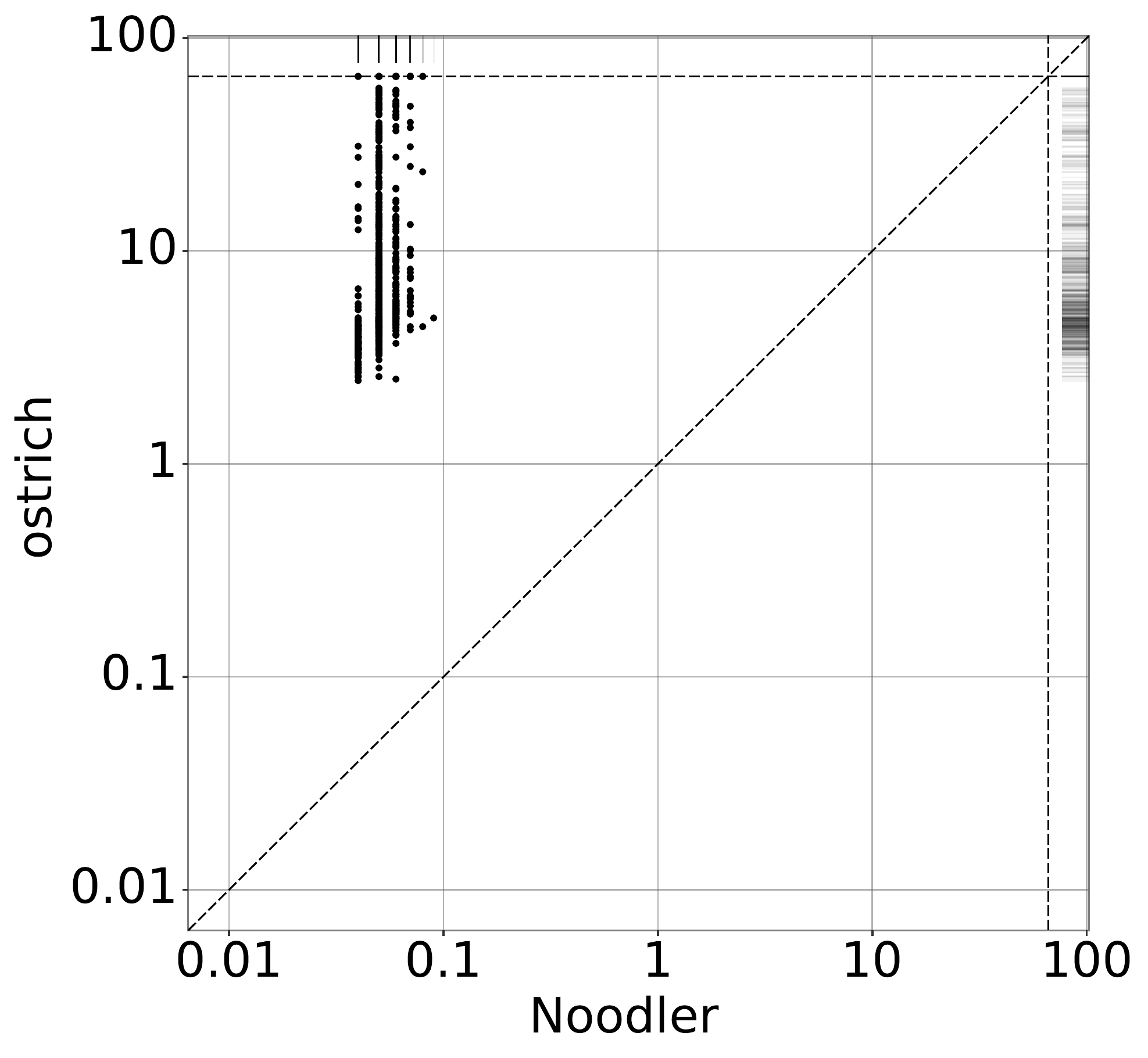}
    \includegraphics[width=0.32\textwidth]{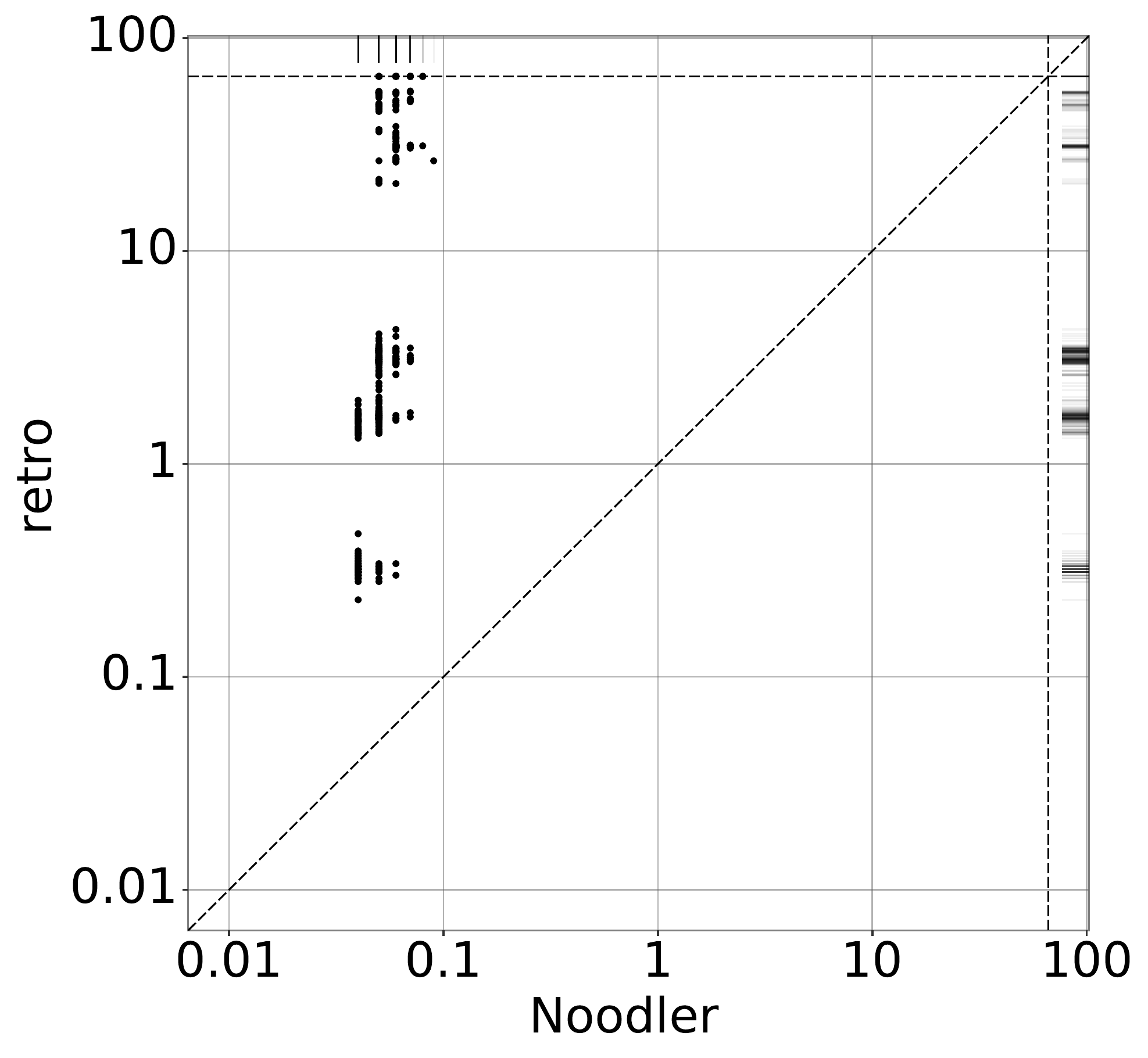}
    \includegraphics[width=0.32\textwidth]{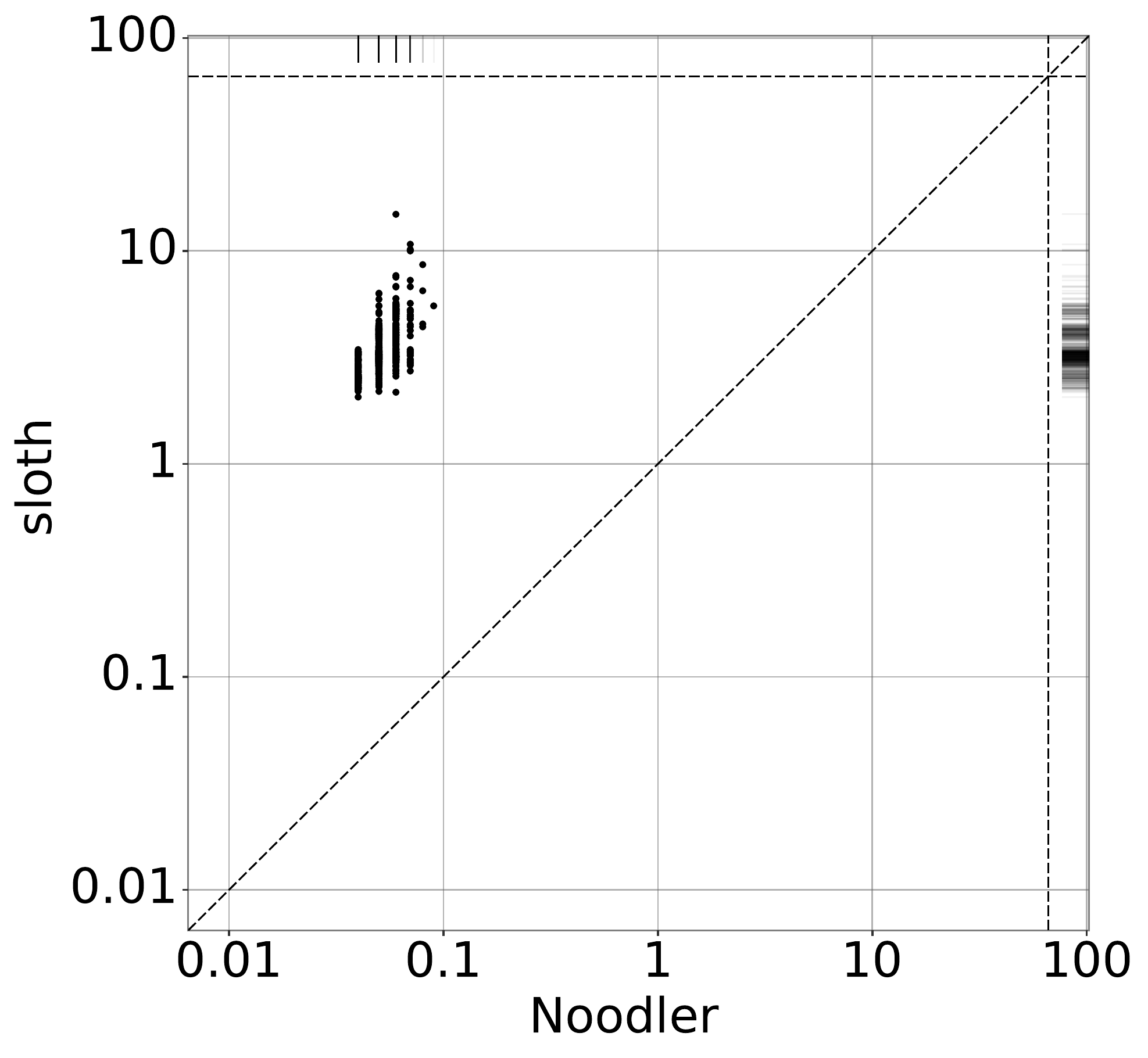}

    \caption{Comparison of run times of \noodler with all other tools (in seconds) on \kaluzahard. 
    Dashed lines represent timeouts.}
    \label{fig:kaluza-all}
  \end{figure}

  \begin{figure}
    \centering
    \includegraphics[width=0.32\textwidth]{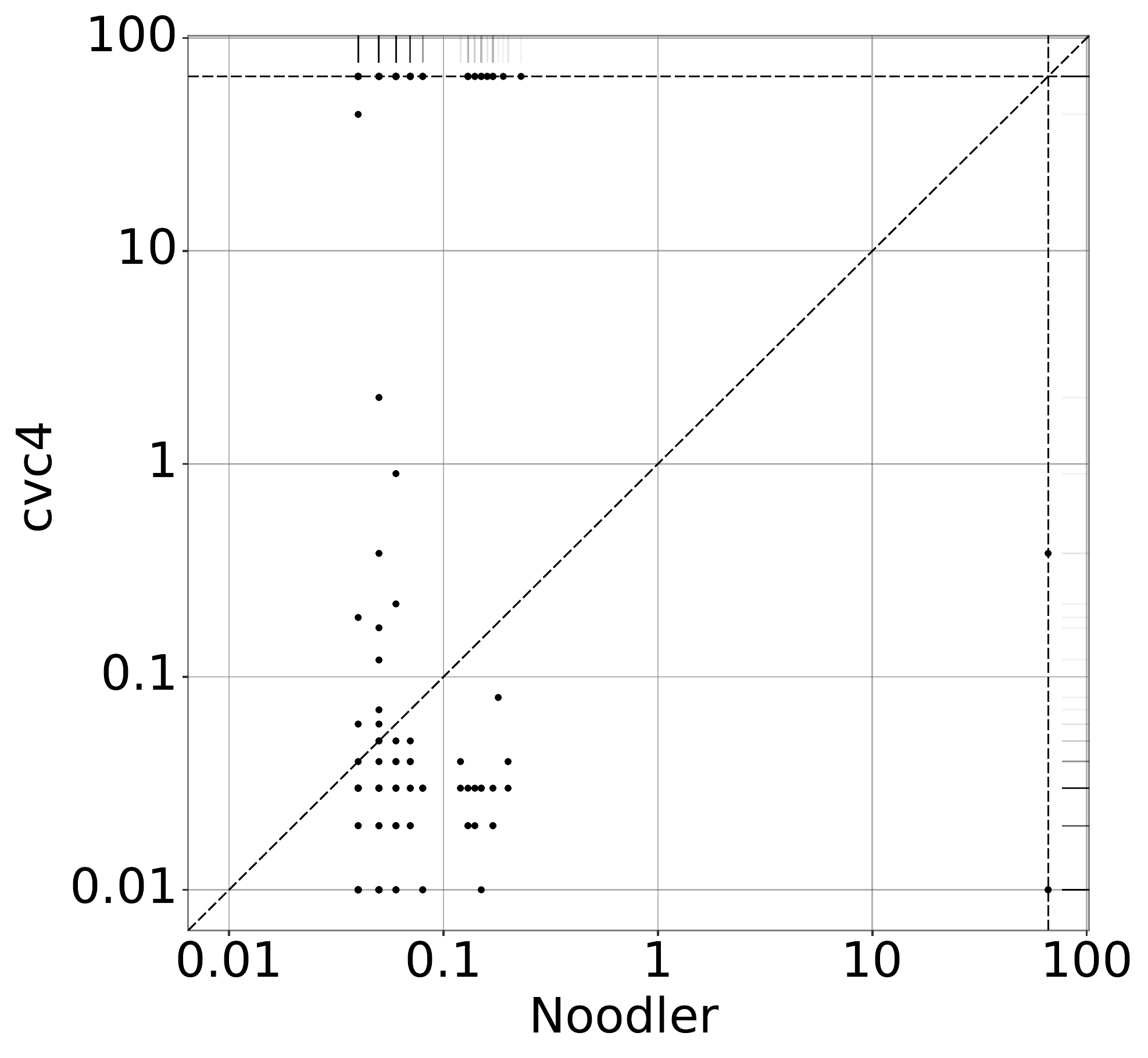}
    \includegraphics[width=0.32\textwidth]{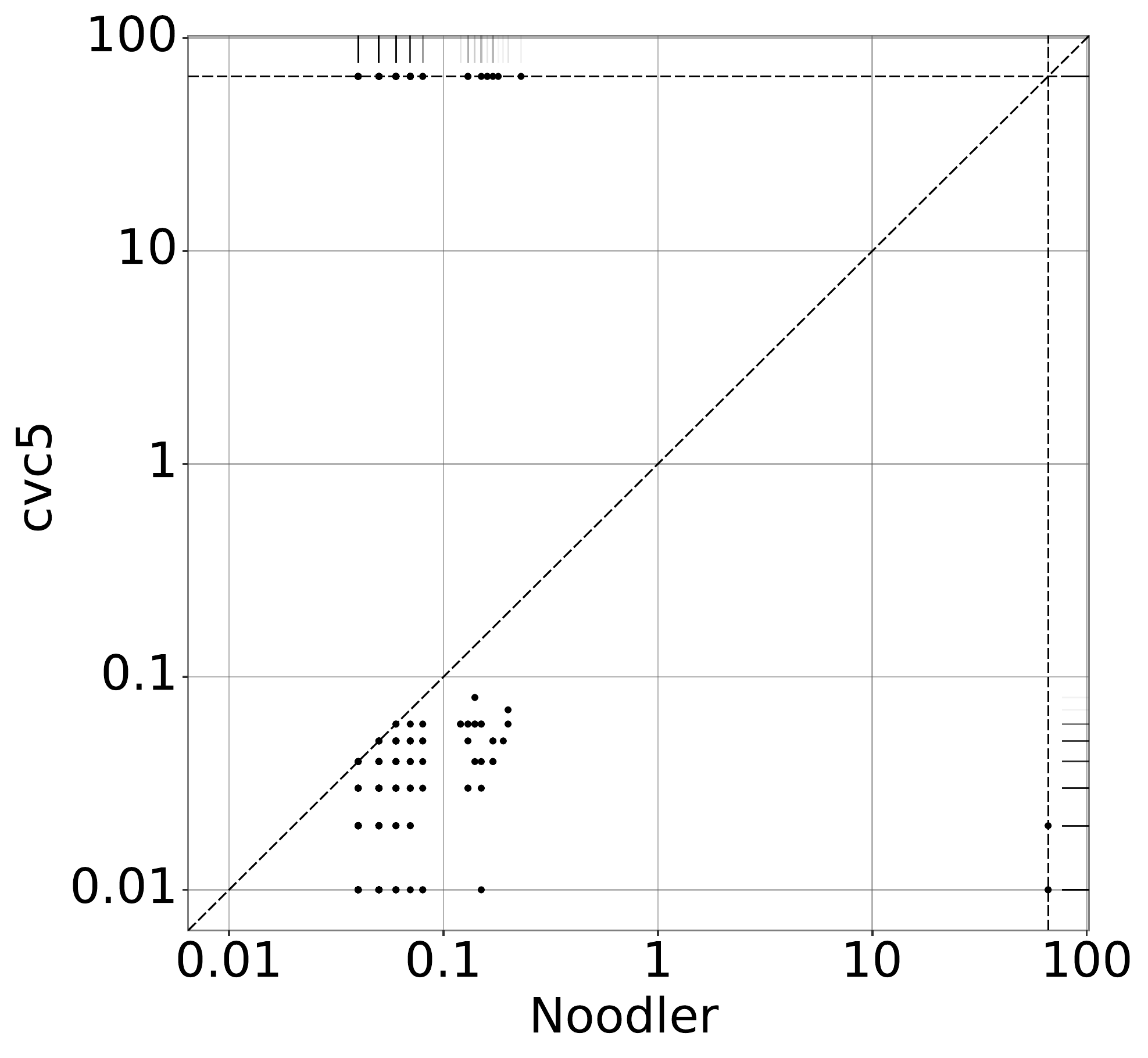}
    \includegraphics[width=0.32\textwidth]{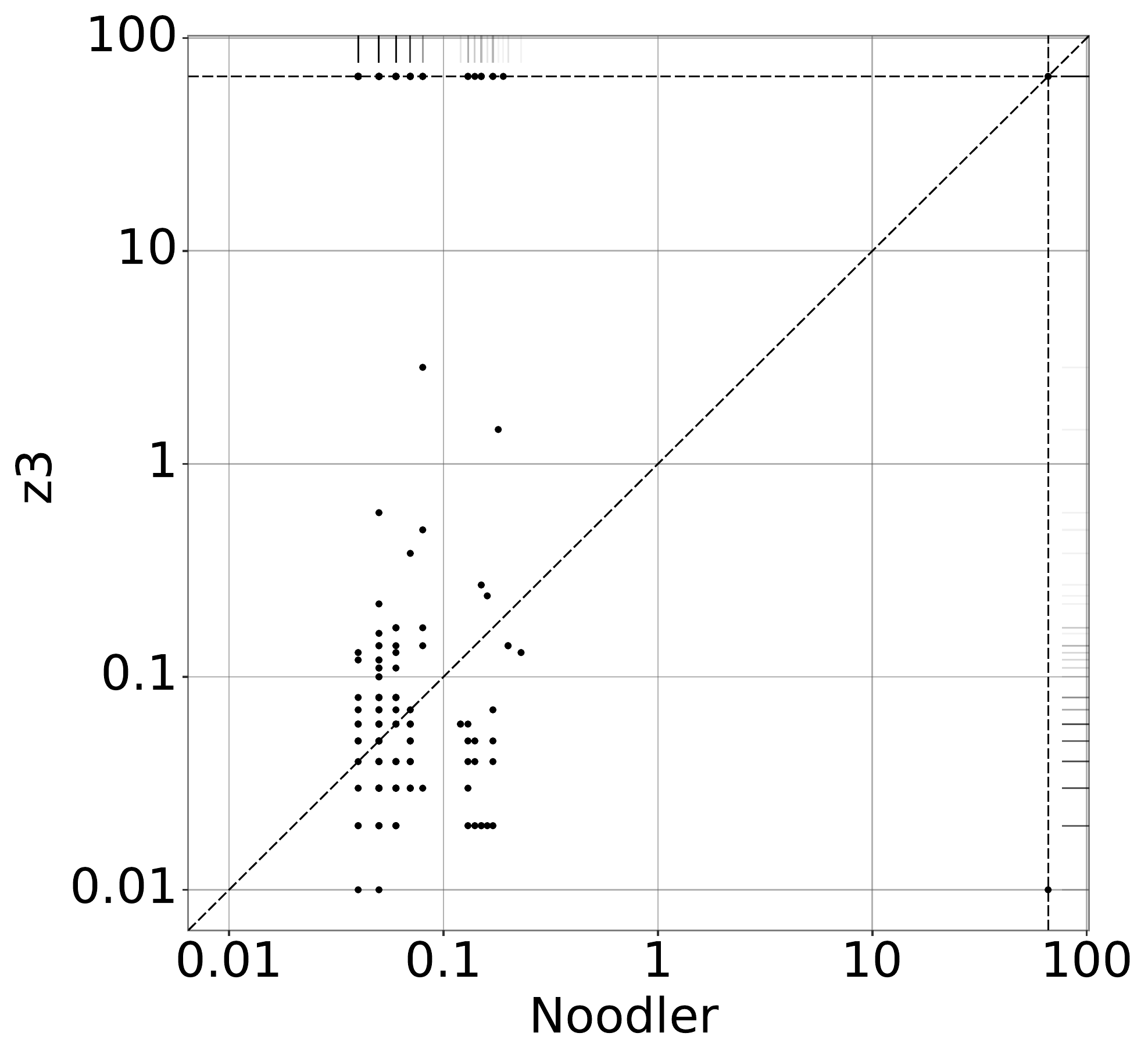}

    \includegraphics[width=0.32\textwidth]{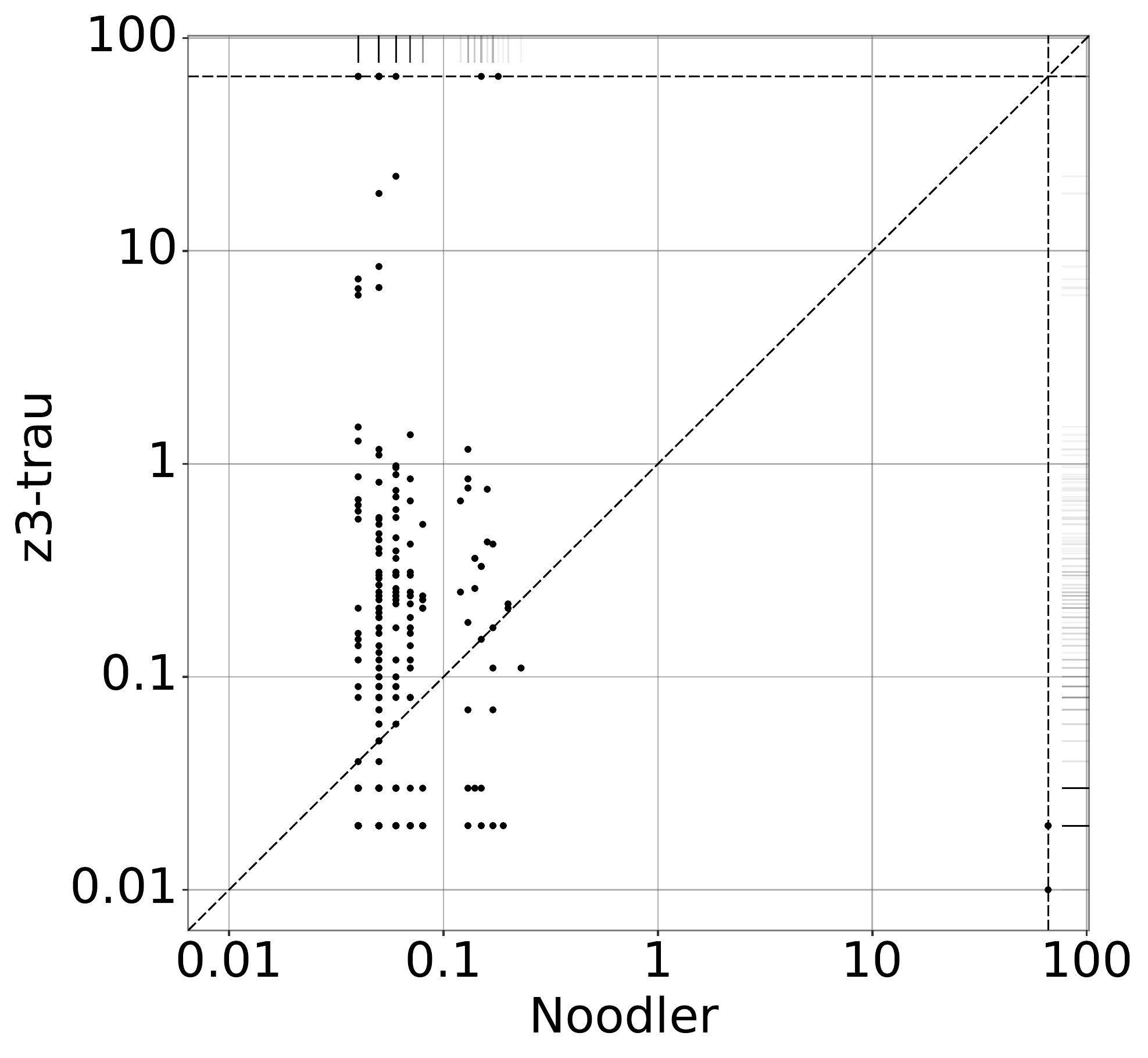}
    \includegraphics[width=0.32\textwidth]{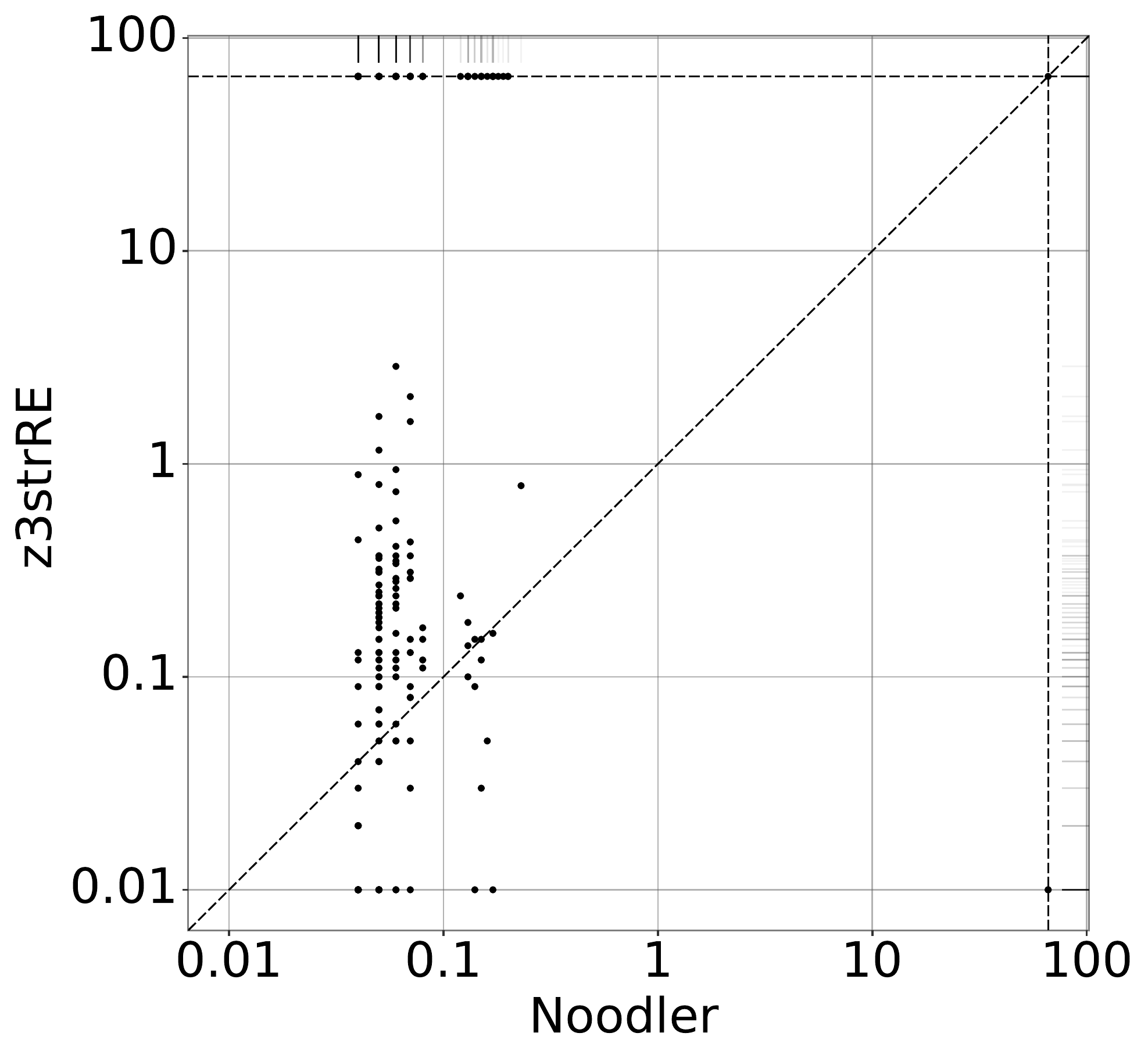}
    \includegraphics[width=0.32\textwidth]{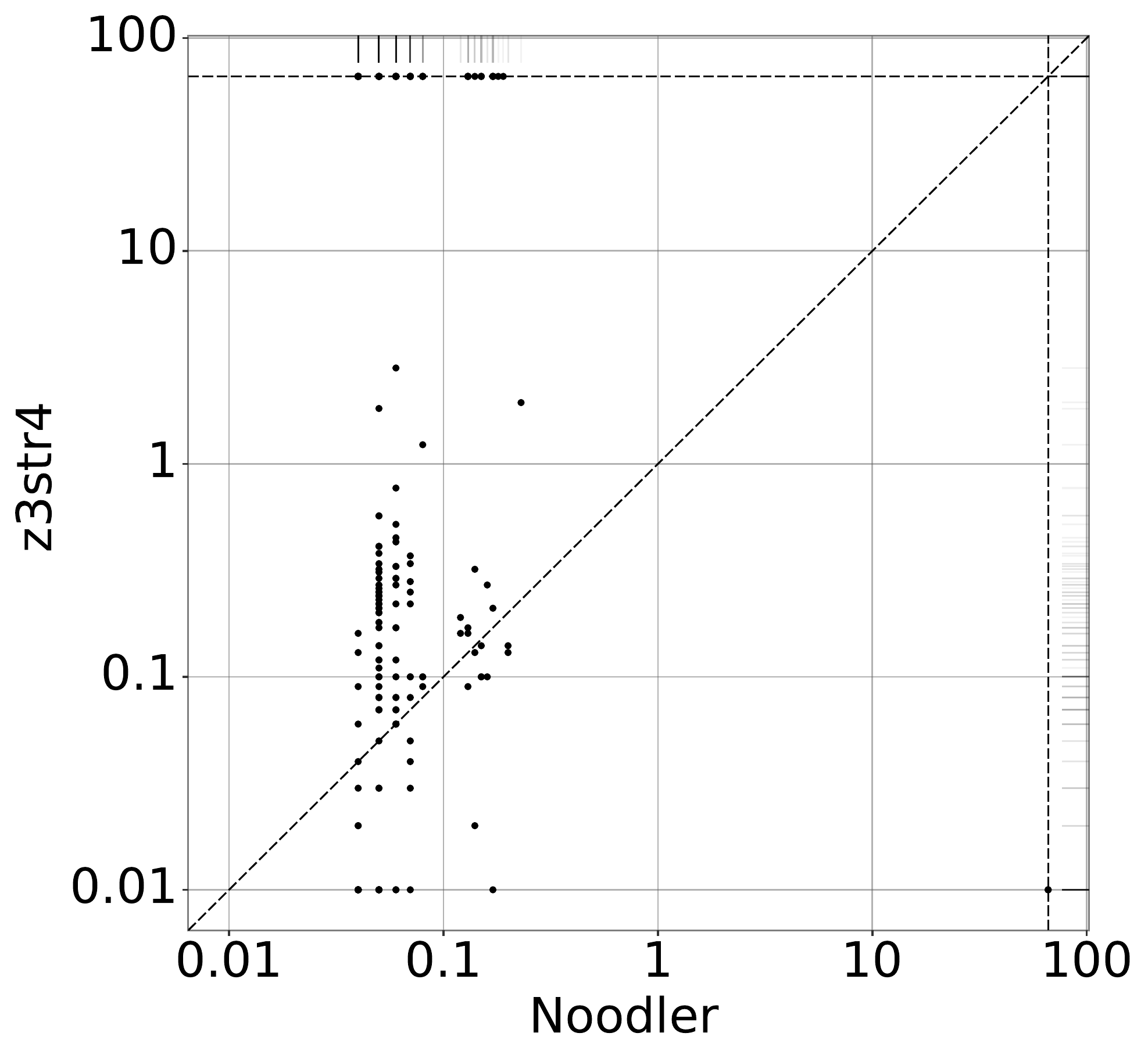}

    \includegraphics[width=0.32\textwidth]{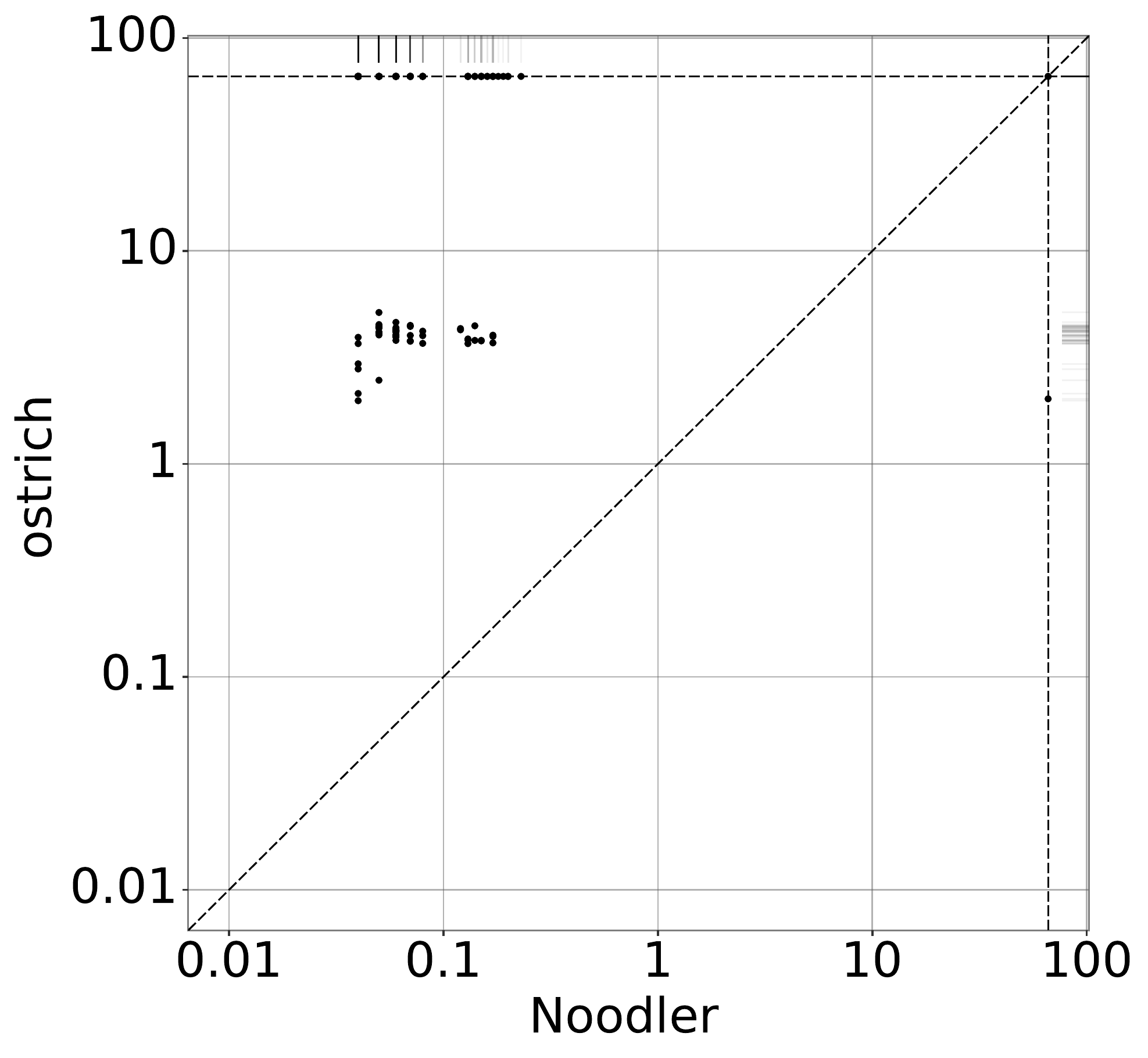}
    \includegraphics[width=0.32\textwidth]{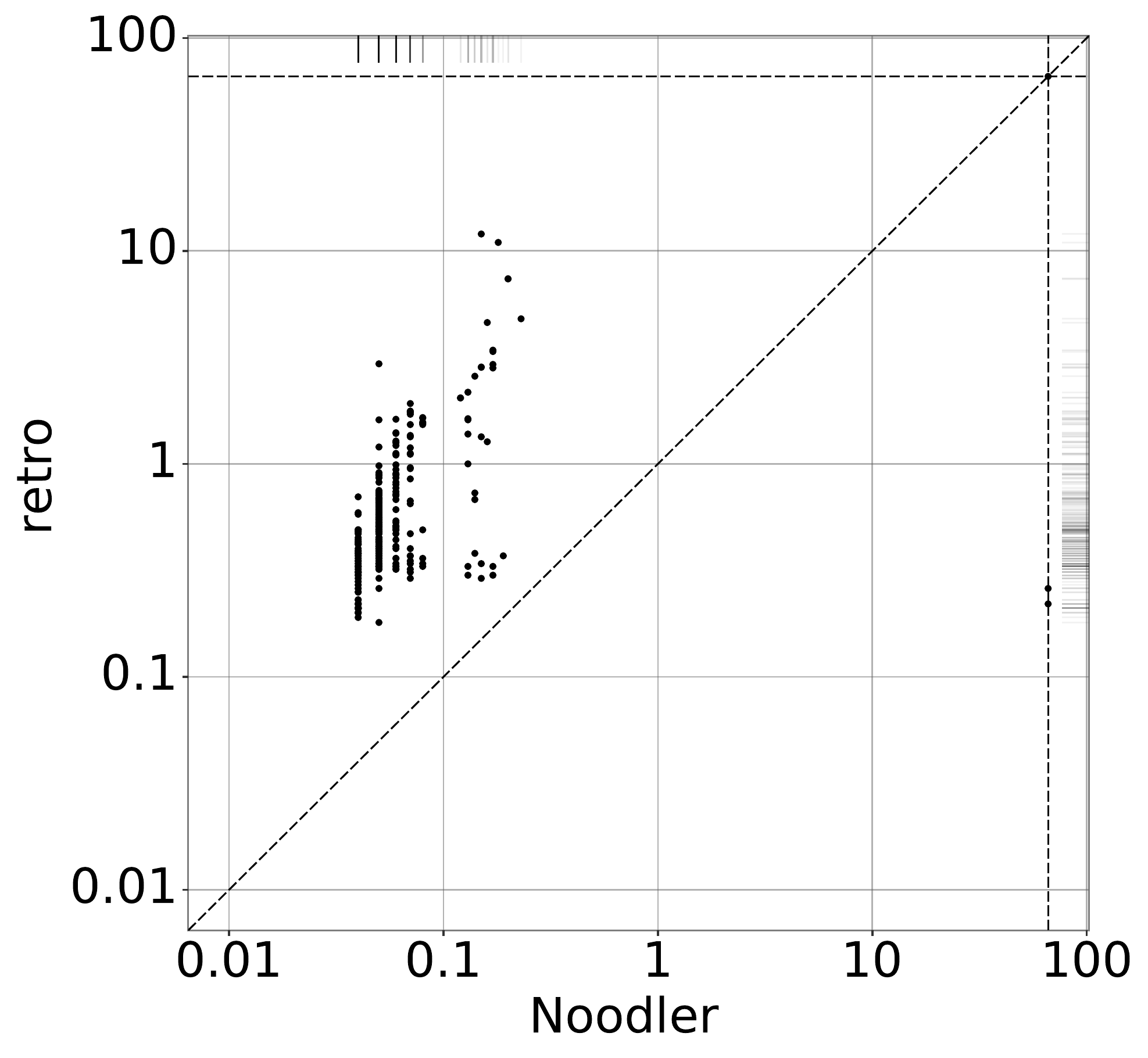}

    \caption{Comparison of run times of \noodler with all other tools (in seconds) on \strii. 
    Dashed lines represent timeouts.}
    \label{fig:str-all}
  \end{figure}

  \begin{figure}
    \centering
    \includegraphics[width=0.32\textwidth]{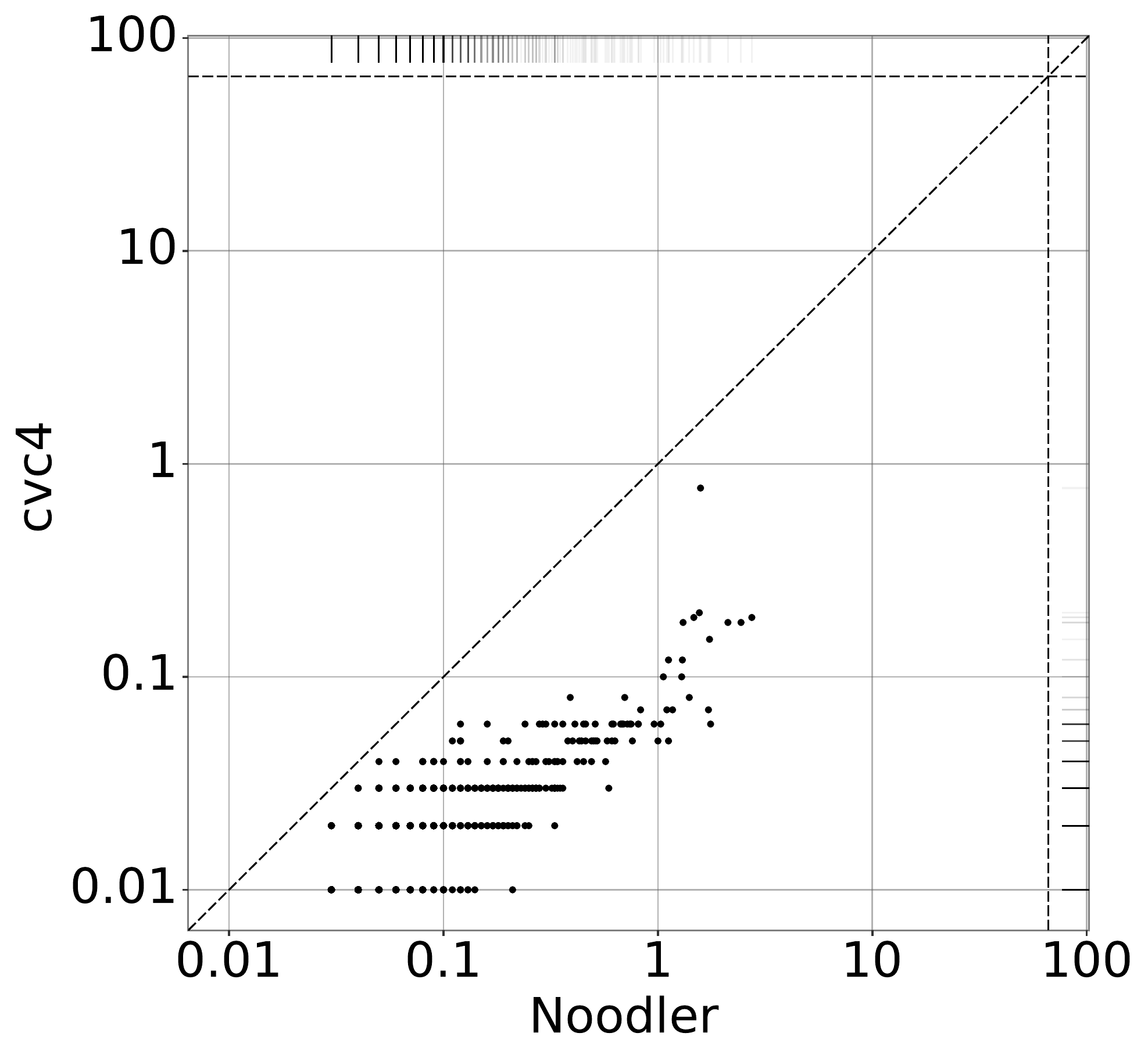}
    \includegraphics[width=0.32\textwidth]{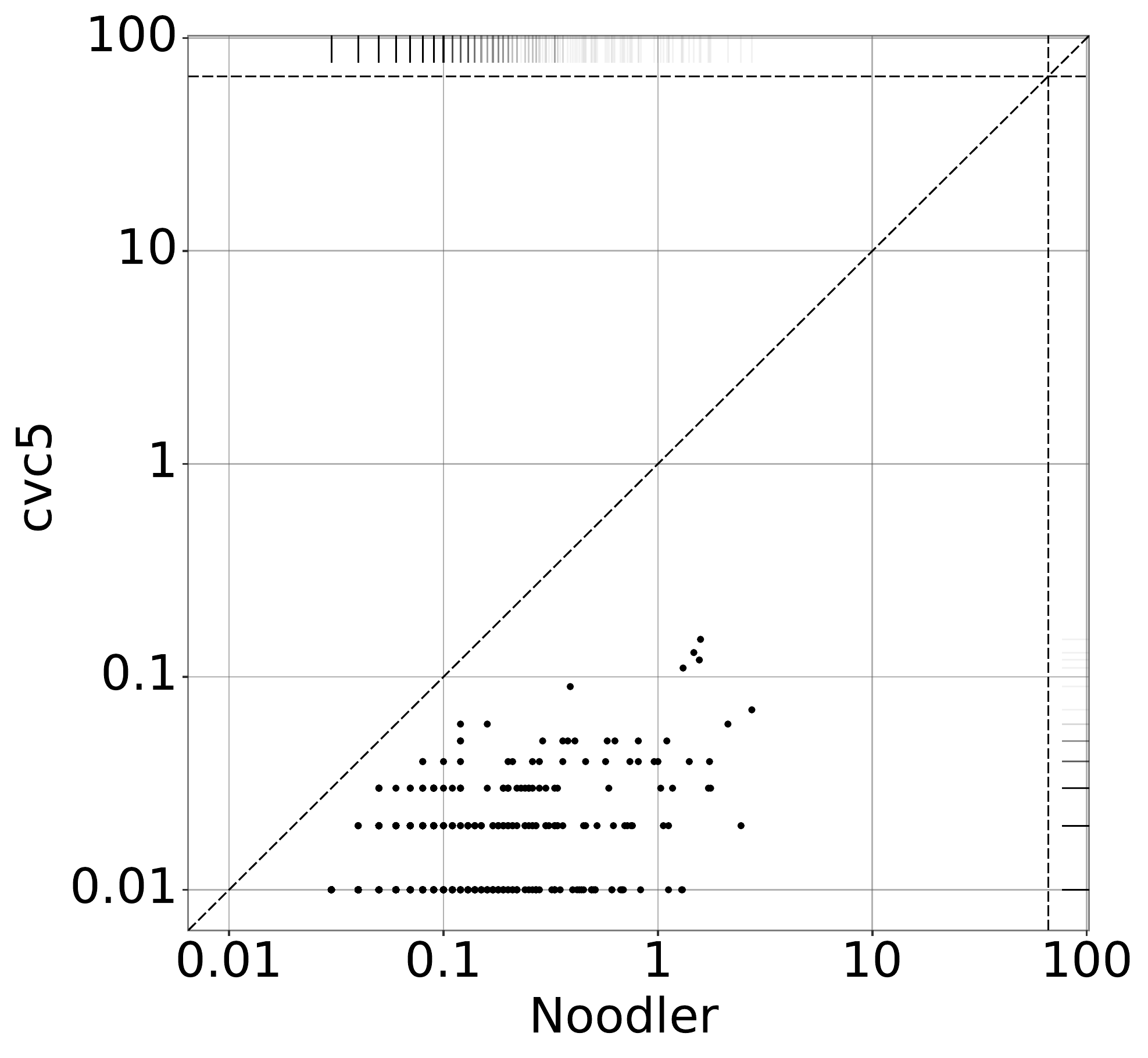}
    \includegraphics[width=0.32\textwidth]{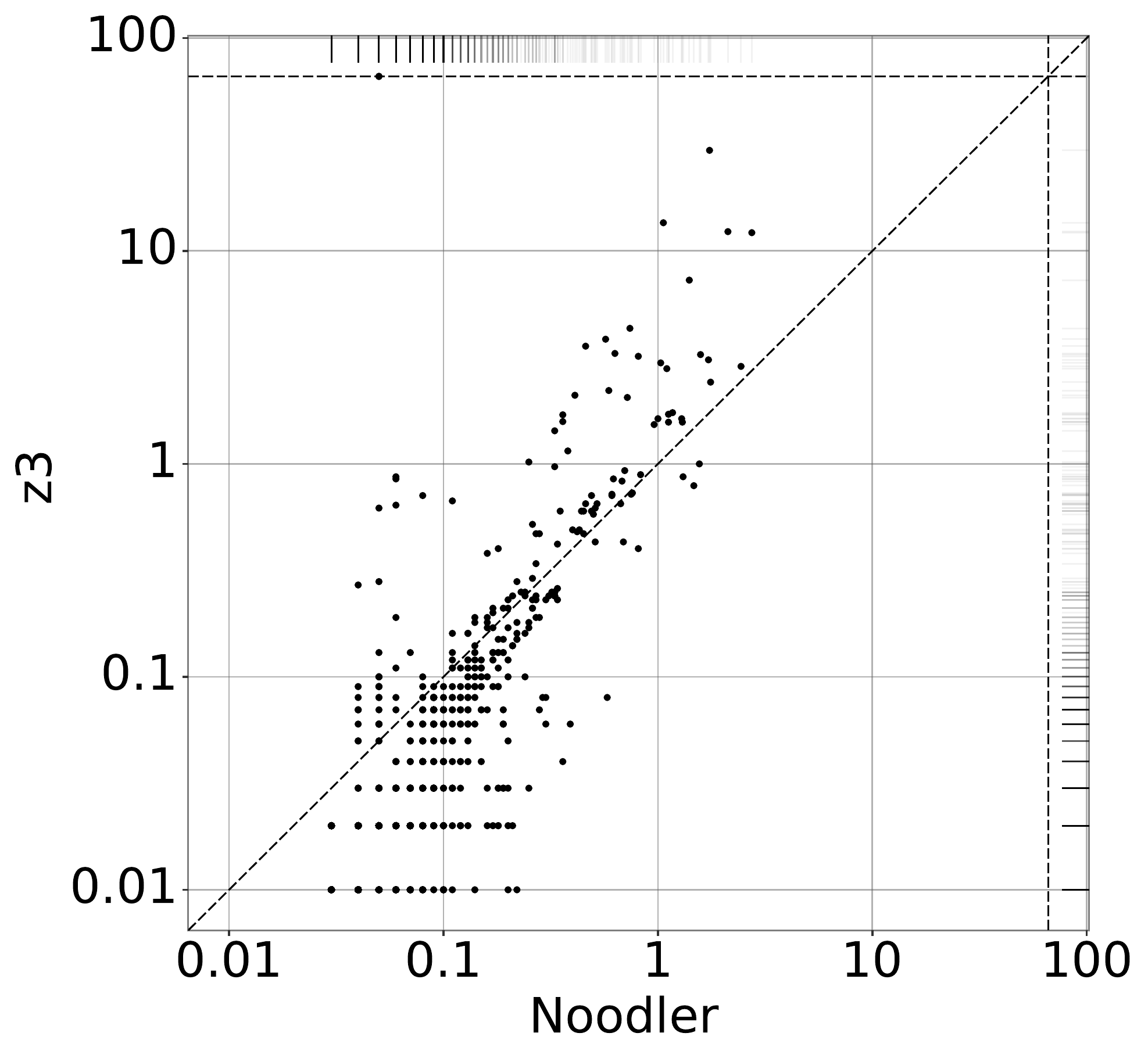}

    \includegraphics[width=0.32\textwidth]{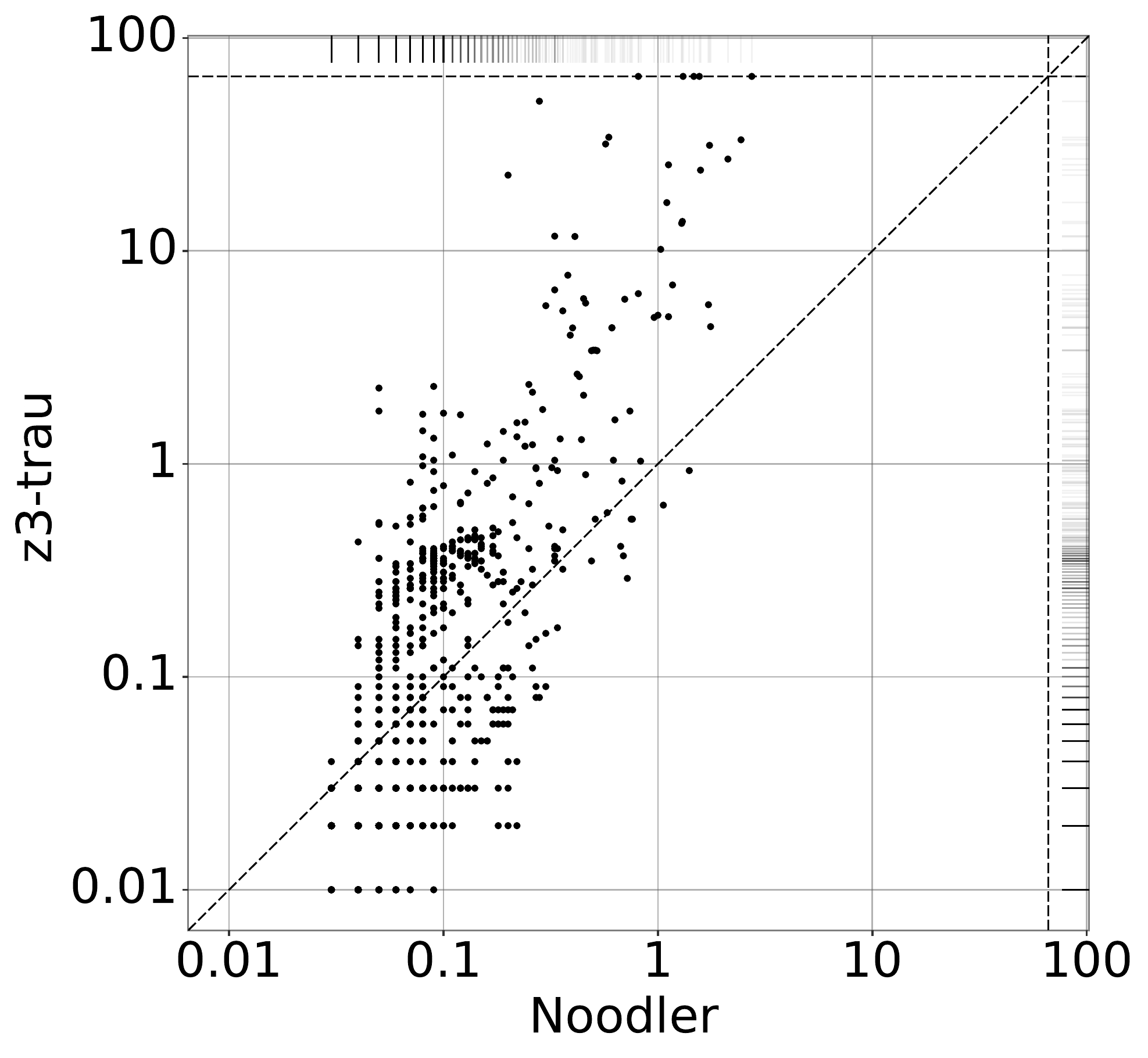}
    \includegraphics[width=0.32\textwidth]{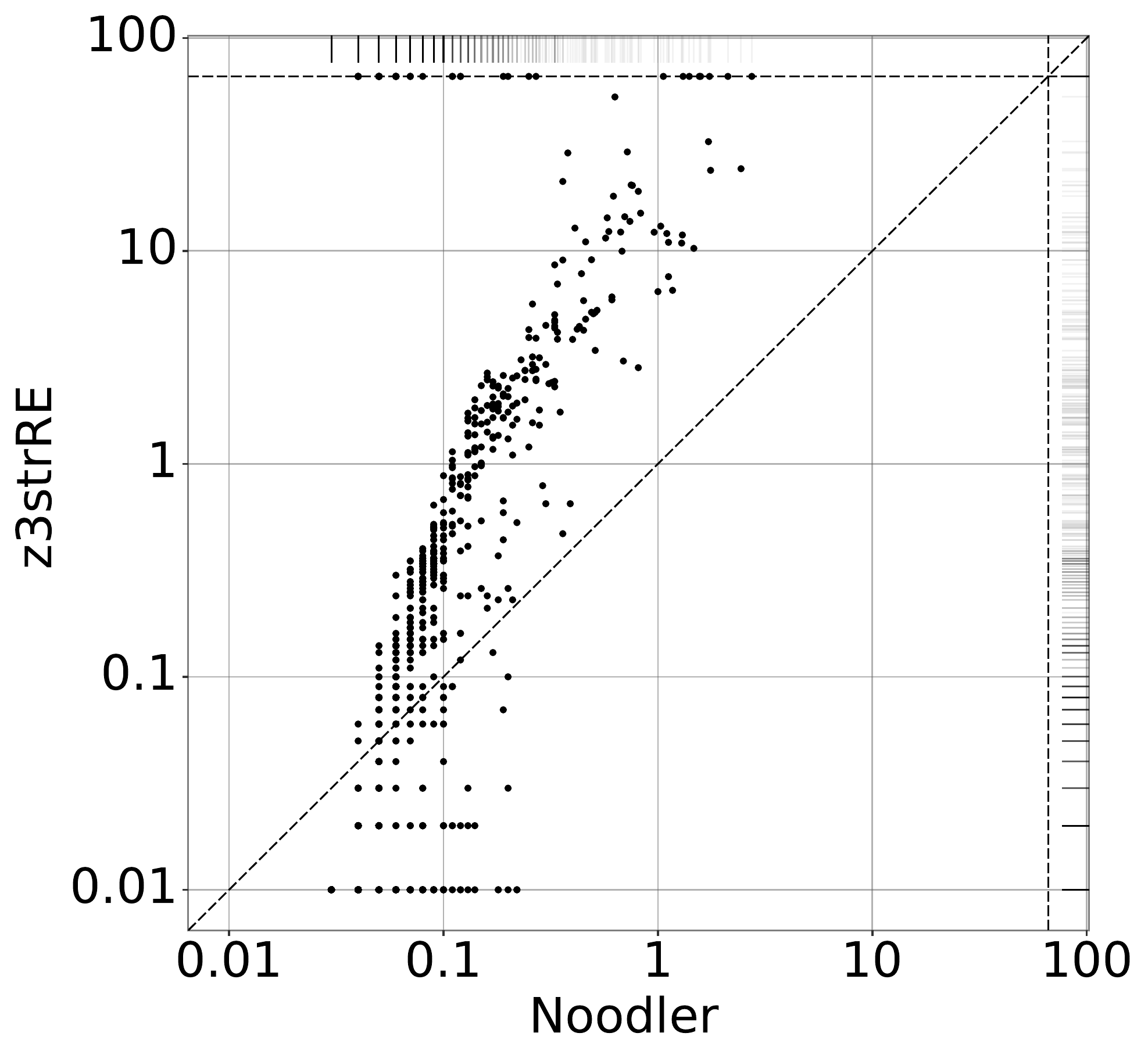}
    \includegraphics[width=0.32\textwidth]{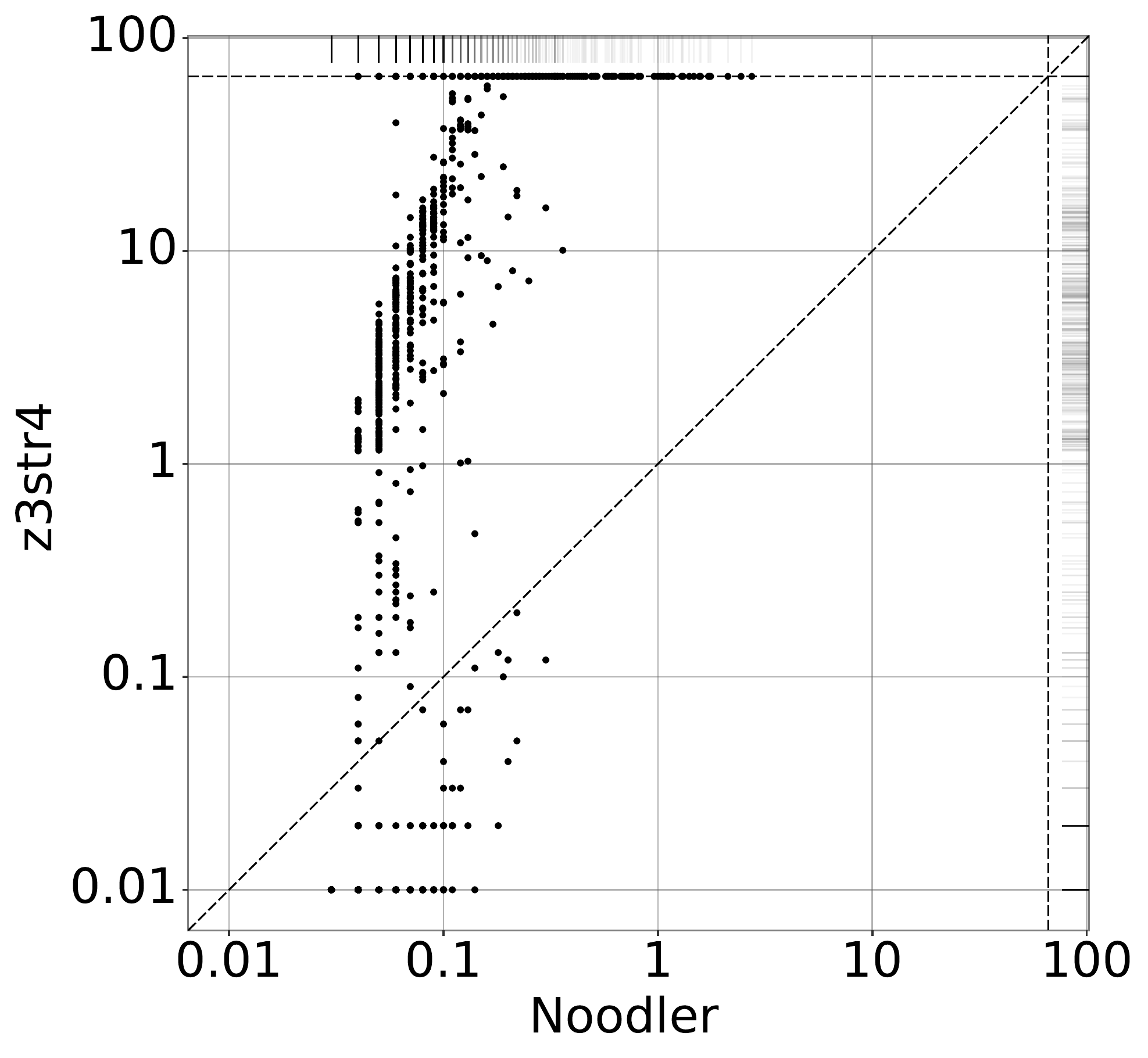}

    \includegraphics[width=0.32\textwidth]{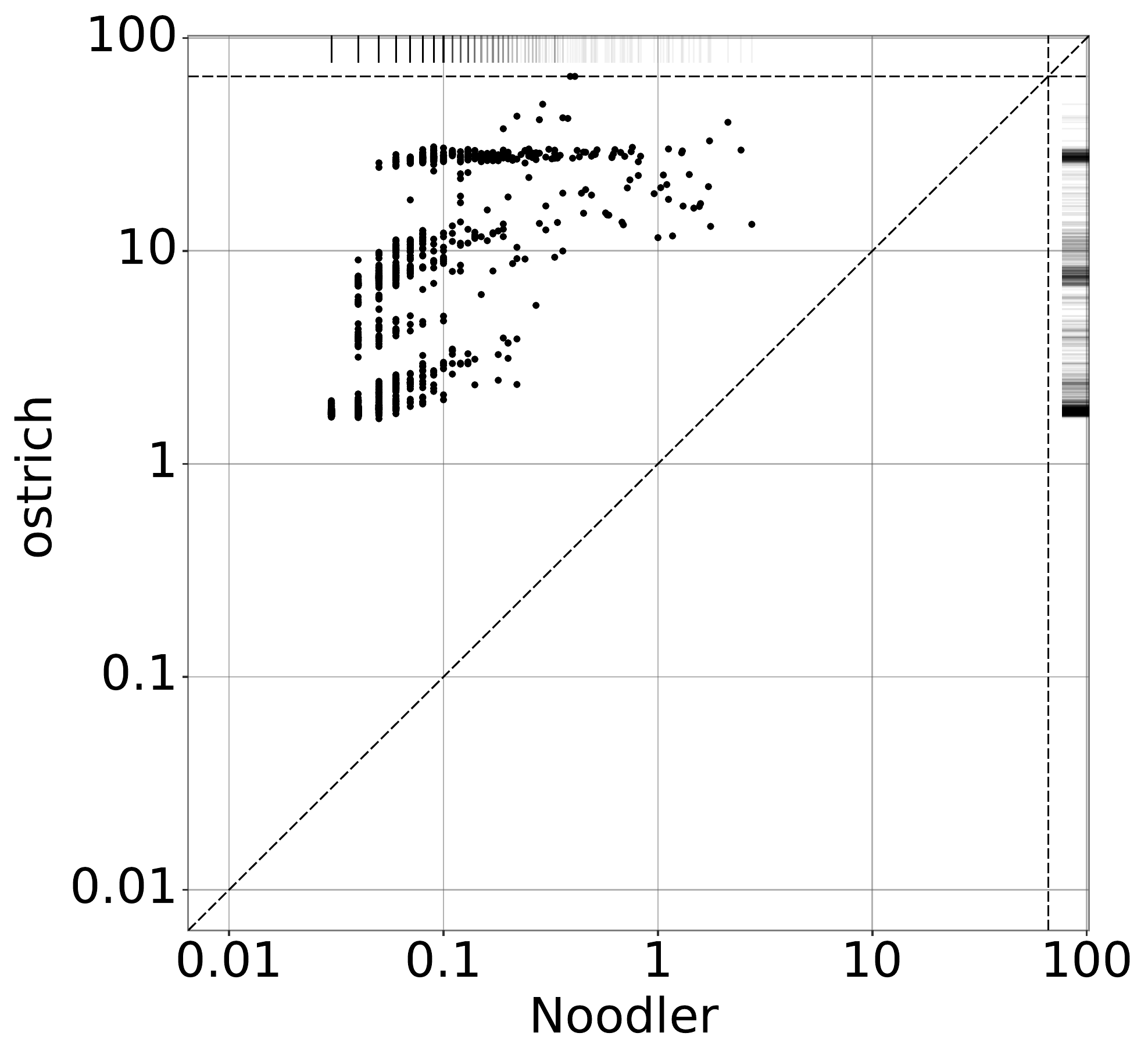}
    \includegraphics[width=0.32\textwidth]{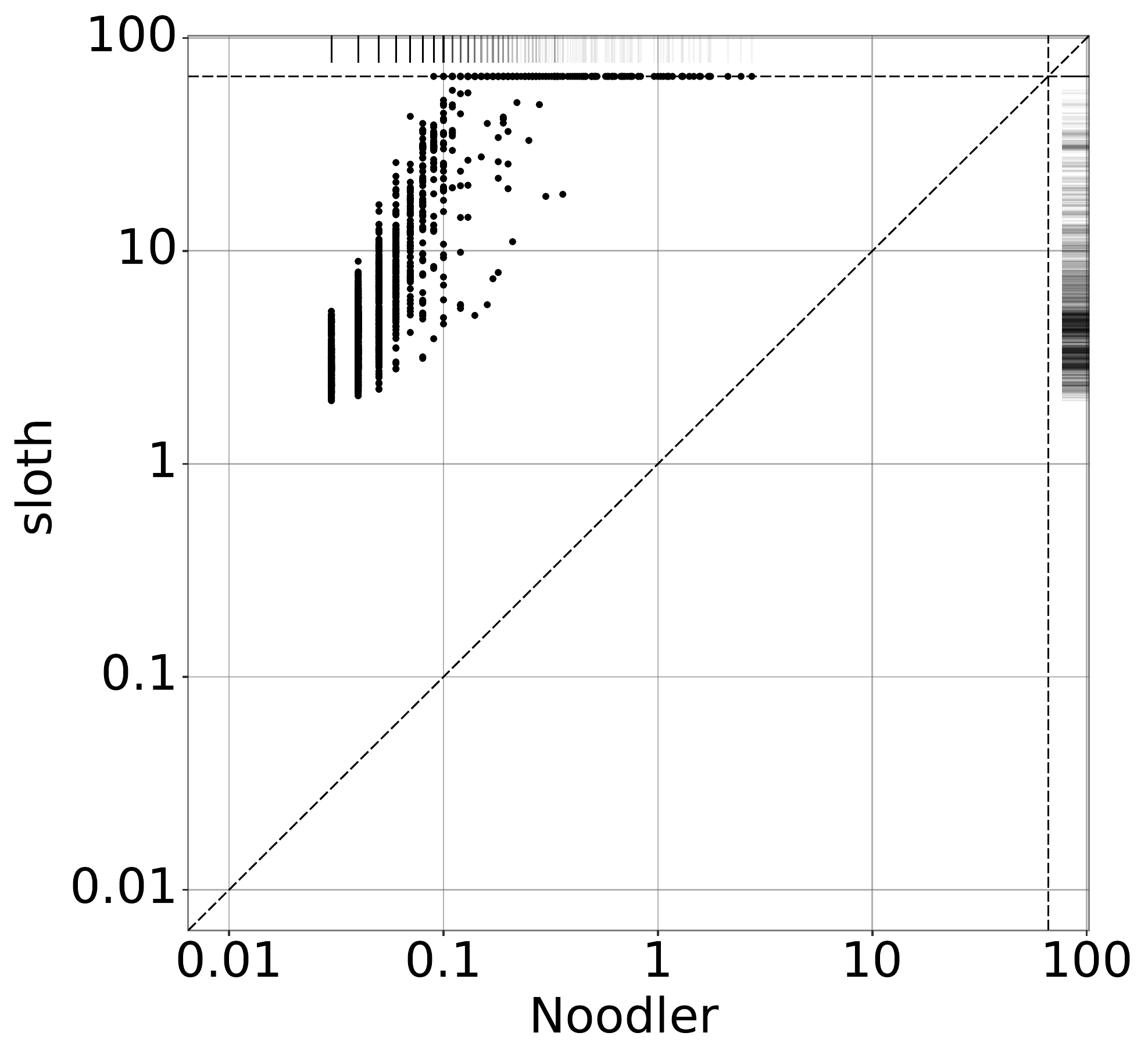}

    \caption{Comparison of run times of \noodler with all other tools (in seconds) on \slog. 
    Dashed lines represent timeouts.}
    \label{fig:slog-all}
  \end{figure}

\end{document}